\def\basegraph{\tikz[baseline=.1ex]{
\fill (0,2ex) circle (1pt) coordinate (A);
\fill (3ex,4ex) circle (1pt) coordinate (B);
\fill (3ex,0ex) circle (1pt) coordinate (C);
\fill (6ex,2ex) circle (1pt) coordinate (D);
\fill (9ex,0) circle (1pt) coordinate (E);
\fill (9ex,4ex) circle (1pt) coordinate (F);
\fill (12ex,2ex) circle (1pt) coordinate (G);
\draw (A)--(B);
\draw (A)--(C);
\draw (D)--(B);
\draw (D)--(C);
\draw (D)--(E);
\draw (D)--(F);
\draw (G)--(E);
\draw (G)--(F);
}}
\def\basegraphtwo{\tikz[baseline=.1ex]{
\fill (0,2ex) circle (1pt) coordinate (A);
\fill (3ex,4ex) circle (1pt) coordinate (B);
\fill (3ex,0ex) circle (1pt) coordinate (C);
\fill (6ex,2ex) circle (1pt) coordinate (D);
\fill (12ex,0) circle (1pt) coordinate (E);
\fill (12ex,4ex) circle (1pt) coordinate (F);
\fill (15ex,2ex) circle (1pt) coordinate (G);
\fill (9ex,2ex) circle (1pt) coordinate (H);
\draw (A)--(B);
\draw (A)--(C);
\draw (D)--(B);
\draw (D)--(C);
\draw (H)--(E);
\draw (H)--(F);
\draw (G)--(E);
\draw (G)--(F);
\draw (D)--(H);
}}
\def\lsinglegraph{\tikz[baseline=.1ex]{
\fill (0,2ex) circle (1pt) coordinate (A) node [left]{\small{$3^0_i$}};
\fill (3ex,4ex) circle (1pt) coordinate (B) node [above]{\small{$1^0_i$}};
\fill (3ex,0ex) circle (1pt) coordinate (C) node [below]{\small{$2^0_i$}};
\fill (6ex,2ex) circle (1pt) coordinate (D) node [right]{\small{$4^0_i$}};
\draw (A)--(B);
\draw (A)--(C);
\draw (D)--(B);
\draw (D)--(C);
}}
\def\rsinglegraph{\tikz[baseline=.1ex]{
\fill (0,2ex) circle (1pt) coordinate (A) node [left]{\small{$3^1_i$}};
\fill (3ex,4ex) circle (1pt) coordinate (B) node [above]{\small{$1^1_i$}};
\fill (3ex,0ex) circle (1pt) coordinate (C) node [below]{\small{$2^1_i$}};
\fill (6ex,2ex) circle (1pt) coordinate (D) node [right]{\small{$4^1_i$}};
\draw (A)--(B);
\draw (A)--(C);
\draw (D)--(B);
\draw (D)--(C);
}}
\def\xsinglegraph{\tikz[baseline=.1ex]{
\fill (0,2ex) circle (1pt) coordinate (A) node [left]{\small{$3^{x_i}_i$}};
\fill (3ex,4ex) circle (1pt) coordinate (B) node [above]{\small{$1^{x_i}_i$}};
\fill (3ex,0ex) circle (1pt) coordinate (C) node [below]{\small{$2^{x_i}_i$}};
\fill (6ex,2ex) circle (1pt) coordinate (D) node [right]{\small{$4^{x_i}_i$}};
\draw (A)--(B);
\draw (A)--(C);
\draw (D)--(B);
\draw (D)--(C);
}}
\def\xxsinglegraph{\tikz[baseline=.1ex]{
\fill (0,2ex) circle (1pt) coordinate (A) node [left]{\small{$3'^x_i$}};
\fill (3ex,4ex) circle (1pt) coordinate (B) node [above]{\small{$1'^x_i$}};
\fill (3ex,0ex) circle (1pt) coordinate (C) node [below]{\small{$2'^x_i$}};
\fill (6ex,2ex) circle (1pt) coordinate (D) node [right]{\small{$4'^x_i$}};
\draw (A)--(B);
\draw (A)--(C);
\draw (D)--(B);
\draw (D)--(C);
}}
\def\ysinglegraph{\tikz[baseline=.1ex]{
\fill (0,2ex) circle (1pt) coordinate (A) node [left]{\small{$3^{y_i}_i$}};
\fill (3ex,4ex) circle (1pt) coordinate (B) node [above]{\small{$1^{y_i}_i$}};
\fill (3ex,0ex) circle (1pt) coordinate (C) node [below]{\small{$2^{y_i}_i$}};
\fill (6ex,2ex) circle (1pt) coordinate (D) node [right]{\small{$4^{y_i}_i$}};
\draw (A)--(B);
\draw (A)--(C);
\draw (D)--(B);
\draw (D)--(C);
}}
\def\yysinglegraph{\tikz[baseline=.1ex]{
\fill (0,2ex) circle (1pt) coordinate (A) node [left]{\small{$3'^y_i$}};
\fill (3ex,4ex) circle (1pt) coordinate (B) node [above]{\small{$1'^y_i$}};
\fill (3ex,0ex) circle (1pt) coordinate (C) node [below]{\small{$2'^y_i$}};
\fill (6ex,2ex) circle (1pt) coordinate (D) node [right]{\small{$4'^y_i$}};
\draw (A)--(B);
\draw (A)--(C);
\draw (D)--(B);
\draw (D)--(C);
}}
\def\primesinglegraph{\tikz[baseline=.1ex]{
\fill (0,2ex) circle (1pt) coordinate (A) node [left]{\small{$3''_i$}};
\fill (3ex,4ex) circle (1pt) coordinate (B) node [above]{\small{$1''_i$}};
\fill (3ex,0ex) circle (1pt) coordinate (C) node [below]{\small{$2''_i$}};
\fill (6ex,2ex) circle (1pt) coordinate (D) node [right]{\small{$4''_i$}};
\draw (A)--(B);
\draw (A)--(C);
\draw (D)--(B);
\draw (D)--(C);
}}
\theoremstyle{definition}
\newtheorem{definition}{Definition}
\newtheorem*{definition*}{Definition}
\newtheorem{theorem}{Theorem}
\newtheorem{lemma}{Lemma}
\newtheorem{claim}{Claim}
\newtheorem{proposition}{Proposition}
\newtheorem*{lemma*}{Lemma}
\newtheorem*{theorem*}{Theorem}
\newtheorem*{proposition*}{Proposition}
\theoremstyle{remark}
\newtheorem*{remark}{Remark}
\title{
Computational complexity of the homology problem with orientable filtration:
 MA-completeness}
\author[1]{Ryu Hayakawa
\footnote{ryu.hayakawa@yukawa.kyoto-u.ac.jp}
}
\author[2]{Casper Gyurik}
\author[2]{Mahtab Yaghubi Rad}
\author[2]{Vedran Dunjko}
\affil[1]{Yukawa Institute for Theoretical Physics \& The Hakubi Center, Kyoto University, Japan}
\affil[2]{applied Quantum algorithms (aQa), Leiden University, 2300 RA Leiden, The Netherlands}
\date{}
\begin{document}

\noindent
\hspace{\fill} YITP-25-139
\begingroup
\let\newpage\relax
\maketitle
\endgroup


\begin{abstract}
We show the existence of an MA-complete homology problem for a certain subclass of simplicial complexes. The problem is defined through a new concept of orientability of simplicial complexes that we call a ``uniform orientable filtration'', which is related to sign-problem freeness in homology. The containment in MA is achieved through the design of new, higher-order random walks on simplicial complexes associated with the filtration. For the MA-hardness, we design a new gadget with which we can reduce from an MA-hard stoquastic satisfiability problem. Therefore, our result provides the first natural MA-complete problem for higher-order random walks on simplicial complexes, combining the concepts of topology, persistent homology, and quantum computing.
\end{abstract}

\vspace{0.5cm}
\section{Introduction}

Higher-order random walks are higher-order generalizations of random walks on graphs to higher-order objects such as simplicial complexes. 
Recently, higher-order random walks have been actively studied in the field of {\it high-dimensional expansion}, which is related to the mixing time of the higher-order random walks~\cite{kaufman2017high,alev2020improved,kaufman2020high,anari2021spectral,chen2021optimal,chen2022localization}. 
The basic philosophy of high-dimensional expansion is the so-called {\it local-to-global} expansion, in which certain expansion properties of the local structure of a higher-order object will be utilized to derive a global and high-dimensional expansion property. 
Higher-order random walks have also attracted attention from the perspective of the topology hidden in simplicial complexes~\cite{mukherjee2016random,parzanchevski2017simplicial,schaub2020random}. Motivation behind these lines of study is the application of higher-order random walks to problems in {\it topological data analysis} (TDA).

However, the usage of higher-order random walks for the analysis of topology may be limited due to the NP-hardness of the homology problem~\cite{adamaszek2016complexity}. Moreover, it has been recently discovered that the homology problem is indeed QMA$_1$-hard~\cite{crichigno2024clique}.
Here, the homology problem is a problem to decide if a given simplicial complex (more specifically, a clique complex described by a graph) has a $d$-dimensional hole or not.
This provided solid indications of the quantum-mechanical nature of the underlying problem. 
Subsequent work has shown that a promise variant of the homology problem is contained in QMA while still being QMA$_1$ hard under the same promise~\cite{king:qma}. 
In quantum topological data analysis (QTDA), both algorithmic \cite{lloyd2016quantum,hayakawa2022quantum,mcardle2022streamlined,black2023incremental,leditto2025topological,akhalwaya2024comparing,hayakawa2024quantum} and complexity theoretic aspects \cite{gyurik2022towards,crichigno2024clique,berry2024analyzing,king:qma,rayudu2024fermionic,gyurik2024quantum,rudolph2024towards}, as well as extension to thermal states~\cite{scali2024topology}, extension to Khovanov homology \cite{schmidhuber2025quantum}, and dequantization algorithms~\cite{apers2023simple} have been actively studied. 

In this work, we identify a natural subclass of the homology problem that falls in the complexity class MA. Here, MA is a complexity class where the prover sends a classical witness to the verifier, and the verifier performs an efficient randomized verification procedure. 
Therefore, it naturally fits the framework of higher-order random walks. 

A known canonical MA-complete problem is the {stoquastic satisfiability problem}~\cite{bravyi2010complexity}. It is also extended to several other problems or other settings~\cite{aharonov2020two, raza2024complexity}. 
The stoquastic property characterizes a particular class of Hamiltonians with which the {\it sign-problem} in quantum Monte-Carlo does not occur. 
Inspired by the stoquastic satisfiability problem, we investigate a class of simplicial complexes within which we can escape {\it a sign-problem in the homology problem}. 
Indeed, it can be said that previous works on higher-random walks suffer from a certain type of sign-problem that are deeply related to the orientation of simplices: previous works~\cite{mukherjee2016random,parzanchevski2017simplicial,schaub2020random} perform random walks on simplices with both positive and negative orientations by treating them as distinct states\footnote{We say two oriented simplices $\sigma=v_0v_1...v_d$ and $\sigma'=v_0'v_1'...v_d'$ have opposite (positive or negative) orientations if they are composed of the same vertices and match with an odd permutation.}. 
Roughly, the doubling of state space and the necessary post-processing cause inefficiency. 
We remark that, as there are no natural orientations for the vertices in graphs, the sign-problem in the homology problem appeared {\it as a consequence of a higher-order generalization} of random walks. 

{\color{blue}

}

\subsection{Our contributions}

The main contribution of this paper is the establishment of an MA-completeness for a class of the homology problem (Theorem~\ref{thm:main}).
Conceptually, the problem is formulated through a new notion of orientability of simplices that we call a \textbf{uniform orientable filtration}. 
With a uniform orientable filtration, we establish a particular way of escaping the sign-problem, which leads to the existence of an efficient MA-verification protocol based on higher-order random walks while keeping it as hard as possible. 
The MA-hardness is established through a new gadget construction that allows a reduction from a stoquastic satisfiability problem to a homology problem on simplicial complexes that allows a uniform orientable filtration, as well as other conditions required in the problem. 
Therefore, our result extends the class of simplicial complexes for which classical random walks efficiently reveal the topology of simplicial complexes without suffering from the sign-problem. 
{Although the MA-completeness result does not immediately provide a practical application, we believe that our result opens a new possibility of performing high-dimensional TDA with random walks by finding a suitable filtration of the data in terms of orientability.}
Moreover, our result provides the first MA-complete problem for a natural problem in simplicial complexes relevant for higher-order random walks and high-dimensional expansion.



\subsection{Overview of problem definition and proofs}

\paragraph{Uniform orientable filtration and problem definition}

Before introducing a uniform orientable filtration, we recall the usual notion of orientability. 
An (abstract) simplicial complex is a family of sets of vertices that is closed under taking subsets. 
{Any $d$-dimensional simplex (that is composed of $d+1$-number of vertices) with $d\geq 1$ can have two orientations. We regard two oriented simplices (ordered vertices) with the same vertices as having the same/opposite orientations if they match under even/odd permutations.}   
A simplicial complex is said to be $d$-orientable if there is a choice of orientations s.t. whenever two $d$-simplices intersect (i.e., share a common $d-1$-dimensional face), they induce opposite orientations on it. 
For a simplicial complex with such orientations, the combinatorial Laplacian indeed becomes stoquastic, which is a starting point for a complexity theoretic analysis.
This is desirable for the containment in MA because a stoquastic Hamiltonian leads to a design of a Markov transition matrix. However, it is limited in terms of hardness: a simplicial complex can only be fully orientable if $d-1$-simplices are degree-2, which seems to be too simple to be MA-hard\footnote{In fact, the currently known MA-hard stoquastic Hamiltonian itself is degree-4 in the lowest case~\cite{raza2024complexity}.}.
This motivates us to look for a broader class of simplicial complexes while maintaining containment in MA. 
While achieving stoquasticity would be a clear way of avoiding the sign problem and achieving containment in MA, we solve the problem by generalizing the class in a way that is natural from the perspective of homology. 
However, as a consequence of such generalization, the combinatorial Laplacian will no longer be stoquastic. 
Nonetheless, we show that it suffices by relying on the fixed-node Hamiltonian construction to effectively obtain a stoquastic operator and by constructing a Markov transition matrix from the fixed-node Hamiltonian.


The new notion of orientability that we call a {\it orientable filtration}, concerns both {\it upward-orientability} and {\it downward-orientability}. 
The usual notion of $d$-orientability that we discussed above concerns the adjacency through $d-1$-dimensional simplices. Therefore, from our point of view, the usual orientability concerns downward-orientability. 
Unlike the vertices in graphs, which are connected only by edges (cofaces for vertices), simplices can be connected through both faces and cofaces. 
Here, for a $d$-simplex $\sigma$, $d+1$-simplex that contains $\sigma$ as a face is called coface of $\sigma$. 
Therefore, we would like to also consider orientability through upward-adjacency (upward-orientability). 
Consider a sequence of simplicial complexes
$$
X^0 \subseteq X^1 \subseteq \cdots \subseteq X^N. 
$$
Such a sequence of sub-complexes is called a {\it filtration} of $X^N$. 
Note that $X^N\subseteq X$ need not be equal to the full simplicial complex. 
Then, the orientable filtration requires $d$-simplices that are added at the $i$-th step of the filtration 
(we denote these added simplices by $\tilde X^i_d$), to be orientable, i.e., whenever two simplices in $\tilde X^i_d$ share a face, they induce opposite orientations on it. This is an orientability that cares about downward relationships inside $\tilde X^i_d$. 
This is analogous to the usual notion of orientability for $\tilde X^i_d$. 
Another orientability that we impose is an upward orientability among the simplices in different stages of the filtration: we require that whenever simplices in $\tilde X^i_d$ and $\tilde X^j_d$ share a coface, they induce opposite orientations on it. An example can be seen in Figure~\ref{fig:orientable_filtration}. 
Note that the above explanation is rather informal and does not completely capture all the situations that we need to care about. 
We also impose some uniformity on the orientable filtration in the formal definition, and the orientability will be called a uniform orientable filtration. 
\\

Then, the problem is to decide whether $X$ has a $d$-dimensional hole or not under the following conditions and promises: \vspace{-0.2cm}
\begin{itemize}
    \item Input: An oriented clique complex $X$ on $n$ vertices, description of a uniform orientable filtration\\ $X^0 \subseteq X^1 \subseteq \cdots \subseteq X^N$, target dimension $d$, and $\epsilon >1/poly(n)$. \vspace{-0.2cm}
    \item YES instances: $H_d(X)$ is non-trivial.  It is promised that there is a homologous cycle $\ket{c}$ in $X_d^0$ and the corresponding harmonic state $\ket{\phi}=\ket{c}+\ket{b}$ where $\ket{b}\in \mathrm{Im}\ \partial_{d+1}$ is supported on $X_d^N$. \vspace{-0.2cm}
    \item NO instances:  $H_d(X)$ is trivial. It is promised that $\lambda(\Delta_d)\geq \epsilon$, where $\lambda(\Delta_d)$ is the minimal eigenvalue of the combinatorial Laplacian $\Delta_d$. 
\end{itemize}

The formal definition of the problem can be seen in Definition~\ref{def:problem}. 
In the ``description of a uniform orientable filtration'' we assume that we are given a description of a classical circuit that returns the index in the filtration and the orientation by inputting a $d$-simplex that satisfies the conditions in the orientable filtration. 
We also assume that the orientation of simplices is given as an input of the problem. 
It is essentially a {\it persistence} problem because in YES instances, a hole in $X_d^0$ persists as a hole across the uniform orientable filtration. 

Our result can be informally stated as follows.
\begin{theorem}[Main (Informal statement of Theorem~\ref{thm:main})]
    The homology problem for simplicial complexes with orientable filtration is MA-complete. 
\end{theorem}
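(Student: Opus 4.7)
The plan is to establish both directions of the MA-completeness claim separately: containment in MA via a classical random-walk verification protocol built from a fixed-node Hamiltonian, and MA-hardness via reduction from the stoquastic satisfiability problem using a specialized simplicial gadget. For the MA containment, I would take the witness to be a classical description of a cycle $\ket{c}$ supported on $X^0_d$ whose sign pattern matches that induced by the filtration orientation. Since the combinatorial Laplacian $\Delta_d = \partial_d^\dagger \partial_d + \partial_{d+1} \partial_{d+1}^\dagger$ is not stoquastic under the weaker notion of uniform orientable filtration, I would apply the fixed-node construction of Bravyi and Terhal: using the filtration-induced orientations as a guiding sign structure $s:X^N_d\to\{\pm 1\}$, define $H_{\mathrm{FN}}$ by zeroing out the off-diagonal entries of $\Delta_d$ that violate $s$ and shifting them onto the diagonal. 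The key point is that uniform orientability of the filtration guarantees that the sign-consistent portion of $\Delta_d$ already contains the action needed to reproduce the harmonic state $\ket{\phi}=\ket{c}+\ket{b}$, so $H_{\mathrm{FN}}$ annihilates $\ket{\phi}$ in the YES case while still satisfying $\lambda(H_{\mathrm{FN}})\geq \lambda(\Delta_d)\geq \epsilon$ in the NO case.

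From $H_{\mathrm{FN}}$ I would construct a substochastic transition matrix $P = I - H_{\mathrm{FN}}/M$ for an efficiently computable normalization $M=\mathrm{poly}(n)$, and have the verifier simulate the random walk starting from a simplex sampled with probability $|\langle\sigma|c\rangle|^2$. Each step of the walk requires only local queries to the filtration circuit to list the cofaces and faces of the current simplex and read off their orientations, so it runs in polynomial time. The verifier estimates the survival probability after $T=\mathrm{poly}(n,1/\epsilon)$ steps: in the YES case it is bounded below by the squared overlap of $\ket{c}$ with the kernel of $H_{\mathrm{FN}}$, which is at least $1/\mathrm{poly}(n)$; in the NO case the spectral gap $\epsilon$ forces exponential decay. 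A standard Chernoff argument then converts the resulting inverse-polynomial gap into the MA acceptance/rejection conditions.

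For MA-hardness, I would reduce from the bounded-degree stoquastic $k$-SAT problem, which is MA-complete. The reduction attaches to each Boolean variable a simplicial variable-gadget built from small oriented clique-complex building blocks, with two sub-gadgets corresponding to the truth values $x_i=0$ and $x_i=1$, and then joins these into clause-gadgets by gluing along designated shared faces. The filtration is defined so that the $i$-th stage $\tilde X^i_d$ is precisely the collection of $d$-simplices introduced by the gadget for the $i$-th variable (or clause), with orientations chosen (i) internally to make each $\tilde X^i_d$ orientable and (ii) across stages so that shared cofaces of simplices in $\tilde X^i_d$ and $\tilde X^j_d$ induce opposite orientations. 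The restricted combinatorial Laplacian on $d$-simplices then, up to polynomial scaling and constant shifts, reproduces the stoquastic clause Hamiltonian, so satisfying assignments correspond to non-trivial harmonic representatives with support in $X^0_d$ and frustrated instances correspond to Laplacians with spectral gap $1/\mathrm{poly}(n)$.

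The main obstacle, and the principal technical contribution of the construction, is designing this gadget so that both requirements coexist: encoding an arbitrary bounded-degree stoquastic clause Hamiltonian typically forces many $d$-simplices to share faces and cofaces in conflicting ways, and each such sharing imposes an orientation constraint that must nevertheless remain consistent with a uniform orientable filtration. Resolving this tension is what drives the particular choice of basic building blocks and the filtration ordering. Once the gadget is shown to simultaneously realize the target clause Hamiltonian and to admit a uniform orientable filtration, the remaining steps---bookkeeping the promise gap, applying standard perturbation arguments to transfer satisfiability to non-trivial homology, and verifying polynomial instance size---are comparatively routine.
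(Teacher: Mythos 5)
Your high-level architecture matches the paper's (fixed-node modification of the Laplacian plus a classical random walk for containment; reduction from stoquastic SAT via oriented gadgets for hardness), but there are concrete gaps on both sides. For containment: a classical description of the cycle $\ket{c}$ is not a valid witness, since in the hard instances $\ket{c}$ is a uniform superposition over exponentially many $d$-simplices. The point of the uniform orientable filtration is that the guiding state is \emph{locally computable}: its relative amplitudes on adjacent simplices are fixed by the relative degrees $f^{i,j}$ and the weights, so the prover need only send a single starting simplex $\sigma_0$. Relatedly, a sign structure $s:X_d^N\to\{\pm1\}$ does not suffice as the fixed-node reference: the diagonal correction terms $\sum_{z}\bra{x}H\ket{z}\braket{z|\psi}/\braket{x|\psi}$ and the conjugation needed to make $P$ row-stochastic both require the actual amplitude ratios, which are non-uniform here. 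Most importantly, you never say what the guiding state is in the NO case, where no harmonic exists; the verifier cannot know which case it is in, so the fixed-node state must be defined unconditionally. The paper resolves this by building the guiding state from connected components of ``good'' simplices (a local cycle/cocycle condition relative to the filtration), and the rejection criterion is hitting a ``bad'' simplex or having the product of amplitude ratios exceed $1$ --- not an undefined ``survival probability'' of $I-H_{\mathrm{FN}}/M$, which is not substochastic in the relevant basis without the similarity transform.

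For hardness, your proposal asserts the conclusion rather than constructing it: ``the restricted Laplacian reproduces the clause Hamiltonian'' is exactly the step that requires the machinery. The paper encodes $n$ qubits into the $(2n-1)$-homology of an $n$-fold join of a two-hole base graph (via the K\"unneth formula), fills holes with coned-off copies for projectors onto $\ket{x}$, and glues pairs of holes through a thickened ``wormhole'' for projectors onto $\ket{x}-\ket{y}$ so that $\ket{x}+\ket{y}$ survives as a hole while $\ket{x}-\ket{y}$ becomes a boundary. Two issues you do not anticipate then arise and must be handled: the gluing spawns spurious new homology classes (tori and extra $1$-cycles in each block), which must be killed with additional axial simplices without re-introducing support of the intended harmonics on them; and the NO-case spectral gap is not ``routine bookkeeping'' but a spectral-sequence analysis over the weight filtration with $\lambda\ll 1$. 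Without these elements the reduction does not establish either the NO-instance promise ($\lambda(\Delta_d)\ge\epsilon$ with \emph{no} holes at all) or the YES-instance promise that the harmonic is supported on $X_d^N$.
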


\paragraph{Containment in MA}
The containment in MA is achieved through the design of a stoquastic Hermitian operator on the chain space spanned by simplices of the target dimension. 
Although the combinatorial Laplacian for the given simplicial complexes may not be fully stoquastic, we can turn it into a stoquastic Hamiltonian with some modifications. 
We modify the Laplacian with a construction known as the fixed-node Hamiltonian construction~\cite{ten1995proof,bravyi2023rapidly}. 
Nice properties of the fixed-node construction are that they maintain the ground state and spectral gap of the original Hamiltonian (See Lemma~\ref{lemma:fixed_node_Hamiltonian}). 
In the modification of the Laplacian, a non-negative state is utilized as a ``fixed-node''. The state is indeed a (non-negative) harmonic state in the kernel of the combinatorial Laplacian in YES instances. Here, non-negative states refer to the states with non-negative amplitudes in the computational basis. 
In NO instances, even though there are no harmonics, we utilize a state which is not energetically penalized in some local sense, i.e., a state that is composed of connected components of ``locally-good'' simplices. There are two types of ``badness'' associated with the orientable filtration. One is ``non-cocycleness'' as described in Figure~\ref{fig:eg_good_bad} (b). The other type of badness is ``non-cycleness'' as described in Figure~\ref{fig:eg_good_bad} (c). 
\begin{figure}
    \centering
    \includegraphics[width=0.9\linewidth]{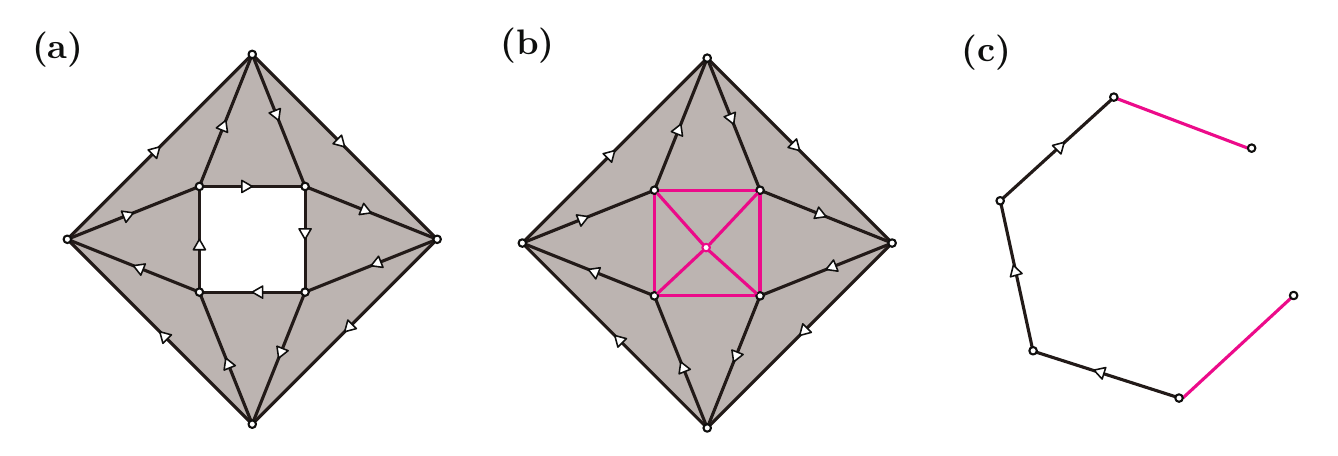}
    \caption{Example of states that are utilized in the fixed-node construction. In this figure, we are assuming that the simplices are filtrated in a cyclic way as in Figure~\ref{fig:orientable_filtration}. (a) A simplicial complex with a 1-dimensional hole. (b)  Center triangles fill the cycle. Therefore, the red edges possess a badness of ``non-cocycleness'' (c) There is no cycle. The red edges possess a badness of ``non-cycleness''. }
    \label{fig:eg_good_bad}
\end{figure}
The fixed-node state consists of the connected components of good simplices (edges colored in black in Figure~\ref{fig:eg_good_bad}). 
Note that the concept of goodness/badness is valid in the existence of an orientable filtration. 
Then, the verifier's strategy is to find bad simplices with random walks on simplicial complexes associated with the modified Laplacian with a fixed-node, starting from an initial simplex provided by the prover. 
We can show that the transition probabilities for adjacent simplices can be determined efficiently only by looking at local neighbors of the simplex of the current position. (Two adjacent simplices only differ by one vertex.) 
In YES instances, the random walk is performed in the support of a harmonic, and bad simplices never appear. 
In NO instances, we show that the verifier can reject with high probability.

\paragraph{MA-hardness}
The MA-hardness is shown through a reduction from a certain MA-hard stoquastic SAT problem. 
This stoquastic SAT problem that we use consists of projectors as $H=\sum_i h_i$ onto either computational basis states or the difference of two computational basis states. 
Therefore, a $m$-qubit projector $h_i$ can be written as either $h_i=\ket{x}\bra{x}$ or  $h_i=\frac{1}{2}(\ket{x}-\ket{y})(\bra{x}-\bra{y})$ for some $x,y \in \{0,1\}^m$, see  Lemma~\ref{lemma:setofprojectors}. 

In order to map an instance of stoquastic satisfiability problems, we identify the correct Hilbert space in the simplices. 
The $n$-qubit Hilbert space will be encoded into the harmonic space of $X_{2n-1}^0$ whose dimension is $2^n$. 
In order to construct such a clique complex, we utilize an $n$-fold join product of graphs with two 1-dimensional holes:$\basegraphtwo$. 
Then, $X^0$ is the clique complex of the graph. 
All the vertices in $X^0$ are weighted by $1$. 
The base graph that we use is different from that used in the previous work of \cite{king:qma}, which was  $\basegraph$. The reason that we use a different graph is that the graph used in~\cite{king:qma} is not degree-2 and thus not suitable in terms of the orientability of $X_{2n-1}^0$\footnote{Our graph is not degree-2 as well. However, there is an appropriate way to make it degree-2 within the filtration.}. The target dimension is $2n-1$ because we are considering holes that are composed of $n$-fold products of edges. 
Then the strategy to reduce from stoquastic SAT is to fill the holes that correspond to states penalized by local terms of the Hamiltonian. 
This allows us to ensure that only the encoding of a state $\ket{\phi}$ s.t. $H\ket{\phi}=0$ into chain space remains a valid ``hole'' after the gadget construction corresponds to local projectors. 

For projectors $h_i=\ket{x}\bra{x}$, we use the same gadget construction with~\cite{king:qma}: copy the target hole and fill it by $2n$-dimensional simplices with an axial vertex so that the copied hole appears as a boundary of the added $2n$-dimensional simplices. 
For projectors $h_i=\frac{1}{2}(\ket{x}-\ket{y})(\bra{x}-\bra{y})$, we take a strategy to ``glue the holes'' corresponding to $\ket{x}$ and $\ket{y}$. 
The ``wormhole'' that connects the holes of $\ket{x}$ and $\ket{y}$, makes $\ket{x}-\ket{y}$ as boundary while $\ket{x}+\ket{y}$ remains to be a hole. Intuitively, it can be described as Figure~\ref{fig:glue_xy}. 
Of course, we need to be careful about the orientations of the cycles when we glue holes because $\ket{y}$ and $-\ket{y}$ are represented by cycles with opposite orientations. 
\begin{figure}[t]

    \centering
    \includegraphics[width=0.3\linewidth]{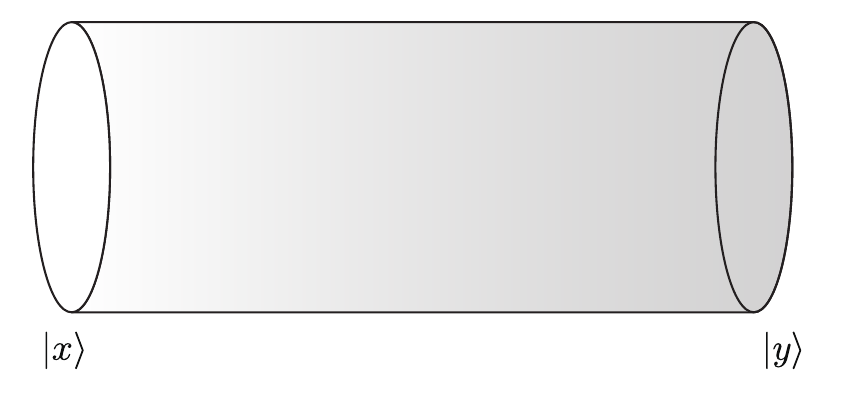}
    \caption{Intuitive gadget construction for $h_i=\frac{1}{2}(\ket{x}-\ket{y})(\bra{x}-\bra{y})$.}
    \label{fig:glue_xy}
\end{figure}
This construction is suitable in terms of orientable filtration because we can filtrate the wormhole by slicing it step by step, and we can systematically maintain orientations. 
Although the vertices in $X^0$ are weighted by $1$, we impose a weight $\lambda \ll 1$ for all the vertices that are added for the gadget construction. 

A technical concern arises here: a new homology class can be born in the procedure of gluing two holes. 
This occurs under two circumstances. The intuition can be obtained by looking at the ``block'' corresponds to the $i$-th qubit.
\begin{itemize}
    \item For the block where $x_i=y_i$, we will end up in creating a {\it torus} for that part.
    \item For the block s.t. $x_i\neq y_i$, the edge that connects two rectangles $\basegraphtwo$ leads to the generation of a new 1-dimensional cycle. 
\end{itemize}
For a more detailed illustration, see Figure~\ref{fig:newhole}. 
The holes in such a local block will lead to the generation of new $2m-1$-dimensional holes. 
Therefore, our attempts to make the simplicial complex orientable with filtration cause to make new holes in the gadget construction.  

Because the newly generated holes are not the ones that persist from $X_{2n-1}^0$, there might be no problem in terms of the persistence of the initial holes. However, this situation does not match the conditions in NO instances where there are no holes at all. 
Therefore, in order to prevent new holes to born, we add axially simplices to fill such undesirable holes. 
Importantly, the gadgets are designed so that the harmonics in the gadget complex in YES instances are not supported on such axial simplices. That is, holes in $X_{2n-1}^{0}$ do not spread out to axial simplices in the persistence, which is a required condition in the problem and important for containment in MA.

In order to conclude the proof, we analyze the lower bound of the minimal eigenvalue of the combinatorial Laplacian in NO instances and the structure of homologous holes in $X_{2n-1}^0$ and harmonics in YES instances. 
The analysis of the minimal eigenvalue is performed through the analysis of spectral sequences. 
The analysis of the spectral sequences is done with the filtration associated with the weight of simplices, where we add weight $\lambda \ll 1$ to vertices that are not present in $X^0$ and we consider the weighting of simplices with the product of weights of vertices as is done in~\cite{king:qma}. 
Finally, we show (1) the existence of filtration $X_{2n-1}^0\subseteq \cdots \subseteq X_{2n-1}^N$ where $X_{2n-1}^N\subseteq X_{2n-1}$ that allows a uniform orientable filtration, (2) the existence of a homologous cycle in $X_{2n-1}^0$ in YES instances, and (3) that the harmonics corresponding to that homologous cycle in $X_{2n-1}^0$ is only supported on $X_{2n-1}^N$.

\subsection{Discussion}

In this paper, we have established the MA-completeness of a homology problem with uniform orientable filtration on simplicial complexes. 
This provides a natural complete problem for higher-order random walks. We give several discussions and open problems.

\paragraph{Related works}

In~\cite{jiang2023local}, it is shown that the local Hamiltonian problem with ``succinct ground state'' is MA-complete. 
Here, the succinct ground state refers to an access to a classical circuit that outputs the amplitudes in the ground state by inputting an orthonormal basis state up to some global constant. 
The fixed-node Hamiltonian construction and continuous time version of the Markov chain Monte-Carlo method were utilized. 
It can be said that our analysis shows that the simplicial complexes with an orientable filtration are the class of simplicial complexes that allow a{\it succinct ground state} for combinatorial Laplacians. 
Combined with the result of~\cite{jiang2023local}, the existence of a succinct ground state would directly lead to the containment of the problem in MA.
However, in our case, we do not need to introduce the demanding continuous-time Markov chain Monte-Carlo because we can avoid an exponential blowup of matrix elements in the fixed-node construction. 
Therefore, we have shown the containment in MA through a direct proof with a protocol that is similar to the original verification protocol of \cite{bravyi2010complexity} combined with our fixed-node Laplacian construction.

In~\cite{eidi2023irreducibility}, random walks on simplicial complexes that can be applied to the topology of simplicial complexes are considered. However, their algorithms are limited to down-walks on simplicial complexes with at most degree-2. Therefore, our algorithm, based on orientable filtration, applies to a broader class of simplicial complexes. 

In~\cite{bonis2024random}, random walks on simplicial complexes whose state space is given by {\it cycles}, i.e., a cycle is updated to another cycle in the random walks. 
However, it is not easy to describe general cycles succinctly and keep updating them. In~\cite{bonis2024random}, the random walks of cycles are performed related to the up Laplacian. 
In contrast, our state space is just simplices; those are the natural basis for the chain space, and they are updated according to a Markov transition matrix that is constructed with a fixed-node modification of the full combinatorial Laplacian. 


Next, we highlight some open questions. 

\paragraph{Non-uniform stoquastic projector}
Although we have shown MA-hardness for projectors onto states $\ket{x}$ and $\ket{x}-\ket{y}$, we are not able to construct gadgets that allows uniform orientable filtration for more general {\it non-uniform} stoquastic projectors such as projectors onto $\ket{x}-2\ket{y}$.  
Is it possible to identify a problem that allows reduction from such non-uniform stoquastic projectors and remain in MA? 
If that is the case, we should be able to identify a broader class of homology problems that is MA-complete. 

\paragraph{MA-hard gadget construction with uniform weights}
Our gadget construction for the MA-hardness relies on artificially imposed inverse-polynomially small weight on simplices, which is similar to the case in~\cite{king:qma}. Can one make the weighting to be uniform while keeping the MA-hardness? 

 \paragraph{Derandomization, high-dimensional expansion of higher-order random walks}
 In~\cite{aharonov2019stoquastic}, it has been shown that a constant gap version of the uniform stoquastic SAT problem is contained in NP. 
 Therefore, a hypothetical PCP theorem for stoquastic SAT with gap amplification, i.e., the MA-hardness of the uniform stoquastic SAT problem, would lead to the derandomization of MA.  
 Derandomization of higher-order random walk is also an important open problem~\cite{feng2023towards}. 
This work provided an MA-complete problem for simplicial complexes. 
 Can we utilize advanced techniques in high-dimensional expansion, such as the spectral independence~\cite{anari2021spectral}, to design an improved or derandomized algorithm for simplicial complexes with orientable filtration?

\subsection{Organization}
The remainder of the paper is organized as follows. 
In Section~\ref{sec:preliminaries}, we introduce preliminaries. 
In Section~\ref{sec:problem_main}, we formally introduce our main problem with the concept of the uniform orientable filtration. 
In Section~\ref{sec:containment_MA}, we introduce a classical verification protocol and show a containment in MA. 
In Sections~\ref{sec:hardness_1},~\ref{sec:hardness_2},~\ref{sec:hardness_3} we provide a proof of MA-hardness. 

\section{Preliminaries}
\label{sec:preliminaries}

\subsection{Stoquastic Hamiltonians and non-negative states }
\label{sec:preliminaries:stoq}

We first introduce non-negative states, with respect to the computational basis, as 
\begin{definition}[non-negative state]
    We say that $\ket{\psi}$ is a non-negative state if $\ket{\psi}$ can be written as 
    $$
    \ket{\psi}=\sum_{x\in\{0,1\}^n} a_x \ket{x}
    $$
    and any coefficients are non-negative real numbers $a_x\in \mathbb{R}_+$ up to a global phase. 
\end{definition}

We denote the support of a non-negative state $\ket{\psi}$ by \begin{equation}\label{eq:def:support}\mathrm{Supp}(\ket{\psi}):= \{x\in\{0,1\}^*: \braket{x|\psi}\neq 0\}\end{equation}. 
An $n$-qubit Hamiltonian $H$ is said to be stoquastic in the computational basis if 
$$
\bra{x}H\ket{y} \leq 0
$$
for all $x\neq y \in \{0,1\}^n$. 
It is known that all stoquastic Hamiltonian has non-negative ground states \cite{bravyi2010complexity}. 

Next, we introduce the construction of the ``fixed-node'' Hamiltonian as follows. 

\begin{definition}[Fixed-node Hamiltonian~\cite{ten1995proof,bravyi2023rapidly}]
\label{def:fixed_node_Hamiltonian}
Let $H\in \mathrm{\mathbb{R}^{2^n\times 2^n}}$ be a real Hamiltonian, and $\ket{\psi}$ be a state with real amplitudes with $\ket{\psi}=\sum_{x\in\{0,1\}^n}a_x \ket{x}$ where $a_x \in \mathrm{\mathbb{R}}$ for all $x\in \{0,1\}^n$ and  $\|\ket{\psi}\|\neq 0$. Let also
$$
S^+=\{
(x,y)\ :\ x\neq y \text{ and } \braket{\psi|x}\bra{x}H\ket{y}\braket{y|\psi}>0
\},
$$
$$
S^-=\{
(x,y)\ :\ x\neq y \text{ and } \braket{\psi|x}\bra{x}H\ket{y}\braket{y|\psi}\leq0
\},
$$
for $x,y \in \{0,1\}^n$. Then, the fixed-node Hamiltonian $F^{H,\psi}$ with respect to a real-valued state $\ket{\psi}$ is defined by 
$$
\bra{x}F^{H,\psi}\ket{y}= 
\begin{cases}
    0  &\text{ if } (x,y)\in S^+,\\ 
    \bra{x}H\ket{y}&\text{ if } (x,y)\in S^-,\\ 
    \bra{x}H\ket{x}+ \sum_{z:(x,z)\in S^+} \bra{x}H\ket{z} \frac{\braket{z|\psi}}{\braket{x|\psi}} &\text{ if } x=y.
\end{cases}
$$
    
\end{definition}

\begin{remark}
    In the above definition, $\ket{\psi}$ need not be the ground state of $H$. 
    Also, $\ket{\psi}$ need not be a non-negative state.
\end{remark}

The following lemma is known about the fixed-node construction, see~\cite{ten1995proof,bravyi2023rapidly,jiang2023local}. 

\begin{lemma}
\label{lemma:fixed_node_Hamiltonian}
Let $F^{H,\psi}$ be a fixed-node Hamiltonian with respect to a real-valued state $\ket{\psi}$. Then, the following hold:
\begin{itemize}
    \item $F^{H,\psi} \ket{\psi}= H\ket{\psi}$. 
    \item If $\ket{\psi}$ is a ground state of $H$, then, $\lambda(H)=\lambda(F^{H,\psi})$ and $\ket{\psi}$ is a ground state of $F^{H,\psi}$. 
    \item $\bra{\phi}F^{H,\psi}\ket{\phi}\geq \bra{\phi}H\ket{\phi}$ for any $\phi$. 
\end{itemize}
\end{lemma}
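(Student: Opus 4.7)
The plan is to prove the three properties in the order (i), (iii), (ii), since (ii) is a direct consequence of the first two combined with a Rayleigh-quotient argument. All three reduce to calculations on matrix elements; the only subtle point is a sign-bookkeeping argument in (iii) needed because $\ket{\psi}$ is allowed to be any real state, not necessarily non-negative.

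For (i), I would fix a basis vector $x$ and expand $\braket{x|F^{H,\psi}|\psi} = \sum_y \braket{x|F^{H,\psi}|y}\braket{y|\psi}$ using the three cases in Definition~\ref{def:fixed_node_Hamiltonian}. The $(x,y)\in S^+$ off-diagonal terms vanish by construction, the $(x,y)\in S^-$ with $y\neq x$ terms give $\braket{x|H|y}\braket{y|\psi}$, and the diagonal term $\braket{x|F^{H,\psi}|x}\braket{x|\psi}$ expands so that the correction $\sum_{z:(x,z)\in S^+}\braket{x|H|z}\braket{z|\psi}/\braket{x|\psi}$, when multiplied by $\braket{x|\psi}$, precisely restores the missing $S^+$ contributions. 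Summing therefore recovers $\braket{x|H|\psi}$; note that divisions by $\braket{x|\psi}$ only occur when $(x,z)\in S^+$, which forces $\braket{x|\psi}\neq 0$.

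For (iii), I would compute the quadratic form of the difference. Letting $\phi_x=\braket{x|\phi}$, the off-diagonal contribution to $\braket{\phi|(F^{H,\psi}-H)|\phi}$ is $-\sum_{(x,y)\in S^+}\bar{\phi}_x\phi_y\braket{x|H|y}$, and the diagonal contribution is $\sum_{(x,y)\in S^+}|\phi_x|^2\braket{x|H|y}\,\braket{y|\psi}/\braket{x|\psi}$. Since $S^+$ is symmetric under $(x,y)\leftrightarrow(y,x)$ (because $H$ is real Hermitian and the defining inequality is symmetric in $x,y$), symmetrizing rewrites the sum as
\[
\tfrac12\sum_{(x,y)\in S^+}\braket{x|H|y}\left(|\phi_x|^2\tfrac{\braket{y|\psi}}{\braket{x|\psi}} + |\phi_y|^2\tfrac{\braket{x|\psi}}{\braket{y|\psi}} - 2\,\mathrm{Re}(\bar{\phi}_x\phi_y)\right).
\]
Setting $a=\braket{x|\psi}$ and $b=\braket{y|\psi}$, membership in $S^+$ reads $ab\,\braket{x|H|y}>0$. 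If $ab>0$, the bracket equals $\bigl|\sqrt{b/a}\,\phi_x - \sqrt{a/b}\,\phi_y\bigr|^2$ and $\braket{x|H|y}>0$; if $ab<0$, writing $r=\sqrt{-b/a}$ the bracket equals $-\bigl|r\phi_x + \phi_y/r\bigr|^2$ and $\braket{x|H|y}<0$. Either way each summand is non-negative, proving (iii).

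For (ii), combining (i) and (iii) finishes things off. When $\ket{\psi}$ is a ground state of $H$, property (i) yields $F^{H,\psi}\ket{\psi}=H\ket{\psi}=\lambda(H)\ket{\psi}$, hence $\lambda(F^{H,\psi})\leq \lambda(H)$. Conversely, for any normalized $\ket{\phi}$, property (iii) gives $\braket{\phi|F^{H,\psi}|\phi}\geq \braket{\phi|H|\phi}\geq \lambda(H)$, so $\lambda(F^{H,\psi})=\lambda(H)$ and $\ket{\psi}$ is a ground state of $F^{H,\psi}$. The main obstacle will be the sign-bookkeeping in (iii): one must use the precise definition of $S^+$ (which involves the signs of $a$ and $b$, not merely the sign of $\braket{x|H|y}$) to exhibit each paired summand as a manifest non-negative square, and care is needed that no denominator $\braket{x|\psi}$ can vanish for pairs that contribute.
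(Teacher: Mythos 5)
Your proof is correct. Note that the paper itself does not prove Lemma~\ref{lemma:fixed_node_Hamiltonian} — it states it as known and defers to the cited references — so there is no in-paper argument to compare against; your write-up is the standard self-contained verification, with the essential content being exactly where you put it: the symmetrization over $S^+$ (valid because $H$ is real symmetric and $\ket{\psi}$ is real, making $S^+$ symmetric) and the two-case completion of the square whose sign matches that of $\bra{x}H\ket{y}$, together with the observation that $(x,z)\in S^+$ forces $\braket{x|\psi}\neq 0$ so no denominator vanishes.
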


\subsection{Simplicial complexes and combinatorial Laplacian}
\label{sec:preliminaries:sc}

In this section, we introduce preliminaries on simplicial complexes, homology, and  combinatorial Laplacians. 
An abstract simplicial complex is a family of sets of vertices (simplices) that is closed under taking subsets.  
A particularly relevant class of simplicial complexes in this work is the clique complexes, which are a collection of cliques of a (simple) graph. 

\begin{definition}[Clique complex]
    Given an undirected graph $G=(V,E)$, the clique complex of $G$ is the set of cliques in $G$. We denote the clique complex of $G$ by $\mathrm{Cl}(G)$ and the set of $d$-cliques as $\mathrm{Cl}_{d-1}(G)$.
\end{definition}
Note that $d$-cliques, which are composed of $d$-number of vertices, are $d-1$-simplices. 
Clearly, the subsets of cliques are also cliques of the graph. 
A convenient property of clique complexes that is suitable for the study of computational complexity is that it can be succinctly described by the graph (1-skeleton of the complex), even if there are superpolynomially many simplices in the number of vertices.

In this paper, the orientations of simplices play a crucial role. 
We consider two orientations for any $d$-simplices with $d\geq 1$. 
Two oriented simplices with the same vertices are regarded as equivalent if they are equivalent up to even permutation. 
If they are equivalent up to odd permutation, they are said to have opposite orientations. 
In oriented simplicial complexes, the vertices of simplices are ordered in a fixed way. 
For $\sigma \in X$ where $X$ is an oriented simplicial complex over $n$-vertices, we denote the counterpart with the opposite orientation by $\bar\sigma$. 

We introduce the faces and cofaces of simplices. 
\begin{definition}[Faces and cofaces]
For $\sigma \in X_d$, the set of faces and cofaces of $\sigma$ is defined as follows:
\begin{itemize}
    \item $\mathrm{face}(\sigma)$: set of $d-1$ simplices contained in $\sigma$.
    \item $\mathrm{coface}(\sigma)$: set of $d+1$ simplices those contain $\sigma$.  
\end{itemize}
\end{definition}
Next, we introduce several notions of adjacency of $d$-simplices:
\begin{itemize}
    \item {Up-adjacent:} we say two oriented simplices $\sigma,\sigma'$ are up-adjacent (denote by $\sigma \sim_\uparrow\sigma'$) if $\sigma$ and $\sigma'$ share a coface and induce different orientations on it. 
    \item Down-adjacent: we say two oriented simplices are down-adjacent (denote by $\sigma \sim_{\downarrow} \sigma'$) if $\sigma$ and $\sigma'$ share a common face and induce opposite orientations on it. 
\end{itemize}
Note that $\sigma \sim_\uparrow\sigma'$ implies $\sigma \sim_\downarrow\overline{\sigma'}.$
We use $\sigma \sim\sigma'$ to denote that $\sigma,\sigma'$ share a face (i.e., either $\sigma \sim_\downarrow\sigma'$ or $\bar\sigma \sim_\downarrow\sigma'$ holds).  
The degree of a $d$-simplex $\sigma$, denoted by $\mathrm{deg}(\sigma)$, is the number of ($d+1$ dimensional) cofaces of $\sigma$ in $X$.


Next, we introduce chain space and operators on it. 
For a simplicial complex $X$, let 
$$
C_d = \mathrm{Span}(\ket{\sigma}: \sigma \in X_d)
$$
be the space spanned by the $d$-simplices in $X$ with complex coefficients.  
The boundary operator $\partial_d:C_d \rightarrow C_{d-1}$ is defined by 
$$
\partial_d \ket{\sigma} = \sum_{\tau \in \mathrm{face}(\sigma)} (-1)^{[\sigma:\tau]} \ket{\tau}.
$$
Here, $[\sigma:\tau]=0$ if the induced orientation for $\tau$ by $\sigma$ is the same with the orientation of $\tau$ and $[\sigma:\tau]=1$ otherwise. 
It is equivalent to define for $\sigma=v_0v_1...v_d$ that 
$
\partial_d \ket{\sigma} = \sum_{i} (-1)^i \ket{\sigma\backslash v_i}
$
under the convention $\ket{\bar\sigma}:=-\ket{\sigma}$. 

For unweighted simplicial complex $X$, we define inner products by $\braket{\sigma|\sigma'}=\delta_{\sigma,\sigma'}$ for $\sigma,\sigma' \in X_d$. 
Then, we define coboundary operator $\delta_d:C_d \rightarrow C_{d+1}$ by $\delta_d:= \partial_{d+1}^\dagger$. 
It holds that 
$$
\delta_d \ket{\sigma}= \sum_{\tau \in \mathrm{coface}(\sigma)} (-1)^{[\tau:\sigma]} \ket{\tau}.
$$
The homology $H_d$ of $X$ is defined by $H_d:= \ker \partial_d / \mathrm{Im}\partial_{d-1}$. 

Next, we define useful positive semi-definite Hermitian operators on $C_d$: 
\begin{itemize}
    \item (Up-Laplacian) $\Delta_d^{\mathrm{up}}:=\delta_d^\dagger\delta_d$.
    \item (Down-Laplacian) $\Delta_d^{\mathrm{down}}:= \partial_d^\dagger\partial_d.$
    \item (Full/combinatorial/Hodge Laplacian) $\Delta_d:=\Delta_d^{\mathrm{up}}+\Delta_d^{\mathrm{down}}$.
\end{itemize}

Then, the following representation of matrix elements of Laplacians is known~\cite{goldberg2002combinatorial}
\begin{align*}
\bra{\sigma'}\Delta_d^{\mathrm{up}}\ket{\sigma}=
    \left\{
    \begin{array}{ll}
    \mathrm{deg}(\sigma) & \text{if\ } \sigma=\sigma', \\
    -1 & \text{if } {\sigma} \sim_\uparrow\sigma',\\
    0 & \text{otherwise}.
    \end{array}
    \right.
    \ \ \ \ \ \ 
\bra{\sigma'}\Delta_d^{\mathrm{down}}\ket{\sigma}=
    \left\{
    \begin{array}{ll}
    d+1 & \text{if\ } \sigma=\sigma', \\
    1 & \text{if } \bar{\sigma} \sim_{\downarrow} \sigma', \\
    -1 & \text{if } \sigma\sim_{\downarrow} \sigma',\\
    0 & \text{otherwise}.
    \end{array}
    \right.
\end{align*}

\begin{align*}
\bra{\sigma'}\Delta_{d}\ket{\sigma}=
    \left\{
    \begin{array}{ll}
    \mathrm{deg}(\sigma)+d+1 & \text{\ if\ } \sigma=\sigma', \\
    1 & 
    \begin{array}{l}
    \text{if } \bar{\sigma}\sim_{\downarrow}\sigma' \text{ and } \sigma, \sigma' \text{ do not share a coface}, \\
    \end{array}
     \\
    -1 &     
    \begin{array}{l}
    \text{if } {\sigma}\sim_{\downarrow}\sigma' \text{ and } \sigma, \sigma' \text{ do not share a coface} ,
    \end{array}\\
    0 & \text{\ otherwise}.
    \end{array}
    \right.
\end{align*}
It is known that 
$H_d\cong \ker\Delta_d$ because $\ker\Delta_d=\ker\Delta^{\text{up}}_d \cap \ker\Delta^{\text{down}}_d= \ker\partial_d \cap \mathrm{Im}(\delta_d)^\perp$. 
Therefore, $\ker\Delta_d$ is called the harmonic homology space. A vector in $\ker\Delta_d$ that represents a hole is called a harmonic representative for it\footnote{A hole may be represented by a cycle $\ket{c}$ which cannot be written as a boundary. A harmonic representative for $\ket{c}$ is $\ket{\phi}=\ket{c}+\ket{b}\in \ker\Delta_d$ with $\ket{b}\in \mathrm{Im}(\partial_{d+1})$. See also \cite{basu2024harmonic}}. 

Now we move on to the weighted case. Suppose that we are given the weighting of the vertices.
Then, we consider the vertex-weighting of simplices in a way that is used in \cite{king:qma}. 
The weight of simplices is given as the product of the weight for vertices as 
simplices $\sigma=v_0v_1...v_d$ in $X$ is weighted by $w(\sigma)=w(v_0)w(v_1)\times\cdots\times w(v_{d})$. 
Now we define the inner product as 
$$
\braket{\sigma|\sigma'}= w(\sigma)w(\sigma') \delta_{\sigma,\sigma'}.
$$
With this inner product, 
the orthonormal basis for $C_d$ is given by
\begin{equation}
\label{eq:normalizedbasis}
    \left\{ \ket{[\sigma]}:= \frac{1}{w(\sigma)}  \Ket{\sigma}\right\}_{\sigma\in X_d}.
\end{equation}
Throughout this paper, we use $\ket{[\sigma]}$ for the normalized basis of the chain space. 
For the normalized basis, the boundary and coboundary act as 
$$
\partial_d \ket{[\sigma]} = \sum_{\tau \in \mathrm{face}(\sigma)} (-1)^{[\sigma:\tau]} {w([\sigma:\tau])}\ket{[\tau]}
$$
$$
\delta_d \ket{[\sigma]}= \sum_{\tau \in \mathrm{coface}(\sigma)} (-1)^{[\tau:\sigma]} {w([\tau:\sigma])}\ket{[\tau]}
$$
where $w([\sigma:\tau])$ is the weight of the vertex that is to be removed from $\sigma$ to obtain $\tau$. 


Then, the Laplacian matrix elements can be written for the orthonormal basis as 
\begin{align*}
\bra{[\sigma']}\Delta_d^{\mathrm{up}}\ket{[\sigma]}=
    \left\{
    \begin{array}{ll}
    \sum_{u\in \mathrm{up}(\sigma)} w(u)^2 & \text{if\ } \sigma=\sigma', \\
    w(v_\sigma)w(v_{\sigma'}) & \text{if } \bar{\sigma} \sim_\uparrow\sigma', \\
    -w(v_\sigma)w(v_{\sigma'}) & \text{if } {\sigma} \sim_\uparrow\sigma',\\
    0 & \text{otherwise}.
    \end{array}
    \right.
    \end{align*}
\begin{align*}
\bra{[\sigma']}\Delta_d^{\mathrm{down}}\ket{[\sigma]}=
    \left\{
    \begin{array}{ll}
    \sum_{v\in \sigma} w(v)^2 & \text{if\ } \sigma=\sigma', \\
    w(v_\sigma)w(v_{\sigma'}) & \text{if } \bar{\sigma} \sim_{\downarrow} \sigma', \\
    -w(v_\sigma)w(v_{\sigma'}) & \text{if } \sigma\sim_{\downarrow} \sigma',\\
    0 & \text{otherwise}.
    \end{array}
    \right.
\end{align*}

\begin{align*}
\bra{[\sigma']}\Delta_{d}\ket{[\sigma]}=
    \left\{
    \begin{array}{ll}
    \left( \sum_{u\in \mathrm{up}(\sigma)} w(u)^2 \right) + \left( \sum_{v\in \sigma} w(v)^2 \right) & \text{\ if\ } \sigma=\sigma', \\
    w(v_\sigma)w(v_{\sigma'}) & 
    \begin{array}{l}
    \text{if } \bar{\sigma}\sim_{\downarrow}\sigma' \text{ and } \sigma, \sigma' \text{ do not share a coface}, \\
    \end{array}
     \\
    -w(v_\sigma)w(v_{\sigma'})  &     
    \begin{array}{l}
    \text{if } {\sigma}\sim_{\downarrow}\sigma' \text{ and } \sigma, \sigma' \text{ do not share a coface} ,
    \end{array}\\
    0 & \text{\ otherwise}.
    \end{array}
    \right.
\end{align*}
Here, $u\in \mathrm{up}(\sigma)$ are vertices s.t. $\sigma\cup\{v\} \in X_{d+1}$ and $v\in \sigma$ are vertices in $\sigma$. For $\sigma, \sigma'$ that share a common face, $v_\sigma$ and $v_{\sigma'}$ are the vertices to be removed from $\sigma$ and $\sigma'$ to obtain the common face.

\subsection{Orientability and disorientability of simplicial complexes}
\label{sec:preliminaries:orientable}

The concept of orientation has been discussed in the literature of simplicial complexes, which are also crucial in our work. 
We introduce the existing concepts concerning the orientations of simplices, although they are different from those utilized in our work. 

The {\it orientability} and {\it disorientability} of simplices are defined as follows. 
A $d$-dimensional simplicial complex is said to be disorientable if there is a choice of orientations of simplices such that whenever two $d$-dimensional simplices intersect, they induce the same orientation on the shared $d-1$-dimensional simplex~\cite{mukherjee2016random}. This is a concept defined for maximal simplices and considered to be a higher-order generalization of bipartiteness~\cite{mukherjee2016random,eidi2024higher}. 
A $d$-dimensional simplicial complex is said to be orientable if there is a choice of orientations s.t. if two $d-1$-dimensional simplices share a $d$-dimensional coface, they induce opposite orientations on it. 
Orientability is related to the manifold-like property of simplicial complexes. For example, for the triangulation  of a 2-dimensional manifold, one can imagine a choice of orientations for the triangles such that they induce opposite orientations on their common edges. 
Note that the (up)-degree of $d-1$-dimensional simplices is at most 2, so that the simplicial complex can be orientable, i.e., any $d-1$-simplices can be a face of at most two $d$-dimensional simplices.

\section{Problem definition and main result}
\label{sec:problem_main}

In this section, we introduce the concept of orientable filtration, our main problem, and our main result.
Let $X$ be a $D$-dimensional simplicial complex, i.e., a simplicial complex whose maximal simplices are $D$-dimensional.  
Let
$$
X^0 \subseteq X^1 \subseteq \cdots \subseteq X^N
$$
be a filtration of simplicial complexes i.e., each of $\{X^j\}_{i=0}^N$ forms a subcomplex of a $d$-dimensional simplicial complex $X$.
The filtration induces filtrations of $d$-simplices for $d\in [D]$ as
$$
X_d^0 \subseteq X_d^1 \subseteq \cdots \subseteq X_d^N,
$$
where $X_d^i$ is the set of $d$-simplices in $X^i$. 
Then,  $$\tilde X_d^i:=X_d^i\backslash X_d^{i-1}$$ is the set of $d$-simplices that are added at the $i$-th level of the filtration. 

We would like to define a notion of orientability for the filtration of a simplicial complex for a dimension $d\in [n-1]$, that is defined through orientability for subsets that appear in the filtration. 
Roughly speaking, the notion of orientable filtration concerns (1) orientability of $d$-simplices that are added at each of the steps of the filtration, and (2) orientability among the different subsets in the filtration:
\begin{enumerate}
    \item Each of $\tilde X_d^i$ is orientable: inside of $\tilde X_d^i$, $d$-simplices induces opposite orientations on their common faces in $X_{d-1}$. (\textbf{down-orientability})
    \item For adjacent subsets $\tilde X_d^i$ and $\tilde X_d^j$, $d$-simplices induces opposite orientations on the common cofaces in $X_{d+1}$ that ``connects'' subsets $\tilde X_d^i$ and $\tilde X_d^j$. (\textbf{up-orientability})
\end{enumerate}
An example of such filtrations can be seen in Figure~\ref{fig:orientable_filtration}.  
\begin{figure}
     \centering
     \includegraphics[width=0.5\linewidth]{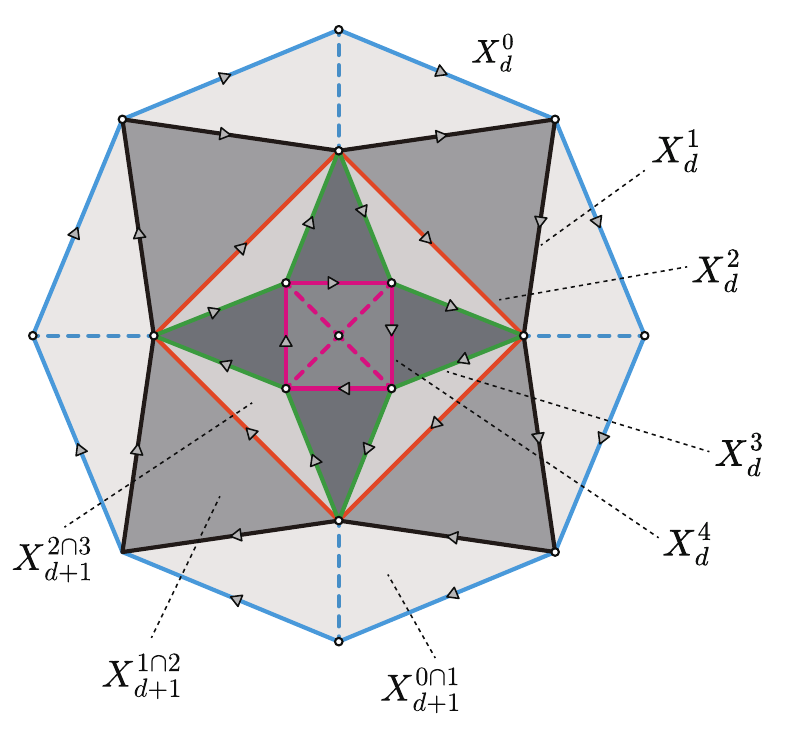}
     \caption{An example of an orientable filtration ${X}_d^0\subseteq \cdots \subseteq {X}_d^4$ with $d=1$. 
     Edges with the same color belong to the same $\tilde X_1^i$. 
     Arrows indicate the orientation that makes simplices in different subsets induce opposite orientations on the triangles. 
     The dashed edges are not the elements in $\hat X_1^i$. 
     They are regarded as ``internal'' edges, and the adjacent triangles can be effectively regarded as a single cell.
     Here, $f^{0,1}=1,f^{1,0}=1,f^{1,2}=2,f^{2,3}=1,f^{3,2}=2$ and so on. 
     }
     \label{fig:orientable_filtration}
 \end{figure} 
These are basic properties of orientable filtration. 
However, in order to treat more general situations, 
we will need some additional technical requirements.
Specifically, we would like to include situations in which there are simplices that we do not need to care about the orientation, which we call internal simplices:
\begin{definition}[Internal simplices of $X_d^i$]
    For $d\in[D]$ and $i\in [N]$, we say that $\sigma \in \tilde X_d^i$ is an internal simplex of $\tilde X_d^i$ if there are two distinct $d$ simplices $\sigma',\sigma'' \in \tilde X_d^i$ s.t. there are distinct cofaces of $\sigma'$ and $\sigma''$ in any $d+1$-simplices in $X$ that contains $\sigma$ as a face. 
    We define $\hat X_d^i$ to be the set of $d$ simplices that is obtained by removing internal simplices from $\tilde X_d^i$.
\end{definition}

An example of such internal simplices can be seen in Figure~\ref{fig:orientable_filtration} as dashed edges. 
We say the disjoint union $\bigsqcup_{i=0}^N\hat X_d^i$ to be the set of {\it essential $d$-simplices} because they are mostly relevant to the homology problem with orientable filtration. 
Now, we are ready to define a notion of orientability for a filtration of a simplicial complex. 
The motivation behind this definition is that such internal simplices indeed do not appear in the harmonics of interest (therefore, we do not need to care about the orientations for them).
We formally define an orientable filtration as below.  

\begin{definition}[Orientable $d$-filtration]
\label{def:orientable_filtration}
A simplicial complex $X$ is said to have an orientable $d$-filtration of length $N$ if there is a filtration
$
X^0 \subseteq X^1 \subseteq \cdots \subseteq X^N\subseteq X
$ where $N\in poly(n)$
and 
a choice of orientations for $X_d$
s.t. the following conditions hold:
\begin{itemize}
    \item Each of the subset of essential simplices $\{\hat X_{d}^{i}\}_{i=0}^N$ is orientable i.e., for any $i\in \{0,...,N\}$, whenever two simplices in $\hat X_{d}^{i}$ share a $d-1$-dimensional face, they induces opposite orientations on it. 
    \item For any $i,j\in \{0,...,N-1\}$, whenever two simplices in $\hat X_{d}^{i}$ and {$\hat X_{d}^{j}$} share a $d+1$-dimensional coface $\tau$, they induces opposite orientations on it. Moreover, the faces of such $\tau$ is partitioned into $\tilde X_{d}^{i}$ and $\tilde X_{d}^{j}$.
\end{itemize}
\end{definition}

We say that a filtration and orientations for $X_d$ is an \textit{oriented filtration} if they satisfy the conditions in Definition~\ref{def:orientable_filtration}. 
We remark that the (down) orientability for $\hat X_{d}^{i}$ implies that $\hat X_{d}^{i}$ is down degree-2, i.e., any $d-1$ simplices is a face of at most two simplices in $\hat X_{d}^{i}$. 
Similarly, there is a degree-2-like property in the upward connectivity: the faces of any $d+1$ simplex are contained in at most two subsets in the filtration $\tilde X_{d}^{i}$ and $\tilde X_{d}^{j}$.

We also define a ``uniform version'' of the above orientable filtration. 
Let us denote $X_{d+1}^{i\cap j}$ as the set of $d+1$ simplices whose faces are partitioned into $\tilde X_d^i$ and $\tilde X_d^{j}$. 
In other words, $X_{d+1}^{i\cap j}$ is the set of $d+1$-simplices that makes $\tilde X_d^i$ and $\tilde X_d^{j}$ adjacent as 
\begin{align*}
& X_{d+1}^{i\cap j}\\
    \tilde X_d^i \nearrow &\hspace{1cm} \nwarrow\tilde X_d^j
\end{align*}
where the arrows show inclusion of simplices as cofaces. 
An example can be seen in Figure~\ref{fig:orientable_filtration}. 
Then, the uniform version of the orientable filtration is defined as follows. 

\begin{definition}[Uniform orientable $d$-filtration]
\label{def:uniform_filtration}
A simplicial complex $X$ is said to have a uniform orientable $d$-filtration 
if it has an oriented $d$-filtration and with that filtration and orientations, 
    \begin{itemize}
        \item Each of the simplices in $\tilde X_d^i$ has a uniform weight $w^i$. 
        \item 
       For any $i,j$ and any $\tau\in X_{d+1}^{i\cap j}$,
        the number of faces in $\hat X_d^i$ is the same. We will denote the number of faces of $\tau\in X_{d+1}^{i\cap j}$ in $\hat X_d^i$ by $f^{i,j}$, i.e.,
        $$
        f^{i,j} = |\{\sigma \in \mathrm{face(\tau)}: \sigma \in \hat X_d^i \}|. 
        $$
        (The number of faces in $\hat X_d^j$ is denoted by $f^{j,i}=|\{\sigma \in \mathrm{face(\tau)}: \sigma \in \hat X_d^j \}|$.)
    \end{itemize}
\end{definition}
We say that a filtration and orientations for $X_d$ is a \textit{uniform oriented filtration} if they satisfy the conditions in Definition~\ref{def:uniform_filtration} as uniformities for weights and relative degree $f^{i,j}$ are imposed. 
Figure~\ref{fig:orientable_filtration} indeed describes a uniform case. 
The relative degree $f^{i,j}$ is important in the construction of harmonics because we need to cancel out the contributions on the cofaces between different layers.

Then, the main problem that we study in this paper is defined as follows. 

\begin{definition}[Promise Clique Homology problem with a uniform orientable filtration]
\label{def:problem} \ \\ 
\textbf{Input:} 
\begin{itemize}
    \item $d\in [n-1]$, $\epsilon >1/poly(n)$.
    \item
    An oriented clique complex $\mathrm{Cl}_d(G)$ over $n$-vertices described by a weighted graph $G$. 
    and a filtration 
 $
X_d^0 \subseteq X_d^1 \subseteq \cdots \subseteq X_d^N \subseteq \mathrm{Cl}_d(G)
$
of length $N\in poly(n)$. 
The orientations and filtration are specified by the access to the following classical circuits:
\begin{itemize}
    \item We are given a $poly(n)$-size description of a classical circuit that returns an oriented clique in $\mathrm{Cl}_d(G)$ by inputting an unoriented clique composed of the same vertices. 
    \item We are also given a $poly(n)$-size  
description of a classical circuit that returns an index of the filtration at which a simplex $\sigma \in X_d$ is added, or returns $\perp$ if $\sigma \in \mathrm{Cl}_d(G) \backslash X_d^N$.
\end{itemize}
\end{itemize}
\textbf{Promise:} Either of the below holds:
\begin{itemize}
    \item (YES instance) There is a non-negative and homologous cycle $\ket{c}$ supported on $\hat X_d^0$. 
   Moreover, the harmonics $\ket{\phi}$ that represent the same hole as $\ket{c}$ are supported only on $X_d^N$.

    \item (NO instance)
    The minimum eigenvalue of $\Delta_d$ satisfies
    $\lambda(\Delta_d)\geq \epsilon$
\end{itemize}
\textbf{Output:} 
1 for YES instances and 0 for NO instances.
\end{definition}
Note that we say that a cycle $\ket{c}\in \ker \partial_d$ is homologous if $\ket{c}\notin \mathrm{Im}\partial_{d+1}$. 

Essentially, our problem is a {\it persistence} problem \cite{gyurik2024quantum} in which we are required to decide whether a hole (supposed to exist) in $X_d^0$ finally persists in $X_d^N$ or not under the condition of the uniform orientable filtration as well as the promises. 
There is a possibility that even if a hole in $X_d^0$ does not persist in $X_d$, there are still some holes in $X_d^N$. 
For example, a new hole can be born in the process of the filtration.  
However, we are prohibiting the existence of {\it any} holes in the NO instances.  
This will be an important point in the gadget construction in the reduction from an MA-hard problem. 

Our main result can be stated as follows. 

\begin{theorem}
\label{thm:main}
    The problem in Definition~\ref{def:problem} is MA-complete.
\end{theorem}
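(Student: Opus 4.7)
The plan is to build a randomized verification protocol centered on a fixed-node modification of the combinatorial Laplacian $\Delta_d$. As the reference state $\ket{\psi}$, I would take the non-negative state obtained by summing, with the orientations prescribed by the given uniform orientable filtration, the ``locally good'' $d$-simplices (those not flagged as non-cocycle or non-cycle in the sense of Figure~\ref{fig:eg_good_bad}), restricted to a single connected component. By Lemma~\ref{lemma:fixed_node_Hamiltonian} the operator $F^{\Delta_d,\psi}$ is stoquastic and agrees with $\Delta_d$ on $\ket{\psi}$. The key observation is that the uniform weights $w^i$ and the uniform relative degrees $f^{i,j}$ of Definition~\ref{def:uniform_filtration} ensure that every ratio $\braket{z|\psi}/\braket{x|\psi}$ appearing in the diagonal correction of Definition~\ref{def:fixed_node_Hamiltonian} is read off locally from the filtration indices of $x$ and $z$ via the classical circuits given in Definition~\ref{def:problem}. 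From $F^{\Delta_d,\psi}$ I form a Markov matrix $P = I - F^{\Delta_d,\psi}/C$ with normalization $C=\mathrm{poly}(n)$; Merlin then sends an initial simplex $\sigma_0$, and Arthur runs a polynomial-length random walk, rejecting whenever it lands on a bad simplex or exits $X_d^N$. In YES instances the walk stays on the support of the promised harmonic representative and Arthur accepts; in NO instances the spectral gap $\lambda(\Delta_d)\ge \epsilon$ together with the bound $\bra{\phi}F^{\Delta_d,\psi}\ket{\phi}\ge \bra{\phi}\Delta_d\ket{\phi}$ of Lemma~\ref{lemma:fixed_node_Hamiltonian} forces a local violation to be seen with inverse-polynomial probability, and standard amplification finishes the protocol.

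\textbf{MA-hardness.} I would reduce from the stoquastic SAT problem of Lemma~\ref{lemma:setofprojectors}, whose terms are projectors onto $\ket{x}$ or onto the rank-one vector $\ket{x}-\ket{y}$. The encoding sends an $n$-qubit Hilbert space into $H_{2n-1}(X^0)$ via the $n$-fold join of the two-hole base graph, each factor contributing two $1$-cycles representing $\ket{0}$ and $\ket{1}$; every vertex of $X^0$ carries weight $1$. To implement a term $\ket{x}\bra{x}$ I adopt the cone-over-axial-vertex gadget of \cite{king:qma}, realizing the targeted product cycle as the boundary of a cone on a weight-$\lambda$ axial vertex. To implement $\tfrac{1}{2}(\ket{x}-\ket{y})(\bra{x}-\bra{y})$ I would use the ``wormhole'' gadget of Figure~\ref{fig:glue_xy}, sliced stepwise so that the added $(2n-1)$-simplices fall into consecutive layers $\tilde X_{2n-1}^i$, with orientations chosen so that the two conditions of Definition~\ref{def:orientable_filtration} hold layer by layer and the uniformities of Definition~\ref{def:uniform_filtration} hold globally. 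Each auxiliary vertex receives weight $\lambda\ll 1$.

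\textbf{Main obstacle.} The principal difficulty lies in controlling the parasitic homology produced by the wormhole. As flagged in the overview, coordinates with $x_i=y_i$ contribute torus-like blocks and coordinates with $x_i\ne y_i$ contribute extra $1$-cycles (Figure~\ref{fig:newhole}); together they can create new $(2n-1)$-holes that would violate both the NO promise and the harmonic-support condition for YES instances. I would eliminate them systematically by appending a further layer of axial gadgets that fill exactly these local cycles, and I would verify that the YES harmonic is orthogonal to the appended axial simplices so it remains supported on $X_{2n-1}^N$. To turn these structural statements into the required eigenvalue bound, I would run a spectral-sequence argument over the weight filtration indexed by the number of weight-$\lambda$ vertices in a simplex, mirroring the analysis of \cite{king:qma}: the zeroth page recovers the homology restricted to $X^0$, which after all gadgets coincides with $\ker H$ for the encoded Hamiltonian; in NO instances $\ker H=\{0\}$ lifts to an inverse-polynomial lower bound on $\lambda(\Delta_{2n-1})$, while in YES instances a homologous cycle survives in $\hat X_{2n-1}^0$. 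Combining (i) the constructed uniform orientable filtration, (ii) the YES cycle in $\hat X_{2n-1}^0$, and (iii) the support of its harmonic representative on $X_{2n-1}^N$ produces a valid instance of Definition~\ref{def:problem} and completes the reduction.
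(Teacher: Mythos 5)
Your proposal follows essentially the same route as the paper on both sides: the fixed-node Laplacian relative to the connected component of good simplices, local computability of amplitude ratios from the uniform weights and relative degrees $f^{i,j}$, a random-walk verifier that rejects on bad simplices, the reduction from the stoquastic SAT of Lemma~\ref{lemma:setofprojectors} via the join-product qubit graph, the cone gadget for $\ket{x}\bra{x}$ and the sliced wormhole for $\ket{x}-\ket{y}$, the axial-vertex fix for the parasitic homology, and the spectral-sequence analysis over the weight filtration.

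One step in the containment argument would fail as written. The matrix $P=I-F^{\Delta_d,\psi}/C$ is not row-stochastic; the correct transition matrix is the similarity transform $P_{\sigma\rightarrow\sigma'}=\frac{\braket{[\sigma']|\psi}}{\braket{[\sigma]|\psi}}\braket{[\sigma']|I-\beta F|[\sigma]}$, whose rows sum to $1$ on the good component precisely because $F\ket{\psi}=\Delta_d\ket{\psi}$ (Lemma~\ref{lemma:fixed_node_Hamiltonian}) vanishes there. Consequently the soundness claim is not simply that ``a local violation is seen with inverse-polynomial probability'': the probability of any accepting path carries the telescoping factor $\prod_t \braket{[\sigma_{t+1}]|\psi}/\braket{[\sigma_t]|\psi}$, and since each single-step ratio is only bounded by $\mathrm{poly}(n)$, over $L=\mathrm{poly}(n)$ steps this factor can be exponentially large. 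The verifier must therefore additionally record the ratios $r_t$ and reject unless $\prod_t r_t\leq 1$ (which the honest prover guarantees by sending a maximal-amplitude starting simplex); only with this extra check does the bound $P_{\mathrm{acc}}\leq \sqrt{|X_d|}\,(1-\beta\epsilon)^L\leq 1/3$ follow from the spectral gap of $F$ inherited from $\lambda(\Delta_d)\geq\epsilon$. The rest of your outline matches the paper's proof.
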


The proof will be provided in the subsequent sections. 
In Section~\ref{sec:containment_MA}, we prove the containment in MA. 
In Sections~\ref{sec:hardness_1},~\ref{sec:hardness_2},~\ref{sec:hardness_3}, we prove MA-hardness. 
Section~\ref{sec:hardness_1} provides the construction of the clique complex from an MA-hard problem with new gadget constructions.
In Section~\ref{sec:hardness_2}, we prove the lower bound for the minimal eigenvalue of the Laplacian in the NO instances. 
In Section~\ref{sec:hardness_3}, we prove the requirement for the homologous cycle and harmonics in YES instances.

\section{Containment in MA}
\label{sec:containment_MA}

In this section, we prove the containment of the homology problem with uniform orientable filtration in MA.

\subsection{Fixed-node construction for non-negative states}
\label{sec:inMA:1}

We first recall the definition of the fixed-node Hamiltonian for the chain space. 
In Definition~\ref{def:fixed_node_Hamiltonian}, we have introduced the fixed-node Hamiltonian for $H\in \mathbb{R}^{2^n\times 2^n}$. 
Here, instead of $n$-qubit space, we consider $H: C_d(X)\rightarrow C_d(X)$ where $C_d(X)$ is the space spanned by {\it oriented} $d$-simplices as introduced in Section~\ref{sec:preliminaries:sc}. 
We only consider real Hamiltonians on $C_d(X)$. Therefore, it holds 
$$
\bra{[\sigma]} H\ket{[\sigma']} \in \mathbb{R}
$$
for any $\sigma,\sigma' \in X_d$, where $\ket{[\sigma]}$ is the normalized basis state defined in eq.~\eqref{eq:normalizedbasis}. 

Then, the fixed-node Hamiltonian can be formulated as 
$$
\bra{[\sigma]}F^{H,\psi}\ket{[\sigma']}= 
\begin{cases}
    0  &\text{ if } (\sigma,\sigma')\in S^+,\\ 
    \bra{[\sigma]}H\ket{[\sigma']}&\text{ if } (\sigma,\sigma')\in S^-,\\ 
    \bra{[\sigma]}H\ket{[\sigma']}+ \sum_{\sigma'':(\sigma,\sigma'')\in S^+} \bra{[\sigma]}H\ket{[\sigma'']} \frac{\braket{[\sigma'']|\psi}}{\braket{[\sigma]|\psi}} &\text{ if } x=y.\\ 
\end{cases}
$$
where for $\sigma,\sigma' \in X_d$, 
$$
S^+:=\{(\sigma,\sigma')\ :\ \sigma\neq \sigma' \text{ and } \braket{\psi|\sigma}\bra{\sigma}H\ket{\sigma'}\braket{\sigma'|\psi}>0\},
$$
$$
S^-:=\{(\sigma,\sigma')\ :\ \sigma\neq \sigma' \text{ and } \braket{\psi|\sigma}\bra{\sigma}H\ket{\sigma'}\braket{\sigma'|\psi}\leq0\}.
$$

We will apply the fixed-node construction for the combinatorial Laplacian w.r.t. some {\it non-negative state} $\ket{\psi}$.
When $\ket{\psi}$ is a non-negative state, 
$$
(\sigma,\sigma') \in S^+ \Leftrightarrow \bra{\sigma}H\ket{\sigma'}>0 \text{ and } \sigma,\sigma' \in \mathrm{Supp}(\ket{\psi})
$$
and
$$
(\sigma,\sigma') \in S^- \Leftrightarrow \bra{\sigma}H\ket{\sigma'}\leq0 \text{ or } \sigma,\sigma' \notin \mathrm{Supp}(\ket{\psi}).
$$
In the subsequent sections, we identify a suitable choice for the fixed-node state.

\subsection{Local decomposition of up Laplacian and down Laplacian}
\label{sec:inMA:2}

In this subsection, we define {\it good} and {\it bad} simplices. 
For $\tau \in X_{d-1}$, define a restriction of the boundary operator $\partial_{d,\tau}:C_d(X)\rightarrow C_{d-1}(X)$ by 
$$
\partial_{d,\tau}:= 
\Pi_\tau\partial_d
$$
where $\Pi_\tau:= \ket{[\tau]}\bra{[\tau]}$ and $\partial_d$ is the boundary operator.  
Similarly, for $\tau \in X_{d+1}$, define a restriction of the coboundary operator by
$$
\delta_{d,\tau}:= 
\Pi_\tau\delta_d
$$
where $\delta_d$ is the coboundary operator. 
Then, the following lemma about the decompositions of the up and down Laplacians holds.

\begin{lemma}
For any integer $0\leq d \leq D-1$, the following equations hold:
\begin{enumerate}
    \item         $
        \delta_d= \sum_{\tau \in X_{d+1}} \delta_{d,\tau}
        $
        \item 
         $
        \partial_d= \sum_{\tau \in X_{d}} \partial_{d,\tau}
        $
        \item 
        $
    \Delta_d^{\mathrm{up}}=\sum_{\tau\in X_{d+1}} \Delta_{d,\tau}^{\mathrm{up}}
    $
    \item  $
    \Delta_d^{\mathrm{down}}=\sum_{\tau\in X_{d-1}} \Delta_{d,\tau}^{\mathrm{down}}
    $
\end{enumerate}
    where
    $
    \Delta_{d,\tau}^{\mathrm{up}}:= (\delta_{d,\tau})^*\delta_{d,\tau}
    $
    and 
    $
    \Delta_{d,\tau}^{\mathrm{down}}
    := (\partial_{d,\tau})^*  \partial_{d,\tau}.
    $
\end{lemma}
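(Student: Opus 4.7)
The plan is to derive all four identities from a single observation: once a fixed choice of orientation is made for each simplex, the normalized states $\{\ket{[\tau]}\}_{\tau \in X_k}$ form an orthonormal basis of $C_k(X)$, so the resolution of identity $\sum_{\tau \in X_k} \Pi_\tau = I_{C_k}$ holds on every chain space. Identities (1) and (2) will then be immediate: inserting $I_{C_{d+1}}$ in front of $\delta_d$ yields $\delta_d = \sum_{\tau \in X_{d+1}} \Pi_\tau \delta_d = \sum_{\tau \in X_{d+1}} \delta_{d,\tau}$, and inserting $I_{C_{d-1}}$ in front of $\partial_d$ yields $\partial_d = \sum_{\tau \in X_{d-1}} \Pi_\tau \partial_d = \sum_{\tau \in X_{d-1}} \partial_{d,\tau}$, directly from the definitions $\delta_{d,\tau} := \Pi_\tau \delta_d$ and $\partial_{d,\tau} := \Pi_\tau \partial_d$.

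For (3), I would expand using (1) to write
\begin{equation*}
\Delta_d^{\mathrm{up}} = \delta_d^{*} \delta_d = \sum_{\tau,\tau' \in X_{d+1}} \delta_{d,\tau}^{*}\, \delta_{d,\tau'} = \sum_{\tau,\tau' \in X_{d+1}} \delta_d^{*} \Pi_\tau \Pi_{\tau'} \delta_d,
\end{equation*}
and then collapse the double sum using the orthogonality of the rank-one basis projectors, $\Pi_\tau \Pi_{\tau'} = \delta_{\tau,\tau'} \Pi_\tau$, which leaves $\sum_{\tau \in X_{d+1}} \delta_{d,\tau}^{*} \delta_{d,\tau} = \sum_{\tau \in X_{d+1}} \Delta_{d,\tau}^{\mathrm{up}}$. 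Identity (4) follows by the identical argument with $\partial_d$ replacing $\delta_d$ and $\tau$ ranging over $X_{d-1}$.

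I do not anticipate any genuine obstacle, since the lemma is essentially a bookkeeping identity encoding that each $(d\pm 1)$-simplex contributes additively to the relevant (co)boundary, and that the orthogonality of distinct face/coface projectors kills the cross terms in the Laplacians. The only point that deserves a sentence of care is the convention $\ket{\bar\sigma} = -\ket{\sigma}$: one has to pick a single orientation per simplex when forming the resolution of identity so that $\{\ket{[\tau]}\}$ is genuinely an orthonormal basis, but this is the same orientation convention already in use for $\partial_d$ and $\delta_d$ in Section~\ref{sec:preliminaries:sc}, so no sign ambiguity arises.
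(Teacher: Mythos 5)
Your proposal is correct and follows essentially the same route as the paper: insert the resolution of identity $\sum_\tau \Pi_\tau = I$ on the target chain space to get the (co)boundary decompositions, then kill the cross terms in the Laplacians via $\Pi_\tau \Pi_{\tau'} = \delta_{\tau,\tau'}\Pi_\tau$. You even correct a small index typo in the statement (item 2 should sum over $\tau \in X_{d-1}$, matching your $\sum_{\tau\in X_{d-1}}\partial_{d,\tau}$), so nothing further is needed.
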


\begin{proof}
Because 
$$
\sum_{\tau \in X_{d-1}} \Pi_\tau = I(C_{d-1}(X))
$$
where $I(C_{d-1}(X))$ is the identity on $C_{d-1}(X)$, 
$$
\partial_d= \sum_{\tau \in X_{d-1}} \Pi_\tau \partial_d= \sum_{\tau \in X_{d}} \partial_{d,\tau}. 
$$
Moreover, 
\begin{align*}
    \Delta_{d,\tau}^{\mathrm{down}}= \partial_d^\dagger \partial_d 
    = \sum_{\tau,\tau' \in X_{d-1}}\partial_d^\dagger \Pi_{\tau'}  \Pi_\tau \partial_d
    = \sum_{\tau \in X_{d-1}}\partial_d^\dagger \Pi_{\tau}  \Pi_\tau \partial_d = \sum_{\tau \in X_{d-1}}\partial_{d,\tau}^\dagger  \partial_{d,\tau}. 
\end{align*}
The remaining two claims can be shown similarly. 
\end{proof}

We say that $\ket{\psi}\in C_d(X)$ is a non-negative ground state of $\Delta_{d,\tau}^{\mathrm{down}}$ if $\ket{\psi}\neq 0$ is a (non-zero and) non-negative state s.t. $\Delta_{d,\tau}^{\mathrm{down}}\ket{\psi}=0$.
Similarly, we say that $\ket{\psi}\in C_d(X)$ is a non-negative ground state of $\Delta_{d,\tau}^{\mathrm{up}}$ if $\ket{\psi}\neq0$ is a non-negative state s.t. $\Delta_{d,\tau}^{\mathrm{up}}\ket{\psi}=0$. 
Examples for non-negative local ground states can be seen in Figure~\ref{fig:example_non_negative_gs}.

\begin{figure}
    \centering
    \includegraphics[width=0.55\linewidth]{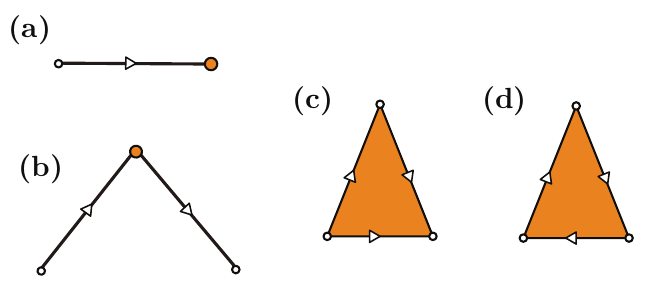}
    \caption{Examples for non-negative ground states in the 1-dimensional case. The arrows on the edge indicate the given orientation.  (a) The down Laplacian for the colored vertex does not have a non-negative ground state, while (b) has a non-negative ground state. The up Laplacian for the colored triangle in (c) has a non-negative ground state, while (d) does not. }
    \label{fig:example_non_negative_gs}
\end{figure}

\begin{definition}[Good and bad simplices]
\label{def:good_bad}
For a given filtration and orientation of $X_d$ that forms an orientable filtration $
X_d^0 \subseteq X_d^1 \subseteq \cdots \subseteq X_d^N
$, we say that $\sigma\in  X_d^N$
is a good simplex if the following two conditions hold:
\begin{itemize}
    \item For any $\tau \in \mathrm{face}(\sigma)$, 
there is a non-negative ground state for $\Delta_{d,\tau}^{\mathrm{down}}$ supported on $\hat{X}_d^i$ for $i$ s.t. $\sigma \in \tilde X_d^i$. 
\item 
For any $\tau\in \mathrm{coface}(\sigma)$, 
there is a non-negative ground state for $\Delta_{d,\tau}^{\mathrm{up}}$ supported on $\bigsqcup_{i=0}^N\hat{X}_d^i$ 
\end{itemize}
Simplices in $X_d^N$ that are not good are called bad simplices. 
\end{definition}

Examples of good and bad simplices can be seen in Figure~\ref{fig:eg_good_bad}. It can be observed that under the uniform orientable filtration, the badness of simplices tells us that ``there is no hole'' with local information.  

\begin{claim}
\label{claim:good_bad}
    For any $\sigma \in \bigsqcup_{i=0}^N\hat{X}_d^i$, we can efficiently decide whether $\sigma$ is good or bad. Then, we can efficiently check the conditions for every simplex in $\mathrm{face}(\sigma)$ and $\mathrm{coface}(\sigma)$.
\end{claim}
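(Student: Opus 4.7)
The plan is to reduce each of the two conditions defining goodness to a polynomially sized sign-presence check, and then to execute those checks by direct enumeration using the classical circuits supplied with the input. The outer loop is cheap: $\sigma$ has exactly $d+1$ faces and at most $n-d-1$ cofaces (one for each additional vertex that extends $\sigma$ to a clique of $G$), so the ``for every simplex in $\mathrm{face}(\sigma)$ and $\mathrm{coface}(\sigma)$'' statement follows immediately once the per-simplex subroutine is in place.

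For the down direction, fix $\tau\in\mathrm{face}(\sigma)$. The restricted boundary $\partial_{d,\tau}$ has rank at most one, with $\partial_{d,\tau}\ket{[\sigma']}=(-1)^{[\sigma':\tau]}w([\sigma':\tau])\ket{[\tau]}$ for $\sigma'\in\mathrm{coface}(\tau)\cap X_d$. A non-negative vector $\sum_{\sigma'}a_{\sigma'}\ket{[\sigma']}$ supported on $\hat X_d^{i}$ (where $i$ is the unique index with $\sigma\in\tilde X_d^{i}$) lies in $\ker\Delta_{d,\tau}^{\mathrm{down}}$ iff $\sum_{\sigma'\in\hat X_d^{i}\cap\mathrm{coface}(\tau)}(-1)^{[\sigma':\tau]}w([\sigma':\tau])\,a_{\sigma'}=0$. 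Because the weights are strictly positive, a nonzero non-negative solution exists exactly when both signs $(-1)^{[\sigma':\tau]}$ appear as $\sigma'$ ranges over $\hat X_d^{i}\cap\mathrm{coface}(\tau)$. Down-orientability of $\hat X_d^{i}$ forces any two distinct members of this set to induce opposite orientations on $\tau$, so the test collapses to $|\hat X_d^{i}\cap\mathrm{coface}(\tau)|\geq 2$.

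The up direction is symmetric. For $\tau\in\mathrm{coface}(\sigma)$, the rank-one operator $\delta_{d,\tau}$ acts as $\ket{[\sigma']}\mapsto(-1)^{[\tau:\sigma']}w([\tau:\sigma'])\ket{[\tau]}$ for $\sigma'\in\mathrm{face}(\tau)$, so a nonzero non-negative state supported on $\bigsqcup_{i}\hat X_d^{i}$ lies in $\ker\Delta_{d,\tau}^{\mathrm{up}}$ iff both signs $(-1)^{[\tau:\sigma']}$ appear among the at most $d+2$ essential faces of $\tau$. Again the question reduces to inspecting a tiny sign pattern.

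It remains to implement these checks efficiently. The set $\hat X_d^{i}\cap\mathrm{coface}(\tau)$ is obtained by iterating over every vertex $v$ of $G$, testing whether $\tau\cup\{v\}$ is a clique, calling the supplied circuit to read off the filtration index of $\tau\cup\{v\}$, and discarding it unless that index equals $i$. Filtering out the internal simplices of Definition~3 is a second polynomial scan over the cofaces of $\tau\cup\{v\}$ and the $d$-simplices of $\tilde X_d^{i}$ contained in them. Signs $(-1)^{[\sigma':\tau]}$ and $(-1)^{[\tau:\sigma']}$ are computed from the orientations returned by the orientation circuit. Combining these subroutines yields a polynomial-time decision procedure for goodness of $\sigma$, which, applied to each simplex in $\mathrm{face}(\sigma)\cup\mathrm{coface}(\sigma)$, proves the second statement. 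The only substantive ingredient is the rank-one reduction above, which converts a global kernel query into a finite sign-presence question; once that is in hand, the rest is bookkeeping.
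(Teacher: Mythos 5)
Your proposal is correct, and it actually supplies more than the paper does: the paper's own proof of this claim is a two-line fragment ("compute the index $i$, list all the adjacent simplices. Then,") that breaks off before giving any argument. Your outer structure (compute the filtration index, enumerate the $d+1$ faces and the at most $n-d-1$ cofaces via single-vertex extensions, query the orientation and filtration circuits) is exactly the intended one, and the substantive missing step — that $\Delta_{d,\tau}^{\mathrm{down}}=\partial_{d,\tau}^{\dagger}\partial_{d,\tau}$ and $\Delta_{d,\tau}^{\mathrm{up}}=\delta_{d,\tau}^{\dagger}\delta_{d,\tau}$ are rank-one, so the existence of a nonzero non-negative kernel vector is equivalent to both signs appearing in a list of at most $d+2$ (respectively, degree-bounded) incidence signs — is the right reduction and is argued correctly, including the use of down-orientability to collapse the down-test to a degree count. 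One caveat worth making explicit: you tacitly read "non-negative ground state supported on $\hat X_d^i$" as "supported on $\hat X_d^i\cap\mathrm{coface}(\tau)$" (and analogously for the up case); under the literal wording any simplex of $\hat X_d^i$ not containing $\tau$ would give a vacuous kernel vector, and deciding whether such a simplex exists is not obviously local. Your restricted reading is the only one consistent with Figure~\ref{fig:example_non_negative_gs} and with the claim being efficiently decidable, so you should state it as an interpretive assumption rather than leave it implicit. With that said, your internality check and the sign/weight bookkeeping are all polynomial as claimed, and the proof is complete.
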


\begin{proof}
    We first compute the index $i$ s.t. $\sigma \in \hat X_d^i$. and list all the adjacent simplices. 
    Then, 
\end{proof}

\subsection{Connected components of good simplices}
\label{sec:inMA:3}

In this section, we introduce $\ket{\phi_{\text{good}}^{\sigma_0}}$, which will be used as a ``fixed-node''. 
Let $$
X^0_d \subseteq X^1_d \subseteq \cdots \subseteq X^N_d
$$ be a filtration and let us suppose that $X_d$ is oriented such that the filtration forms a uniform oriented filtration. 
Let $S_{\text{good}}(X_d)$ be the set of good simplices in $X_d$. 

Let 
$\sigma_0$ be an arbitrary simplex in $\hat X^0_d$. 
Then, let $S_{\text{good}}^{\sigma_0}\subseteq S_{\text{good}}(X_d)$ be the set of good simplices connected to $\sigma_0$. 
Here, we say that two good simplices $\sigma,\sigma'$ are connected if there is a sequence of connected simplices 
$\sigma \sim \eta_1 \sim \eta_2 \sim \cdots \sim \eta_m \sim \sigma'$ 
where $\eta_1,...,\eta_m \in S_{\text{good}}(X_d)$ and $\sigma\sim \eta$ means that $\sigma$ and $\eta$ have a common face.
Then, let 
$X_d^{\sigma_0}(0)$ be the set of good simplices in $\hat X_d^0$ connected to $\sigma_0$. (Connected only through elements in $\hat X_d^0$, i.e., there is a sequence $\sigma \sim \eta_1 \sim \eta_2 \sim \cdots \sim \eta_m \sim \sigma'$ s.t. $\eta_1,...,\eta_m \in \hat X_d^0$.)
Now, $X_d^{\sigma_0}(0)$ induces a filtration
    $$
    X_d^{\sigma_0}(0) \subseteq X_d^{\sigma_0}(1) \subseteq X_d^{\sigma_0}(2) \subseteq \cdots \subseteq X_d^{\sigma_0}(M) 
    $$
    where at each step $t+1$, $d$-simplices in $\bigsqcup_{i=1}^N\hat{X}_d^i$
    that share cofaces with simplices in $X_d^{\sigma_0}(t)$ are added to form $X_d^{\sigma_0}(t+1)$. 
    Formally:
    $$
    \hat X_d^{\sigma_0}(i):= X_d^{\sigma_0}(i)\backslash X_d^{\sigma_0}(i-1) 
    $$
    and
    $$
    \hat X_d^{\sigma_0,j}(i):= \{\sigma \in \tilde X_d^{\sigma_0}(i) : \sigma \in \hat X_d^j \}.
    $$
    The index $j$ is required above because there can be a ``branching'' into several subsets with different indices in the filtration. 

We define  $\ket{\phi_{\text{good}}^{\sigma_0}}$ as 
$$
\Ket{\phi_{\text{good}}^{\sigma_0}}:= 
\sum_{i=0}^M \ket{c^{\sigma_0}(i)},
$$
where each
$\ket{c^{\sigma_0}(i)}$ is constructed as 
$$\ket{c^{\sigma_0}(0)}=\sum_{\sigma\in \hat X_d^{\sigma_0}(0)} \ket{\sigma} 
    \left(=\sum_{\sigma\in \hat X_d^{\sigma_0}(0)} w(\sigma)\ket{[\sigma]}\right) $$
    with
    $$
    \ket{c^{\sigma_0}(i)}
    = \sum_j \sum_{\sigma \in \hat X_d^{\sigma_0,j}(i)} \frac{|\hat X_d^{\sigma_0}(0)|}{|\hat X_d^{\sigma_0,j}(i)|} \ket{\sigma}
    \left(=
    \sum_j \sum_{\sigma \in \hat X_d^{\sigma_0,j}(i)} \frac{w(\sigma)|\hat X_d^{\sigma_0}(0)|}{|\hat X_d^{\sigma_0,j}(i)|} \ket{[\sigma]}\right).
    $$
    Note that $\ket{\phi_{\text{good}}^{\sigma_0}}$ is a non-normalized state.

    Therefore, $\ket{\phi_{\text{good}}^{\sigma_0}}$ is composed of non-negative terms $\left\{\sum_{\sigma \in \hat X_d^{\sigma_0,j}(i)} \frac{|\hat X_d^{\sigma_0}(0)|}{|\hat X_d^{\sigma_0,j}(i)|} \ket{\sigma}\right\}_{i,j}$.
We denote 
$$
S^{\sigma_0}_{\text{good}}:= \mathrm{Supp}(\phi_{\text{good}}^{\sigma_0})=X_d^{\sigma_0}(M) 
$$
where $\mathrm{Supp}(\phi_{\text{good}}^{\sigma_0})$ is the support of $\phi_{\text{good}}^{\sigma_0}$ defined in eq.~\eqref{eq:def:support}.

An example of a state that can be constructed in this way is shown in Figure~\ref{fig:harmonics_eg}. In this example, there is no branching, and there are also no bad simplices. The coefficient for the cycle comes from the relative number of simplices between each of the layers. 
\begin{figure}
    \centering
    \includegraphics[width=1\linewidth]{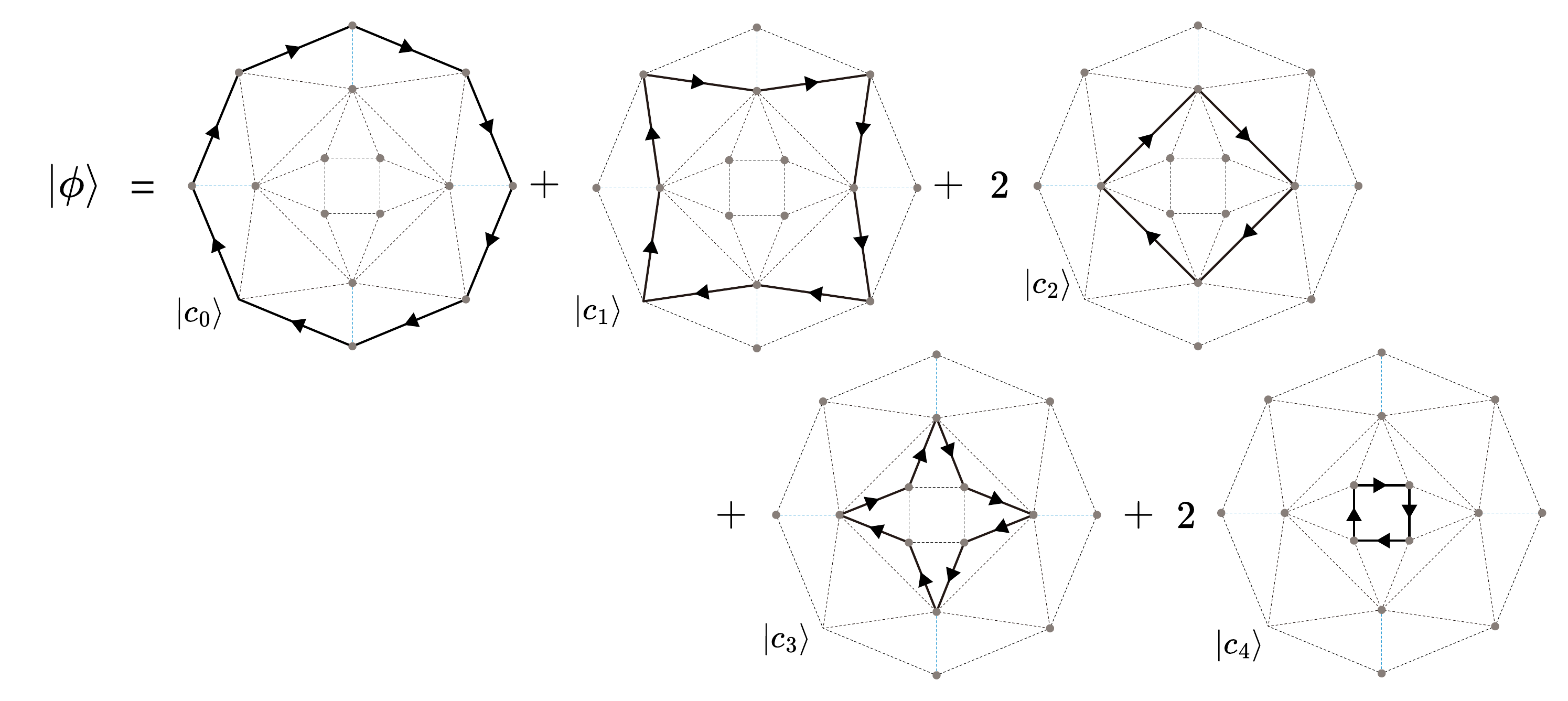}
    \caption{An example of a harmonic state associated with the uniform orientable filtration. Here, we are considering an unweighted setting. The coefficients are determined by the contributions of coboundaries on the shared cofaces, which can be calculated with eq.~\eqref{eq:relative_amplitude}.}
    \label{fig:harmonics_eg}
\end{figure}

We can observe the following properties of $\ket{\phi_{\text{good}}^{\sigma_0}}$: 
first, in each of $\sum_{\sigma \in \hat X_d^{\sigma_0,j}(i)}\ket{\sigma}$, the boundaries vanish on faces shared between simplices. (If $\sum_{\sigma \in \hat X_d^{\sigma_0,j}(i)}\ket{\sigma}$ is a cycle, the boundary completely vanishes.)
 Second, for any two adjacent components (i.e., two components whose simplices in the support share cofaces) 
    $$\sum_{\sigma \in \hat X_d^{\sigma_0,j}(i)} \frac{|\hat X_d^{\sigma_0}(0)|}{|\hat X_d^{\sigma_0,j}(i)|} \ket{\sigma} \ \
    \text{ and  } 
    \sum_{\sigma \in \hat X_d^{\sigma_0,j'}(i+1)} \frac{|\hat X_d^{\sigma_0}(0)|}{|\hat X_d^{\sigma_0,j'}(i+1)|} \ket{\sigma},
    $$ the coboundary vanishes on their common cofaces. 
    It also holds that 
    $$
    \frac{|\hat X_d^{\sigma_0,j'}(i+1)|}{|\hat X_d^{\sigma_0,j}(i)|}=\frac{f^{j',j}}{f^{j,j'}}
    $$
    by the condition on the degree in the uniform orientable filtration. 
Therefore, for two simplices $\sigma,\sigma' \in S^{\sigma_0}_{\text{good}}$ that share a coface s.t. $\sigma \in \hat X_d^i$ and $\sigma' \in \hat X_d^j$ for some $i\neq j$, it holds that
$$
\frac{\braket{\sigma'|\phi_{\text{good}}^{\sigma_0}}}{\braket{\sigma|\phi_{\text{good}}^{\sigma_0}}}=\frac{f^{i,j}}{f^{j,i}}
$$
and therefore, 
\begin{equation}
\label{eq:relative_amplitude}
\frac{\braket{[\sigma']|\phi_{\text{good}}^{\sigma_0}}}{\braket{[\sigma]|\phi_{\text{good}}^{\sigma_0}}}=\frac{f^{i,j}w(\sigma)}{f^{j,i}w(\sigma')}\in poly(n).
\end{equation}
Using this relationship, we can compute the relative amplitude of adjacent simplices efficiently. 

Next, we show that the relative amplitudes for adjacent simplices are bounded by $poly(n)$. 
\begin{claim}
\label{claim:relative_amplitude}
    For any $\sigma \sim \sigma'$ s.t. $\sigma \in \hat X_d^i$ and $\sigma' \in \hat X_d^j$ for some $i,j \in [N]$,  $$\frac{\braket{[\sigma']|\phi_{\text{good}}^{\sigma_0}}}{\braket{[\sigma]|\phi_{\text{good}}^{\sigma_0}}} \in poly(n)$$
    and it can be efficiently evaluated. 
\end{claim}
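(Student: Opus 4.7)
The plan is to reduce the ratio to a product of elementary pieces controlled by the uniform orientable filtration. Unfolding the construction of $\phi_{\text{good}}^{\sigma_0}$ from Section~\ref{sec:inMA:3}, for $\sigma\in\hat X_d^{\sigma_0,i}(t_\sigma)$ and $\sigma'\in\hat X_d^{\sigma_0,j}(t_{\sigma'})$ one has
\[
\frac{\braket{[\sigma']|\phi_{\text{good}}^{\sigma_0}}}{\braket{[\sigma]|\phi_{\text{good}}^{\sigma_0}}}=\frac{w(\sigma')}{w(\sigma)}\cdot\frac{|\hat X_d^{\sigma_0,i}(t_\sigma)|}{|\hat X_d^{\sigma_0,j}(t_{\sigma'})|},
\]
since the factor $|\hat X_d^{\sigma_0}(0)|$ cancels. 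The vertex weights are already $poly(n)$-bounded by the problem setup, so the task reduces to controlling the cardinality ratio and arguing that the step indices $t_\sigma,t_{\sigma'}$ can be obtained from the filtration oracle by a local inspection.

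First I would dispatch the case in which $\sigma,\sigma'$ share a common $(d+1)$-coface. Here the construction of $\phi_{\text{good}}^{\sigma_0}$ forces $|t_\sigma-t_{\sigma'}|\le 1$, so the identity $|\hat X_d^{\sigma_0,j'}(i+1)|/|\hat X_d^{\sigma_0,j}(i)|=f^{j',j}/f^{j,j'}$ derived immediately before the claim yields the stated expression $\frac{f^{i,j}\,w(\sigma)}{f^{j,i}\,w(\sigma')}$. Since $f^{i,j},f^{j,i}\le d+2$, and both can be read off by inspecting the faces of one shared coface with the orientation and filtration oracles, this case is $poly(n)$-bounded and efficiently computable.

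Second, when $\sigma\sim\sigma'$ share only a face $\tau\in X_{d-1}$ and no common coface, I would exhibit a short chain $\sigma=\eta_0,\eta_1,\dots,\eta_\ell=\sigma'$ of good essential simplices in which each consecutive pair shares a coface, and then telescope the previous case along the chain. The idea is to stay inside $\mathrm{coface}(\tau)\cap S^{\sigma_0}_{\text{good}}$: the down-orientability of each $\hat X_d^k$ caps the number of essential simplices of that layer that have $\tau$ as a face at two, while the uniformity of the oriented filtration controls which layers the cofaces of $\tau$ distribute into. Consequently a chain of length at most the number of filtration layers meeting at $\tau$, i.e.\ $O(N)=poly(n)$, suffices, and each telescoped factor is $\le (d+2)\cdot poly(n)$, keeping the final product polynomial and locally computable.

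The main obstacle is this second case: one must certify the existence and polynomial length of the coface chain, and in particular that every intermediate simplex remains good and essential, even when $\sigma$ and $\sigma'$ lie in non-adjacent layers of the filtration. I expect this to follow from a careful case analysis combining the down-orientability inside each $\hat X_d^k$ with the cross-layer structure imposed by $X_{d+1}^{i\cap j}$ in Definition~\ref{def:uniform_filtration}, rather than any fundamentally new ingredient beyond what the uniform oriented filtration already provides.
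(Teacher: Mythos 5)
Your proposal follows essentially the same route as the paper: the shared-coface case is dispatched via the identity $\frac{\braket{[\sigma']|\phi_{\text{good}}^{\sigma_0}}}{\braket{[\sigma]|\phi_{\text{good}}^{\sigma_0}}}=\frac{f^{i,j}w(\sigma)}{f^{j,i}w(\sigma')}$ from eq.~\eqref{eq:relative_amplitude}, and the face-only case by telescoping the ratio along intermediate simplices sharing cofaces. The paper's own proof is in fact terser than yours---it exhibits only a single intermediate $\sigma''$ and does not address the chain-existence, chain-length, or goodness-of-intermediates issues you correctly flag as the main obstacle---so your write-up is, if anything, a more explicit rendering of the same argument.
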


\begin{proof}
    For the case $\sigma,\sigma'$ shares a coface, we have already checked the claim in eq.~\eqref{eq:relative_amplitude}.
    There are cases when $\sigma \in \hat X_d^i$ and $\sigma' \in \hat X_d^j$ do not share a coface but share a face. 
For such cases, there are intermediate simplices 
$
\sigma \sim \cdots \sim  \sigma'
$
where intermediate simplices are either (1) in $\hat X_d^i$ or  $\hat X_d^j$ or (2) in in $\hat X_d^i$, $\hat X_d^k$ or  $\hat X_d^j$ for some $k \neq i,j$. 

In case (1), the situation is almost the same as eq.~\eqref{eq:relative_amplitude} because the amplitudes in the same subset $\hat X_d^i$ are the same. 

In case (2), 
$$\frac{\braket{[\sigma']|\phi_{\text{good}}^{\sigma_0}}}{\braket{[\sigma]|\phi_{\text{good}}^{\sigma_0}}}
= \frac{\braket{[\sigma']|\phi_{\text{good}}^{\sigma_0}}}{\braket{[\sigma'']|\phi_{\text{good}}^{\sigma_0}}} 
\cdot \frac{\braket{[\sigma'']|\phi_{\text{good}}^{\sigma_0}}}{\braket{[\sigma]|\phi_{\text{good}}^{\sigma_0}}}
$$
where $\sigma'' \in \hat X_d^k$. 
In this case, it is clear that the relative amplitude is bounded by $poly(n)$ and efficiently computable. 
\end{proof}

We show that in YES instances, $\ket{\phi_{\text{good}}^{\sigma_0}}$ is a superposition of homologous cycles that represent the same hole, and it is indeed a harmonic state in $\ker \Delta_d$. 

\begin{claim}
\label{claim:good_supp}
In the yes instance, 
there exists $\sigma_0$ s.t. 
$\ket{\phi_{\text{good}}^{\sigma_0}}\in \ker(\Delta_d)$. 
Moreover, there is no bad simplices in $S_{\text{good}}^{\sigma_0}$ and simplices that are  adjacent to simplices in $S_{\text{good}}^{\sigma_0}$. 
\end{claim}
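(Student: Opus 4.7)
The plan is to let $\sigma_0$ be any simplex in the support of the non-negative homologous cycle $\ket{c}\subset\hat X_d^0$ promised in the YES instance, and to show that the layer-by-layer assembly of $\ket{\phi_{\text{good}}^{\sigma_0}}$ produces a non-negative harmonic representative of $[\ket{c}]$. By the YES promise, the harmonic representative $\ket{\phi}=\ket{c}+\ket{b}$ with $\ket{b}\in\mathrm{Im}\,\partial_{d+1}$ is supported only on $X_d^N$, and this support bound will be used to guarantee that the construction terminates within the filtration.

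To check $\partial_d\ket{\phi_{\text{good}}^{\sigma_0}}=0$, I would argue inside each branch $\hat X_d^{\sigma_0,j}(i)$: the coefficients are constant by construction, and the down-orientability of $\hat X_d^j$ in Definition~\ref{def:orientable_filtration} makes contributions on every shared $(d-1)$-face cancel pairwise. Remaining down-boundary terms would live on faces not bounded by two adjacent simplices in the same branch; non-negativity of the representative together with $\partial_d\ket{c^{\sigma_0}(0)}=0$ (inherited from $\partial_d\ket{c}=0$) forces these to vanish through the layers.

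To check $\delta_d\ket{\phi_{\text{good}}^{\sigma_0}}=0$, the critical input is equation~\eqref{eq:relative_amplitude}: for any coface $\tau\in X_{d+1}^{i\cap j}$ the coefficient ratio of simplices on the two sides of $\tau$ equals exactly $f^{i,j}w(\sigma)/(f^{j,i}w(\sigma'))$, and the up-orientability axiom of Definition~\ref{def:orientable_filtration} makes the $f^{i,j}$ contributions from one side and the $f^{j,i}$ contributions from the other enter $\delta_d$ with opposite signs, so they cancel exactly on $\tau$. Uniform intra-layer weights (Definition~\ref{def:uniform_filtration}) are essential for this cancellation to leave no residue. Because the harmonic $\ket{\phi}$ is supported on the finite set $X_d^N$, no new layer needs to be appended beyond $X_d^N$, and the recursion terminates at some finite depth $M$; the resulting $\ket{\phi_{\text{good}}^{\sigma_0}}$ is then in $\ker\Delta_d^{\mathrm{up}}\cap\ker\Delta_d^{\mathrm{down}}=\ker\Delta_d$.

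For the last part of the claim, every $\sigma\in S_{\text{good}}^{\sigma_0}=\mathrm{Supp}(\ket{\phi_{\text{good}}^{\sigma_0}})$ is good by definition of $S_{\text{good}}^{\sigma_0}$, so it suffices to handle neighbors. For any $\tau\in\mathrm{face}(\sigma)$, the restriction of $\ket{\phi_{\text{good}}^{\sigma_0}}$ to those $d$-simplices of $\hat X_d^i$ (with $\sigma\in\hat X_d^i$) containing $\tau$ is a non-negative vector annihilated by $\Delta_{d,\tau}^{\mathrm{down}}$, and this same vector certifies Definition~\ref{def:good_bad} for \emph{every} $d$-simplex in $\hat X_d^i$ containing $\tau$, not only for $\sigma$; the analogous construction for $\tau\in\mathrm{coface}(\sigma)$ furnishes a non-negative element of $\ker\Delta_{d,\tau}^{\mathrm{up}}$ supported on $\bigsqcup_i\hat X_d^i$ from $\delta_d\ket{\phi_{\text{good}}^{\sigma_0}}=0$. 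Hence every simplex adjacent to $S_{\text{good}}^{\sigma_0}$ is good. The main obstacle to formalizing this plan is making the propagation argument in the third paragraph tight when the filtration \emph{branches} into several $\hat X_d^{\sigma_0,j}(i)$: one must verify that the coefficient-matching recursion is globally consistent (no simplex reached by two different branching paths gets assigned two different amplitudes) and that it cannot force the construction outside $X_d^N$, which is where the promise on $\mathrm{Supp}(\ket{\phi})$ must be invoked explicitly.
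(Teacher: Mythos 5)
Your construction of $\ket{\phi_{\text{good}}^{\sigma_0}}$ and the verification that it is harmonic follow the paper's route: pick $\sigma_0$ on the promised non-negative homologous cycle in $\hat X_d^0$, observe that each layer $\ket{c^{\sigma_0}(i)}$ is a cycle by down-orientability, and use the uniformity constants $f^{i,j}$ together with eq.~\eqref{eq:relative_amplitude} to cancel the coboundary contributions on the cofaces joining consecutive layers. The branching-consistency caveat you raise at the end is real, but the paper leaves it at the same level of detail, so this half of your argument is fine.

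The last paragraph, however, has a genuine gap. To show that a simplex $\sigma'$ adjacent to $S_{\text{good}}^{\sigma_0}$ is good, Definition~\ref{def:good_bad} requires a non-negative local ground state for \emph{every} $\tau\in\mathrm{face}(\sigma')$ and \emph{every} $\tau\in\mathrm{coface}(\sigma')$. Your restriction of $\ket{\phi_{\text{good}}^{\sigma_0}}$ only produces certificates for those $\tau$ that are faces or cofaces of some simplex in the support; an adjacent $\sigma'$ shares one such $\tau$ with the support but has $d$ other faces (and possibly cofaces) that need not touch $\mathrm{Supp}(\ket{\phi_{\text{good}}^{\sigma_0}})$ at all, and for these your argument provides nothing. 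The paper avoids this by arguing in the opposite direction, by contradiction: if some adjacent simplex were bad, the badness (e.g.\ a coface configuration of the ``non-cocycle'' type of Figure~\ref{fig:eg_good_bad}(b)) would force one of the layer cycles $\sum_{\sigma\in\hat X_d^{\sigma_0,j}(i)}\frac{|\hat X_d^{\sigma_0}(0)|}{|\hat X_d^{\sigma_0,j}(i)|}\ket{\sigma}$ to be a boundary, and since consecutive layers differ by boundaries this would make $\ket{c^{\sigma_0}(0)}$ itself a boundary, contradicting the promise that the cycle is homologous. You should either adopt that contradiction argument or supply certificates for the faces and cofaces of $\sigma'$ that do not meet the support; as written, the direct certification does not close.
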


\begin{proof}
    In the YES instances, there is a non-negative homologous cycle on $\hat X_d^0$. 
    Therefore, by choosing $\sigma_0$ supported on the cycle, $\ket{c^{\sigma_0}}$ will be a cycle. 
    By the condition of the uniform orientable filtration and by the construction, $\{\ket{c^{\sigma_0}(i)}\}_i$ are also cycles. 
    Moreover, $\ket{\phi_{\text{good}}^{\sigma_0}}$ is constructed such that the contribution of the coboundary on the shared cofaces between $\ket{c^{\sigma_0}(i)}$ and $\ket{c^{\sigma_0}(i+1)}$ cancel out. 
    If there is a bad simplex in the adjacent simplices of $\mathrm{Supp}(\ket{\phi_{\text{good}}^{\sigma_0}})$, then there is a cycle  
    $$\sum_{\sigma \in \hat X_d^{\sigma_0,j}(i)} \frac{|\hat X_d^{\sigma_0}(0)|}{|\hat X_d^{\sigma_0,j}(i)|} \ket{\sigma}$$
    that appears as a boundary. This implies that $\ket{c^{\sigma_0}}$ is also a boundary, which is a contradiction. 
\end{proof}

\subsection{Fixed-node Laplacian and the Markov transition Matrix}
\label{sec:transition_matrix}
Define the fixed-node Laplacian relative to the state $\ket{\phi_{\text{good}}^{\sigma_0}}$, which is determined by $\sigma_0\in \hat X_d^0$, by 
$$\bra{[\sigma]}F^{\sigma_0}_d\ket{[\sigma']}:=
\begin{cases}
    0  &\text{ if } (\sigma,\sigma')\in S^+,\\ 
    \bra{[\sigma]}\Delta_d\ket{[\sigma']}&\text{ if } (\sigma,\sigma')\in S^-,\\ 
    \bra{[\sigma]}\Delta_d\ket{[\sigma]}+ \sum_{\sigma'':(\sigma,\sigma'')\in S^+} \bra{[\sigma]}\Delta_d\ket{[\sigma'']} \frac{\Braket{[\sigma'']|\phi_{\text{good}}^{\sigma_0}}}{\Braket{[\sigma]|\phi_{\text{good}}^{\sigma_0}}} &\text{ if } \sigma=\sigma'.\\ 
\end{cases}
$$
Because $\ket{\phi_{\text{good}}^{\sigma_0}}$ is a non-negative state, for $\sigma,\sigma' \in \mathrm{Supp}(\ket{\phi_{\text{good}}})$
$$
(\sigma,\sigma') \in S^+ \Leftrightarrow \bra{\sigma}\Delta_d\ket{\sigma'}>0
$$
and 
$$
(\sigma,\sigma') \in S^- \Leftrightarrow \bra{\sigma}\Delta_d\ket{\sigma'}\leq0.
$$

Recall the Laplacian matrix element
\begin{align*}
\bra{[\sigma']}\Delta_{d}\ket{[\sigma]}=
    \left\{
    \begin{array}{ll}
    \sum_{u\in \mathrm{up}(\sigma)} w(u)^2  +  \sum_{v\in \sigma} w(v)^2  & \text{\ if\ } \sigma=\sigma', \\
    w(v_\sigma)w(v_{\sigma'}) & 
    \begin{array}{l}
    \text{if } \bar{\sigma}\sim_{\downarrow}\sigma' \text{ and } \sigma, \sigma' \text{ do not share a coface}, \\
    \end{array}
     \\
    -w(v_\sigma)w(v_{\sigma'})  &     
    \begin{array}{l}
    \text{if } {\sigma}\sim_{\downarrow}\sigma' \text{ and } \sigma, \sigma' \text{ do not share a coface} ,
    \end{array}\\
    0 & \text{\ otherwise},
    \end{array}
    \right.
\end{align*}
where $\bar{\sigma}$ is the same simplex with $\sigma$ with the opposite orientation.
It should be noted that the Laplacian element will be zero if $\sigma$ and $\sigma'$ share a coface.

We define a Markov transition matrix $P^{\sigma_0}$ for a given $\sigma_0$ on the state space $S^{\sigma_0}_{\text{good}}$ whose matrix elements are 
$$
P_{\sigma\rightarrow \sigma'}^{\sigma_0}= 
\frac{\Braket{[\sigma']|\phi_{\text{good}}^{\sigma_0}}}{\Braket{[\sigma]|\phi_{\text{good}}^{\sigma_0}}}
 \braket{[\sigma'] | I-\beta F_d^{\sigma_0}|[\sigma]}.
$$
Although $P^{\sigma_0}$ is dependent on $\sigma_0$ and $\beta$, we hide the dependency on $\beta$ in the expression $P^{\sigma_0}$ for simplicity. 
We confirm several properties of this matrix. 


\begin{claim}
\label{claim:beta}
There is $\beta>1/poly(n)$ s.t.  
   $P_{\sigma\rightarrow \sigma'}^{\sigma_0}\geq 0$ for all $\sigma \in \mathrm{Supp}(\ket{\phi_{\text{good}}^{\sigma_0}}$). 
\end{claim}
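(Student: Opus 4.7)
The plan is to separately verify non-negativity of the off-diagonal and diagonal entries of $P^{\sigma_0}$. For the off-diagonals I expect non-negativity to hold for \emph{every} $\beta>0$, essentially by design of the fixed-node construction; for the diagonal, $\beta$ must be chosen small enough to compensate $\bra{[\sigma]}F_d^{\sigma_0}\ket{[\sigma]}$, and the real task becomes bounding this quantity by $poly(n)$ uniformly in $\sigma$.

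First I would handle the off-diagonals. Fix $\sigma\in S^{\sigma_0}_{\text{good}}$ and $\sigma'\neq\sigma$. If $\sigma'\notin S^{\sigma_0}_{\text{good}}=\mathrm{Supp}(\ket{\phi_{\text{good}}^{\sigma_0}})$, the ratio prefactor $\braket{[\sigma']|\phi_{\text{good}}^{\sigma_0}}/\braket{[\sigma]|\phi_{\text{good}}^{\sigma_0}}$ vanishes and $P^{\sigma_0}_{\sigma\to\sigma'}=0$. Otherwise both endpoints lie in the strictly positive support of the non-negative state $\ket{\phi_{\text{good}}^{\sigma_0}}$, so I split on $S^\pm$: when $(\sigma,\sigma')\in S^+$, the fixed-node definition nulls $\bra{[\sigma]}F_d^{\sigma_0}\ket{[\sigma']}$ outright; when $(\sigma,\sigma')\in S^-$, the defining inequality combined with positivity of both amplitudes forces $\bra{[\sigma]}\Delta_d\ket{[\sigma']}\leq 0$, so $-\beta\bra{[\sigma]}F_d^{\sigma_0}\ket{[\sigma']}\geq 0$ for any $\beta>0$. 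In either case the prefactor is non-negative and the off-diagonal part of $P^{\sigma_0}$ is $\geq 0$.

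For the diagonal, non-negativity reduces to $\beta\leq 1/\bra{[\sigma]}F_d^{\sigma_0}\ket{[\sigma]}$. Expanding,
\[
\bra{[\sigma]}F_d^{\sigma_0}\ket{[\sigma]}=\bra{[\sigma]}\Delta_d\ket{[\sigma]}+\sum_{\sigma'':(\sigma,\sigma'')\in S^+}\bra{[\sigma]}\Delta_d\ket{[\sigma'']}\,\frac{\braket{[\sigma'']|\phi_{\text{good}}^{\sigma_0}}}{\braket{[\sigma]|\phi_{\text{good}}^{\sigma_0}}}.
\]
The Laplacian diagonal $\sum_{u\in\mathrm{up}(\sigma)}w(u)^2+\sum_{v\in\sigma}w(v)^2$ is $poly(n)$ under polynomially bounded vertex weights and vertex count. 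The correction sum ranges over at most $poly(n)$ partners $\sigma''$ (each $d$-simplex having $O(n)$ face- and coface-adjacent simplices), each Laplacian entry has magnitude $w(v_\sigma)w(v_{\sigma''})\in poly(n)$, and by Claim~\ref{claim:relative_amplitude} the amplitude ratio is $poly(n)$ as well. Summing gives $\bra{[\sigma]}F_d^{\sigma_0}\ket{[\sigma]}\leq poly(n)$ uniformly in $\sigma$, so setting $\beta$ to the reciprocal of the supremum yields $\beta>1/poly(n)$ with $P^{\sigma_0}_{\sigma\to\sigma}\geq 0$.

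The main obstacle I anticipate is controlling the correction term in the fixed-node diagonal: without Claim~\ref{claim:relative_amplitude} the reweighting by $\braket{[\sigma'']|\phi_{\text{good}}^{\sigma_0}}/\braket{[\sigma]|\phi_{\text{good}}^{\sigma_0}}$ could inflate this sum superpolynomially, pushing $\beta$ below any inverse-polynomial threshold. The uniform orientable filtration enters precisely here because it pins inter-layer amplitude ratios to $f^{i,j}/f^{j,i}$ times a polynomial weight ratio; everything else reduces to sign bookkeeping that is baked into the fixed-node construction.
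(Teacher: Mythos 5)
Your proposal is correct and follows essentially the same route as the paper's proof: off-diagonal non-negativity comes for free from the fixed-node construction together with the non-negativity of $\ket{\phi_{\text{good}}^{\sigma_0}}$ (which makes $S^-$ pairs have $\bra{[\sigma]}\Delta_d\ket{[\sigma']}\leq 0$), and the diagonal is handled by bounding $\bra{[\sigma]}F_d^{\sigma_0}\ket{[\sigma]}$ by $poly(n)$ via Claim~\ref{claim:relative_amplitude} and then taking $\beta$ inverse-polynomially small. You simply spell out the sign bookkeeping that the paper dismisses as ``clear.''
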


\begin{proof}
    For $\sigma\neq\sigma' \in \mathrm{Supp}(\ket{\phi_{\text{good}}^{\sigma_0}})$, 
    it is clear that $\bra{\sigma}F_d^{\sigma_0}\ket{\sigma'}\leq 0$. 
    It remains to prove 
    $\bra{[\sigma]} F^{\sigma_0}_d\ket{[\sigma]} \in poly(n)$, which follows from  the fact that 
    $\frac{\Braket{[\sigma'']|\phi_{\text{good}}^{\sigma_0}}}{\Braket{[\sigma]|\phi_{\text{good}}^{\sigma_0}}} \in poly(n)
    $
    for any $\sigma\in \mathrm{Supp}(\ket{\phi_{\text{good}}^{\sigma_0}})$ and $\sigma\sim \sigma'' \in \mathrm{Supp}(\ket{\phi_{\text{good}}^{\sigma_0}})$. Therefore we can choose $\beta > 1/poly(n)$ so that $P_{\sigma\rightarrow \sigma'}^{\sigma_0}\geq 0$. 
\end{proof}

\begin{claim}
    $\sum_{\sigma' \in X_d}P_{\sigma\rightarrow \sigma'}^{\sigma_0}=1$ for all  $\sigma \in \mathrm{Supp}(\ket{\phi_{\text{good}}^{\sigma_0}})$. 
\end{claim}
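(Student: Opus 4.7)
The plan is to reduce the normalization claim to a local harmonicity statement, namely $\bra{[\sigma]}\Delta_d\ket{\phi_{\text{good}}^{\sigma_0}}=0$ for every $\sigma\in S_{\text{good}}^{\sigma_0}$. The reduction proceeds via a short algebraic manipulation. Since $\ket{\phi_{\text{good}}^{\sigma_0}}$ is non-negative and vanishes outside $S_{\text{good}}^{\sigma_0}$, I may replace the sum over $X_d$ with a sum over the support and obtain
\begin{align*}
\sum_{\sigma'\in X_d}P_{\sigma\to\sigma'}^{\sigma_0}
&=\frac{1}{\braket{[\sigma]|\phi_{\text{good}}^{\sigma_0}}}\sum_{\sigma'}\braket{[\sigma']|\phi_{\text{good}}^{\sigma_0}}\bra{[\sigma']}(I-\beta F_d^{\sigma_0})\ket{[\sigma]}\\
&=\frac{1}{\braket{[\sigma]|\phi_{\text{good}}^{\sigma_0}}}\bra{\phi_{\text{good}}^{\sigma_0}}(I-\beta F_d^{\sigma_0})\ket{[\sigma]}\\
&=1-\frac{\beta}{\braket{[\sigma]|\phi_{\text{good}}^{\sigma_0}}}\bra{[\sigma]}F_d^{\sigma_0}\ket{\phi_{\text{good}}^{\sigma_0}},
\end{align*}
where the last equality uses the Hermiticity of $F_d^{\sigma_0}$. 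By Lemma~\ref{lemma:fixed_node_Hamiltonian} applied to $H=\Delta_d$ and $\psi=\phi_{\text{good}}^{\sigma_0}$, one has $F_d^{\sigma_0}\ket{\phi_{\text{good}}^{\sigma_0}}=\Delta_d\ket{\phi_{\text{good}}^{\sigma_0}}$. Hence proving the claim amounts to showing $\bra{[\sigma]}\Delta_d\ket{\phi_{\text{good}}^{\sigma_0}}=0$ for each $\sigma\in S_{\text{good}}^{\sigma_0}$.

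For the local harmonicity, I expand $\Delta_d=\sum_{\tau\in\mathrm{face}(\sigma)}\Delta_{d,\tau}^{\mathrm{down}}+\sum_{\tau\in\mathrm{coface}(\sigma)}\Delta_{d,\tau}^{\mathrm{up}}$ when the operator is evaluated at $\sigma$. For the down-part, since $\sigma$ is good and sits in some $\hat X_d^i$, each face $\tau$ of $\sigma$ admits a non-negative ground state of $\Delta_{d,\tau}^{\mathrm{down}}$ supported on $\hat X_d^i$; the down-orientability and degree-2 property of $\hat X_d^i$ imply that the simplices of $\hat X_d^i$ containing $\tau$ come in pairs with opposite induced orientations. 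In $\ket{\phi_{\text{good}}^{\sigma_0}}$ these paired simplices carry the same amplitude (they lie in the same $\hat X_d^{\sigma_0,j}(t)$), so their contributions to $\partial_{d,\tau}\ket{\phi_{\text{good}}^{\sigma_0}}$ cancel and hence $\bra{[\sigma]}\Delta_{d,\tau}^{\mathrm{down}}\ket{\phi_{\text{good}}^{\sigma_0}}=0$. For the up-part, each coface $\tau$ of $\sigma$ has its $d$-faces partitioned between at most two filtration layers $\hat X_d^i$ and $\hat X_d^j$; the amplitude ratio between simplices in these two layers in $\ket{\phi_{\text{good}}^{\sigma_0}}$ is exactly $f^{i,j}/f^{j,i}$ by eq.~\eqref{eq:relative_amplitude}, which is precisely the ratio that, together with the up-orientability of Definition~\ref{def:orientable_filtration}, makes the coboundary contributions at $\tau$ cancel. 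Summing over all faces and cofaces yields $\bra{[\sigma]}\Delta_d\ket{\phi_{\text{good}}^{\sigma_0}}=0$.

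The main obstacle will be the bookkeeping for the local cancellation, in particular handling the case where a face or coface of $\sigma$ borders internal simplices (which lie in $\tilde X_d^i\setminus\hat X_d^i$ and are absent from the support of $\ket{\phi_{\text{good}}^{\sigma_0}}$) or borders simplices in a different connected component of good simplices. For the former, the remark following Definition~\ref{def:orientable_filtration} allows the $(d+1)$-cells adjacent to an internal $d$-simplex to be merged, so their coboundary contributions trivialize; for the latter, the filtration construction of $X_d^{\sigma_0}(t)$ is closed under sharing cofaces with good simplices, so no simplex outside $S_{\text{good}}^{\sigma_0}$ can share a coface with one of $\sigma$'s adjacent simplices in the support. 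This parallels the argument for Claim~\ref{claim:good_supp} but applied pointwise at $\sigma$ rather than globally, so that the conclusion holds in both YES and NO instances.
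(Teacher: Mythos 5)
Your proposal follows the paper's proof essentially verbatim: the same resolution-of-identity manipulation reduces the row sum to $1-\beta\,\bra{[\sigma]}F_d^{\sigma_0}\ket{\phi_{\text{good}}^{\sigma_0}}/\braket{[\sigma]|\phi_{\text{good}}^{\sigma_0}}$, and the paper likewise concludes by observing that $F_d^{\sigma_0}\ket{\phi_{\text{good}}^{\sigma_0}}$ is supported only outside the set of good simplices, so this matrix element vanishes for $\sigma$ in the support. Your unpacking of the local up/down cancellations (the degree-2 pairing within $\hat X_d^i$ and the $f^{i,j}/f^{j,i}$ amplitude ratios across layers) is a more detailed justification of the one-sentence assertion the paper makes at that point, not a different argument.
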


\begin{proof}
   \begin{align*}
   \sum_{\sigma' \in X_d}P_{\sigma\rightarrow \sigma'}^{\sigma_0} 
   &= 
   \sum_{\sigma' \in X_d}\frac{\Braket{[\sigma']|\phi_{\text{good}}^{\sigma_0}}}{\Braket{[\sigma]|\phi_{\text{good}}^{\sigma_0}}}
 \braket{[\sigma'] | I-\beta F_d^{\sigma_0}|[\sigma]}\\
 &=
 \frac{\bra{\phi_{\text{good}}^{\sigma_0}}I-\beta F_d^{\sigma_0}\ket{[\sigma]}}{{\Braket{[\sigma]|\phi_{\text{good}}^{\sigma_0}}}}.
   \end{align*}
   Therefore, it suffices to show 
   $$
   \bra{[\sigma]}F_d^{\sigma_0}\ket{\phi_{\text{good}}^{\sigma_0}}=0. 
   $$
   Indeed, $F_d^{\sigma_0}\ket{\phi_{\text{good}}^{\sigma_0}}$ has support only outside of the set of good simplices, and therefore $\bra{[\sigma]}F_d^{\sigma_0}\ket{\phi_{\text{good}}^{\sigma_0}}=0$ for $\sigma \in \mathrm{Supp}(\ket{\phi_{\text{good}}^{\sigma_0}}$. 
\end{proof}

\paragraph{Computation of matrix elements}

We show that we can efficiently compute the {\it laziness} and the matrix elements of $P^{\sigma_0}$.
The laziness of the random walk for $\sigma$ (i.e., the probability of staying at $\sigma$ after the transition from $\sigma$) is
\begin{align*}
\bra{[\sigma]} P^{\sigma_0}\ket{[\sigma]}
&=1-\beta \bra{[\sigma]}F^{\sigma_0}_d\ket{[\sigma]} \\&
=
1-\beta 
\bra{[\sigma]}\Delta_d\ket{[\sigma']}
-\beta \sum_{\sigma'':(\sigma,\sigma'')\in S^+} \bra{[\sigma]}\Delta_d\ket{[\sigma'']} \frac{\Braket{[\sigma'']|\phi_{\text{good}}^{\sigma_0}}}{\Braket{[\sigma]|\phi_{\text{good}}^{\sigma_0}}}.
\end{align*}
The non-zero matrix element for $\sigma\sim \sigma'$
is 
$$
\bra{[\sigma]} P^{\sigma_0}\ket{[\sigma']}
= \beta
\frac{\Braket{[\sigma']|\phi_{\text{good}}^{\sigma_0}}}{\Braket{[\sigma]|\phi_{\text{good}}^{\sigma_0}}}
 \bra{[\sigma']}\Delta_{d}\ket{[\sigma]}.
$$
Both quantities can be evaluated by computing the relative amplitudes in $\phi_{\text{good}}^{\sigma_0}$ 
with eq.~\eqref{eq:relative_amplitude} and Claim~\ref{claim:relative_amplitude}. 

\paragraph{Efficient sampling}

There is an efficient classical algorithm that simulates a random walk according to the transition matrix $P^{\sigma_0}$. 
Suppose the current simplex is $\sigma_t$ and we want to sample $\sigma_{t+1}$ from $P^{\sigma_0}\ket{[\sigma_t]}$.
Then, consider the following procedures. 
\begin{itemize}
    \item 
    First, compute
     $$N(\sigma_t)=\{\sigma':\sigma'\sim_\downarrow \sigma_t \text{ and } \sigma',\sigma_t \text{ do not share a coface } \}.$$
    Note that $|N(\sigma_t)|=\mathcal{O}(n)$ and adjacent simplices only differ from each other with one vertex, and therefore $N(\sigma_t)$ can be computed efficiently. 
    \item Compute $P^{\sigma_0}_{\sigma_t\rightarrow \sigma_t}$ and $P^{\sigma_0}_{\sigma_t\rightarrow \sigma'}$ for all $\sigma'\in N(\sigma_t)$. 
    \item Sample $\sigma_{t+1}$ according to $P^{\sigma_0}$.
\end{itemize}

\subsection{MA protocol}
\label{sec:verification_protocol}

The MA-verification protocol is described as follows. 

\begin{itemize}
    \item 
    The prover sends a simplex $\sigma_0 \in X_d$. 
    \item Verifier rejects if there is no $j$ s.t. $\sigma_0 \in \hat X_d^j$ or $\sigma_0$ is not a good simplex. \\
    {This can be efficiently verified by checking if $\sigma_0$ is an internal simplex or not, and by checking if it is good or not (Claim~\ref{claim:good_bad}).}
    \item Verifier sets
    $\beta>1/poly(n)$ and 
    $L\in poly(n)$ s.t. 
   $P_{\sigma\rightarrow \sigma'}^{\sigma_0}\geq 0$ for any $\sigma,\sigma'$, and 
    $$
\sqrt{|X_d|} (1-\beta \epsilon)^L \leq 1/3.
$$
We can always choose such $\beta$ and $L$ due to Claim~\ref{claim:beta} and  $|X_d|\leq \binom{n}{d+1}$. 
    \item Verifier repeats following for $t=0,1,...,L$: 
    \begin{itemize}
        \item Check if all 
        all adjacent simplices to $\sigma_{t+1}$ are good simplices. 
        Output reject if there are any bad simplices. 
        \item Sample $\sigma_{t+1}$ according to the Markov transition matrix $P^{\sigma_0}$.
        \item 
        Compute 
        $$
        r_t=\frac{\Braket{[\sigma_{t+1}]|\phi_{\text{good}}^{\sigma_0}}}{\Braket{[\sigma_t]|\phi_{\text{good}}^{\sigma_0}}}.
        $$
    \end{itemize}
    \item Check if all 
        all adjacent simplices to $\sigma_{L}$ are good simplices. Otherwise, reject. 
    \item Verify that $\prod_{t=1}^L r_t \leq 1 $. Otherwise, reject. 
    \item Verifier accepts.
\end{itemize}

This protocol is similar to the one that is used in~\cite{bravyi2010complexity} applied for the fixed-node Laplacian. 

\paragraph{Completeness}

In YES instances, 
there is $\sigma_0'\in X_d^0$ with which 
$
\Delta_d\ket{\phi_{\text{good}}^{{\sigma_0'}}}=0.
$
An honest prover chooses a simplex $\sigma_0$ s.t. 
whose amplitude 
$$
\bra{[\sigma_0]}{\phi_{\text{good}}^{{\sigma_0'}}}\rangle
$$
is maximum among all the support of ${\phi_{\text{good}}^{{\sigma_0'}}}$. 
Starting from $\sigma_0$, the random walk performed by the verifier transits only on the support of $\ket{\phi_{\text{good}}^{{\sigma_0'}}} \in \ker(\Delta_d)$. 
By Claim~\ref{claim:good_supp}, 
$\{\sigma_t\}_{t=0}^L$ and their adjacent simplices are all good simplices. 
Moreover, as the prover sends $\sigma_0$ with maximal amplitude, 
$\prod_{t=1}^L r_t \leq 1 $. 
Therefore, the verifier always accepts.

\paragraph{Soundness}
We would like to show that the verifier rejects with high probability in NO instances for an arbitrary initial state $\sigma_0$. 
We first compute the probability $P_{\text{good}}$ that is a probability that in $L$ steps of a random walk, all $\sigma_t$ and simplices adjacent to $\sigma_t$ are good simplices. 
Let us denote $N(\sigma)$ for the set that is composed of $\sigma$ and adjacent simplices to $\sigma$.  

\begin{align*}
    P_{\mathrm{good}} &=
    \sum_{\substack{\sigma_1,...,\sigma_L: \\ N(\sigma_1),\dots, N(\sigma_L) \in S_\mathrm{good}^{\sigma_0}}} P_{\sigma_{L-1} \rightarrow \sigma_L} \cdots  P_{\sigma_0 \rightarrow \sigma_1}\\
    &= \sum_{\substack{\sigma_1,...,\sigma_L: \\ N(\sigma_1),\dots, N(\sigma_L) \in S_\mathrm{good}^{\sigma_0}}} 
    \prod_{t=0}^L  \left(\frac{\braket{[\sigma_{t+1}]|\phi_{\text{good}}^{\sigma_0}}}{\braket{[\sigma_t]|\phi_{\text{good}}^{\sigma_0}}}
 \braket{[\sigma_{t+1}] | I-\beta F_d^{\sigma_0}|[\sigma_t]} \right)
\end{align*}
The verifier accepts only if $N(\sigma_1),\dots, N(\sigma_L)$ are good and $\prod_{t=0}^L  \frac{\braket{[\sigma_{t+1}]|\phi_{\text{good}}^{\sigma_0}}}{\braket{[\sigma_t]|\phi_{\text{good}}^{\sigma_0}}} \leq 1$. 
Therefore, 
\begin{align*}
    P_{\text{acc}}&=P\left(N(\sigma_1),\dots, N(\sigma_L) \text{ are good} \ \cap  \prod_{t=0}^L  \frac{\braket{[\sigma_{t+1}]|\phi_{\text{good}}^{\sigma_0}}}{\braket{[\sigma_t]|\phi_{\text{good}}^{\sigma_0}}} \leq 1\right)\\
    &\leq 
    P\left(  \prod_{t=0}^L  \frac{\braket{[\sigma_{t+1}]|\phi_{\text{good}}^{\sigma_0}}}{\braket{[\sigma_t]|\phi_{\text{good}}^{\sigma_0}}} \leq 1 \ \Big|\ N(\sigma_1),\dots, N(\sigma_L) \text{ are good} \ \right)
\end{align*}
Therefore, the acceptance probability satisfies 
\begin{align*}
    P_{\text{acc}} &\leq 
    \sum_{\substack{\sigma_1,...,\sigma_L: \\ N(\sigma_1),\dots, N(\sigma_L) \in S_\mathrm{good}^{\sigma_0}}} 
    \prod_{t=0}^L  
 \braket{[\sigma_{t+1}] | I-\beta F_d^{\sigma_0}|[\sigma_t]} \\&
 \leq \sum_{\sigma_1,\dots, \sigma_L \in {X_d}}
    \prod_{t=0}^L  
 \braket{[\sigma_{t+1}] | I-\beta F_d^{\sigma_0}|[\sigma_t]}\\&
 = \sqrt{|X_d|} \left(\frac{1}{\sqrt{|X_d|}}\sum_{\sigma_L \in X_d}\bra{[\sigma_L]}\right)(I-\beta F_d^{\sigma_0})^L \ket{[\sigma_0]}
\end{align*}
where we have used $I=\sum_{\sigma \in {X_d} }\ket{[\sigma]} \bra{[\sigma]}$. 
Because by {Lemma~\ref{lemma:fixed_node_Hamiltonian}} and the promise in NO instances, the minimal eigenvalue of $F_d^{\sigma_0}$ is larger than $\epsilon$.
This allows the verifier to choose 
$L\in poly(n)$ s.t.
$
\sqrt{|X_d|} (1-\beta \epsilon)^L \leq 1/3.
$
With such a choice of $L$, it is ensured that 
$P_{\text{acc}}\leq 1/3$. 
This concludes the containment in MA.

\section{MA-hardness: Gadget construction}
\label{sec:hardness_1}

In this section, we give a construction of the simplicial complexes for the MA-hardness. 

\subsection{Qubit graph}
\label{sec:qubit_graph}

We first introduce a clique complex to which we can encode the $n$-qubit Hilbert space $\mathbb{C}^{2^n}$. 
We define an $n$-qubit graph $\mathcal{G}_n$ as the $n$-fold tensor product of a graph that is composed of two connected squares as follows: 

\begin{figure}[h!]
    \label{fig:qubit_graph}
    \centering
    \includegraphics[width=1\linewidth]{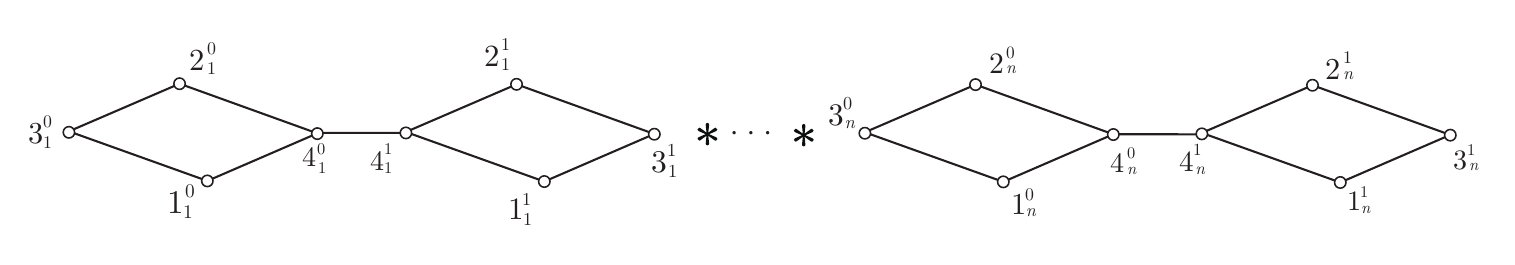}
    \caption{Qubit graph that we use in the MA-hardness construction.}
\end{figure}

Each of the graphs at the $i$-th position has $\dim H_1=2$, $\dim \tilde H_0=0$, where $\tilde H_0$ is the reduced 0-homology. 
Here the join product $G*G'$ of two graphs $G=(V,E)$ and $G'=(V',E')$ is composed of vertices 
$$
V \cup V'
$$
and edges 
$$
E \cup E' \cup \{(u,v):u\in V, v\in V'\}.
$$
The join of two simplicial complexes $K$ and $L$ is defined by 
$$K*L=\{\sigma\cup\tau: \sigma\in K, \tau \in L\}.$$
The clique complex of a join of graphs satisfies 
$$
\mathrm{Cl}(G*G')= \mathrm{Cl}(G)*\mathrm{Cl}(G'). 
$$
By the Kunneth formula 
\begin{equation}
\label{eq:kunneth}
   H_d(K*L)\cong \bigoplus_{i+j=d-1}H_i(K)\otimes H_j(L), 
\end{equation}
it can be seen that 
the clique complex of $\mathcal{G}_n$ has $2^n$-dimensional $2n-1$-th homology. 
Indeed,
$$
\ker(\Delta_{2n-1})= \underbrace{\ker(\Delta_1(\mathcal{G}))\otimes \cdots \otimes \ker(\Delta_1(\mathcal{G}))}_{n \text{ times}}.
$$
As we explicitly define later, the $n$-qubit Hilbert space is encoded into a harmonic subspace $\ker(\Delta_{2n-1})$. ($\ker(\Delta_{2n-1})$ is equivalent to the cycle subspace because there is no $2n$-dimensional simplices at this moment. )

Our base graph $\mathcal{G}$ is different from that used in \cite{king:qma} :
$
\basegraph
$
and also different from that used in~\cite{crichigno2024clique}. 
The reason that we use a different graph is that for $\mathcal{G}$, the {\it degree of the vertices will become 2} after removing the vertices between $4^0_i-4^1_i$, which is favorable for the requirement of orientability.  

\paragraph{Orientation of the clique complex of the qubit graph}

Let us define the ordering of the vertices of $\mathcal{G}_n$ as 
$$
1_1^0<2_1^0<3_1^0<4_1^0<1_1^1<2_1^1<3_1^1<4_1^1<1_2^0<2_2^0< \cdots .
$$
Then, we give the orientation of $2n-1$-dimensional simplices of $\mathrm
{Cl}(\mathcal{G}_n)$ as follows. 

\begin{definition}[Orientation of the qubit graph clique complex]
we orient $\mathrm{Cl}_{2n-1}(\mathcal{G}_n)$ 
s.t. for $v_1<v_2<...<v_{2n}$, 
$\sigma=(-1)^{\sum_{i=1}^{2n}v_i}[v_1v_2...v_{2n}] \in \mathrm{Cl}_{2n-1}(\mathcal{G}_n)$
if $v_1v_2...v_{2n}$ forms an $2n$-clique in $\mathcal{G}_n$. 
(In $\sum_{i=1}^{2n}v_i$, the subscripts and superscripts are ignored.)  
\end{definition}

Then, we define an encoding map from $n$-qubit space to the chain space, s.t. non-negative states in the qubit space appear as non-negative states in the chain space as follows. 

\begin{definition}[Encoding map]
\label{def:encoding_map}
    We define an encoding map 
$$\mathrm{Enc}:\mathcal{H}_n\rightarrow C_{2n-1}(\mathcal{G}_n)$$
where $\mathcal{H}_n$ is the $n$-qubit Hilbert space and 
$C_{2n-1}(\mathcal{G}_n):=C_{2n-1}(\mathrm{Cl}(\mathcal{G}_n))$
$$
\mathrm{Enc}(\ket{x})= \ket{c_{1,x_1}}\otimes \cdots \otimes \ket{c_{2,x_n}}
$$
for any $x\in\{0,1\}^n$
where 
$$
\ket{c_{i,0}}:= \frac{1}{2} (\ket{[1_i^03_i^0]}+\ket{[3_i^02_i^0]}+\ket{[2_i^04_i^0]}+\ket{[4_i^01_i^0]})
$$
$$
\ket{c_{i,0}}:= \frac{1}{2} (\ket{[1_i^13_i^1]}+\ket{[3^1_i2^1_i]}+\ket{[2^1_i4^1_i]}+\ket{[4^1_i1^1_i]}).
$$
\end{definition}

The tensor product, 
for $\sigma_1,\sigma_2,...,\sigma_n\in \mathrm{Cl}(\mathcal{G})$ when there is no overlap of vertices in $\sigma_1,\sigma_2,...,\sigma_n$ and $\sigma_1\sqcup\sigma_2\sqcup...\sqcup\sigma_n \in \mathrm{Cl}(\mathcal{G}_n)$,
is defined by
$$
\ket{[\sigma_1]}\otimes \cdots \otimes \ket{[\sigma_n]}= 
\ket{[\sigma_1\sqcup\sigma_2\sqcup...\sqcup\sigma_n]}.
$$

It can be seen that
$\{\mathrm{Enc}(\ket{x})\}_{x\in\{0,1\}^n}$ 
forms an orthonormal basis of $\ker(\Delta_{2n-1}(\mathcal{G}_n))$. 
It can also be seen that with our choice of orientation in $\mathrm{Cl}_{2n-1}(\mathcal{G}_n)$, 
$\mathrm{Enc}(\ket{x})$ appears as a {\it non-negative state} for any $x\in\{0,1\}^n$. Note that each of the computational basis states is encoded into a cycle of a generalized octahedron in eq.~\eqref{eq:n_octa} below. 

\subsection{Set of projectors}
\label{sec:MAgadget:projectors}

The set of projectors for the MA-hard QSAT problem is obtained by applying the circuit-to-Hamiltonian construction for a restricted verifier that corresponds to classical computation~\cite{bravyi2010complexity}. 
We explicitly introduce a set of rank-1 stoquastic projectors with that the QSAT problem is MA-hard. 

\begin{lemma}[A set of MA-hard rank-1 projectors]
\label{lemma:setofprojectors}
Let $S_{\mathrm{stoq}}$ be a set of stoquastic projectors onto 
\begin{itemize}
    \item Single-qubit state: $\ket{0}$
    \item Two-qubit state: $\ket{01}$
    \item  Three-qubit state: $\ket{001}, \ket{011}, \ket{-}\otimes\ket{01}$ 
    \item Five-qubit state: 
    $\frac{1}{\sqrt{2}}(\ket{000}-\ket{001})\otimes\ket{01}$, $\frac{1}{\sqrt{2}}(\ket{001}-\ket{101})\otimes\ket{01}$, $\frac{1}{\sqrt{2}}(\ket{001}-\ket{101})\otimes\ket{01}$
    \item Six-qubit state: 
    $\frac{1}{\sqrt{2}}(\ket{0111}-\ket{1110})\otimes\ket{01}$, $\frac{1}{\sqrt{2}}(\ket{0011}-\ket{1111})\otimes\ket{01}$
\end{itemize}
Then, the satisfiability problem with projectors chosen from $S_{\mathrm{stoq}}$ is MA-hard. 
\end{lemma}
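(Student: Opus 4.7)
The plan is to obtain the lemma by applying the standard circuit-to-Hamiltonian construction to an MA verifier whose quantum circuit has been compiled into a very restricted gate set, and then identifying the resulting local terms with the projectors listed. The starting point is Bravyi's result that Stoquastic QSAT is MA-complete \cite{bravyi2010complexity}; what remains is to show that the specific projector family above is still expressive enough to encode such a QSAT instance (equivalently, an MA verification circuit).

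First I would compile any MA verifier into a reversible classical circuit built out of Toffoli, CNOT and NOT gates acting on a workspace, where the only non-classical element is the preparation of the randomness register in the $\ket{+}$ state (equivalently, applying a single layer of Hadamards to $\ket{0}$'s at the beginning). Then I would apply Kitaev's clock construction with a \emph{unary} clock, so that the clock register occupies a domain-wall configuration and the indicator of being ``at time $t$'' is a two-qubit $\ket{01}\bra{01}$ on the clock qubits $t$ and $t+1$. This is precisely the $\ket{01}$ tensor factor that appears in every multi-qubit projector of the lemma. The input-check and output-check terms are then projectors onto $\ket{0}$ on an ancilla or output qubit, giving the $\ket{0}\bra{0}$, $\ket{01}\bra{01}$, $\ket{001}\bra{001}$ and $\ket{011}\bra{011}$ terms (depending on whether the check is on a bare qubit or a clock-gated qubit).

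Next I would write down the propagation terms. For a classical reversible gate $U_t$ acting on one to three work qubits, the Kitaev propagation term
\[
\tfrac{1}{2}\bigl(\ket{t}\bra{t}+\ket{t+1}\bra{t+1}-U_t\otimes\ket{t+1}\bra{t}-U_t^{\dagger}\otimes\ket{t}\bra{t+1}\bigr)
\]
decomposes, in the computational basis and unary clock, into a sum of rank-one projectors of the form $\tfrac{1}{2}(\ket{x}-\ket{U_tx})(\bra{x}-\bra{U_tx})\otimes\ket{01}\bra{01}$ on the work qubits touched by $U_t$ and the two clock qubits $t$, $t+1$. For NOT this is a $3$-qubit term, for CNOT a $4$-qubit term, and for Toffoli a $5$-qubit term, fitting the $5$-qubit and $6$-qubit families in the list (the extra qubit in the $5$-qubit list comes from the Toffoli itself acting only on $3$ qubits yet being tensored with $\ket{01}\bra{01}$). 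The preparation step of the randomness qubits is handled by the $\ket{-}\otimes\ket{01}\bra{01}$ term, which penalises a randomness qubit that deviates from $\ket{+}$ at the initial clock time; this is exactly the $3$-qubit $\ket{-}\!\bra{-}\otimes\ket{01}\bra{01}$ projector listed in the lemma.

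The main obstacle is bookkeeping rather than a single deep step: I must verify that every local term produced by the clock construction, including the clock-legality penalties and initialisation checks, can be written exactly in one of the listed forms (and no others), by choosing the unary clock encoding, placing each check between adjacent clock qubits, and restricting the gate library to NOT, CNOT and Toffoli on the work register plus $\ket{+}$ initialisation. A second minor obstacle is to check stoquasticity in the specified basis: each listed projector is either diagonal in the computational basis or of the form $\tfrac{1}{2}(\ket{x}-\ket{y})(\bra{x}-\bra{y})$, whose off-diagonal entries are nonpositive, so stoquasticity is automatic once the projectors are matched. Combining these ingredients yields a polynomial-time reduction from Stoquastic QSAT (restricted to MA-verifier circuits) to the satisfiability problem for $S_{\mathrm{stoq}}$, establishing MA-hardness.
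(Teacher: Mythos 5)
Your overall route is the same as the paper's: compile the MA verifier into a reversible classical circuit, apply the unary-clock circuit-to-Hamiltonian construction (initialization, clock-legality, measurement and propagation terms, with $\ket{+}$-preparation penalized by $\ket{-}\bra{-}$ on a clock-gated qubit), and then split each propagation term into rank-one projectors of the form $\frac12(\ket{x}-\ket{y})(\bra{x}-\bra{y})$. However, the one formula you actually write down for that splitting is wrong, and it is the load-bearing step. The propagation term $\frac12(\ket{t}\bra{t}+\ket{t+1}\bra{t+1}-U_t\otimes\ket{t+1}\bra{t}-U_t^\dagger\otimes\ket{t}\bra{t+1})$ for an involutive classical gate $U_t$ decomposes as $\frac12\sum_x\bigl(\ket{x}\ket{0}_{\mathrm{Cl}(t)}-\ket{U_tx}\ket{1}_{\mathrm{Cl}(t)}\bigr)\bigl(\bra{x}\bra{0}_{\mathrm{Cl}(t)}-\bra{U_tx}\bra{1}_{\mathrm{Cl}(t)}\bigr)$, gated by the neighboring clock qubits: the flipping clock bit sits \emph{inside} the rank-one vector, entangled with the gate action. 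Your expression $\frac12(\ket{x}-\ket{U_tx})(\bra{x}-\bra{U_tx})\otimes\ket{01}\bra{01}$ factors the clock out, which does not reproduce the propagation term (it would instead penalize work-register superpositions at a fixed clock time), and it cannot produce the six-qubit family in the lemma, whose four-qubit factors such as $\ket{0111}-\ket{1110}$ visibly mix a clock bit with the Toffoli's three qubits. The fixed points of the Toffoli ($x$ with $U_tx=x$) yield vectors $\ket{x}\otimes(\ket{0}-\ket{1})_{\mathrm{Cl}(t)}$, which after grouping over the free target qubit give the five-qubit family; this is exactly the decomposition of $\Pi_{\mathrm{tof}}$ carried out in the paper, and your qubit count (five qubits for a Toffoli propagation term) is off by the missing flipping clock qubit.

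A second, smaller mismatch: you compile into NOT, CNOT and Toffoli, but NOT and CNOT propagation terms generate projectors (e.g.\ onto $(\ket{00}-\ket{11})\otimes\ket{\cdot}$ on one work qubit plus clock) that are not in $S_{\mathrm{stoq}}$. The paper uses Toffoli gates only (universal for reversible computation with fixed ancillas), which is what makes the ``no other terms appear'' bookkeeping you defer to actually close. Your stoquasticity check at the end is fine.
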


\begin{proof}

Let $U=U_L \cdots U_1$ be a quantum circuit with 3-qubit Toffoli gates with initial states $\ket{\phi}$ that are restricted to product states with $\ket{0}, \ket{1}, \ket{+}$. 

The system is composed of 
$N_w$ witness qubits and $N_a$ ancilla qubits. The $k$-th ancilla qubit is labeled as $a(k)$ with $k=1,...,N_a$. 
Then, the constraints of the QSAT are composed of 
$$
\mathcal{C}=
\{
H_k^{\mathrm{int}}, H_j^{\mathrm{prop}},H_l^{\mathrm{clock}}, H^{\mathrm{meas}}
\}
$$
where 
\begin{align}
    H_k^{\mathrm{int}} &= (I-\ket{\phi_k}\bra{\phi_k})_{a(k)}\otimes \ket{10}\bra{10}_{\mathrm{Cl}(0),\mathrm{Cl}(1)},\  k= 1,...,N_a \\
    H_j^{\mathrm{prop}} &= \frac{1}{2}\ket{1}\bra{1}_{\mathrm{Cl}(j-1)}\otimes 
    \left(
    \ket{1}\bra{1}_{\mathrm{Cl}(j)}+\ket{0}\bra{0}_{\mathrm{Cl}(j)}-\ket{1}\bra{0}_{\mathrm{Cl}(j)} \otimes U_j  - \ket{0}\bra{1}_{\mathrm{Cl}(j)}\otimes U^\dagger_j
    \right) \\ & \ \ \  \otimes \ket{0}\bra{0}_{\mathrm{Cl}(j+1)}, \ j=1,...,L \\ 
   H_0^{\mathrm{clock}} &= \ket{0}\bra{0}_{\mathrm{Cl}(0)} \\
   H_l^{\mathrm{clock}} &= \ket{01}\bra{01}_{\mathrm{Cl}(l-1),\mathrm{Cl}(l)}, \ l=1,...,L \\
   H^{\mathrm{meas}} &= \ket{0}\bra{0}_{\text{out}}\otimes \ket{1}\bra{1}_{\mathrm{Cl}(L)}.
\end{align}
With our choice of gate set, the set of projectors is 
\begin{itemize}
    \item Single-qubit:  $\ket{0}\bra{0}$
    \item Two-qubit: $\ket{01}\bra{01}$
    \item Three-qubit: 
    $\ket{010}\bra{010}, \ket{110}\bra{110}, \ket{-}\bra{-}\otimes \ket{10}\bra{10}$
    \item Six-qubit: 
    $\frac{1}{2}\ket{10}\bra{10}\otimes \Big(\ket{0}\bra{0}+\ket{1}\bra{1}-(\ket{1}\bra{0}+\ket{0}\bra{1})\otimes U_{\mathrm{tof}}\Big)$
\end{itemize}
where $U_{\mathrm{tof}}$ is the Toffoli gate. 

Let 
$$2 \Pi_{\mathrm{tof}}= \ket{0}\bra{0}+\ket{1}\bra{1}-(\ket{1}\bra{0}+\ket{0}\bra{1})\otimes U_{\mathrm{tof}}.$$
Observe that
\begin{align*}
    \Pi_{\mathrm{tof}}= 
    I_4 - &(\ket{1}\bra{0}+\ket{0}\bra{1})\otimes (\ket{00}\bra{00}+\ket{01}\bra{01}+\ket{10}\bra{10})\otimes I \\
    &- (\ket{1}\bra{0}+\ket{0}\bra{1})\otimes (\ket{110}\bra{111}+\ket{111}\bra{110}).
\end{align*}
Therefore, the rank-8 projector $\Pi_{\mathrm{tof}}$ can be decomposed into the projector onto the states 
\begin{itemize}
    \item $\ket{1110}-\ket{0111}$
    \item $\ket{1111}-\ket{0110}$
    \item $(\ket{100}-\ket{000})\otimes I$
    \item $(\ket{101}-\ket{001})\otimes I$
    \item $(\ket{110}-\ket{010})\otimes I$
\end{itemize}
Then, we can conclude that the set of projectors in the statement of the lemma suffices for MA-hardness. 
\end{proof}

In the following subsections, we give our gadget construction for projectors onto $\ket{x}-\ket{y}$ in a general way. 
The rough overview of the construction is:
\begin{enumerate}
    \item As $\mathrm{Enc}(\ket{x})$ are cycles of a generalized octahedron $K^x$, we prepare a copy $K^{'x}$. Similarly, prepare a copy for $K^y$, which we denote $K^{'y}$. 
    \item We glue $K^x$ and $K^{'x}$, and also glue $K^y$ and $K^{'y}$. 
    \item Introduce another generalized octahedron $K''$. Then, glue $K''$ to $K^{'x}$ and $K^{'x}$. 
    \item Add axial simplices.
\end{enumerate}
With this construction, the cycles $\mathrm{Enc}(\ket{x})$ and $\mathrm{Enc}(\ket{y})$ are effectively connected and $\mathrm{Enc}(\ket{x}+\ket{y})$ becomes boundary. 
As can be seen from the above overview, the central ingredient in the construction is a procedure of gluing two generalized octahedra. Therefore, we first introduce this procedure in the next subsection.  

\subsection{Procedure of gluing two generalized octahedra}
\label{sec:General_argument}

\paragraph{Construction of the gadget that glues two generalized octahedra}

To construct gadgets for the projectors onto $\ket{x}-\ket{y}$, we first establish a procedure of gluing two generalized octahedra. 

Let 
\begin{equation}
\label{eq:n_octa}
   K:= \{1,2\} * \{3,4\} \cdots * \{2n-1,2n\}
\end{equation}
be a generalized octahedron with $d-1$-dimensional maximal simplices. 
($\{1,2\} * \{3,4\} * \{5,6\} $ is a usual three dimensional octahedron.)
The set of $n-1$ dimensional simplices of $K$ is given by
\begin{equation}
\label{eq:n_octa_order}
   \{1,2\}\times \{3,4\}\times \cdots \times \{2n-1,2n\}. 
\end{equation}

In the following, we treat $K$ as an oriented simplicial complex. 
The orientation of $K_{n-1}$ is defined such that 
$$(-1)^{\sum_{i=1}^n v_i}[v_1v_2...v_n]\in K_{n-1}$$
if $v_1<v_2<...<v_n$ and $v_i \in \{2i-1,2i\}$. 

For example, in the case of $n=2$, 
$$
[1,3], -[1,4]=[4,1],-[2,3]=[3,2],[2,4]
$$
are the elements of $K_1$.

With this choice of orientation, it can be seen that 
$$
\partial_{n-1} \left( \sum_{\sigma\in K_{n-1}} \ket{\sigma} \right) = 0
$$

In \cite{king:qma}, a general procedure of ``thickening'' of simplicial complexes is introduced. 
This can be utilized for the purpose of gluing two generalized octahedra.

\begin{definition}[Thickening~\cite{king:qma}]
\label{def:thickening}
    Let $K$ be a simplicial complex. Order vertices $K_0$. 
    Let $L$ be the simplicial complex with vertices 
    $L_0=K_0 \times \{0,1\}$ and simplices 
    $$[(u_1,0)(u_2,0)...(u_a,0)]$$
    $$[(u_1,1)(u_2,1)...(u_a,1)]$$
    whenever $[u_1u_2...u_a]\in K$,
    and 
    $$[(u_1,0)(u_2,0)...(u_a,0)(v_1,1)...(v_b,1)]$$
    whenever
    \begin{itemize}
        \item $u_1<\cdots <u_a\leq v_1<\cdots <v_b$
        \item $[u_1...u_a]\in K$
        \item $[v_1...v_b]\in K$
        \item if $u_a=v_1$ then $[u_1...u_av_2...v_b]\in K$
        \item if $u_a\neq v_1$ then $[u_1...u_av_1...v_b]\in K$.
    \end{itemize}
    
\end{definition}

\begin{lemma}[\cite{king:qma}]
\label{lemma:thickening_graph}
Let $K$ be a generalized octahedron of \eqref{eq:n_octa} with the ordering of vertices in \eqref{eq:n_octa_order}. Then, the thickening $L$ is a triangulation of $K\times I$. 
Moreover, $L$ is a
clique complex of a graph 
with vertices $L_0$ and edges 
\begin{align*}
    &\{((u,0),(v,0)):(u,v)\in K_1\}\\
    \sqcup&\{((u,1),(v,1)):(u,v)\in K_1\}\\
    \sqcup&\{((u,0),(u,1)):u\in K_0\}\\
    \sqcup&\{((u,0),(v,1)):(u,v)\in K_1\}, u<v\}.
\end{align*}
. 
\end{lemma}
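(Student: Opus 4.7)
The plan is to verify the two claims separately: first that $L$ is a triangulation of $K\times I$, and second that $L$ equals the clique complex of the specified graph $G$. The central structural fact to exploit is that the generalized octahedron $K$ of eq.~\eqref{eq:n_octa} is a flag complex---its simplices are precisely the subsets $S\subseteq\{1,\dots,2n\}$ with $|S\cap\{2i-1,2i\}|\le 1$---so every clique question reduces to checking pairwise compatibility.

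For the triangulation, I would rely on the standard prism decomposition: for a totally ordered $k$-simplex $\sigma = [u_1\cdots u_k]$ with $u_1<\cdots<u_k$, the product $\sigma\times I$ is triangulated by the $k$ top-dimensional simplices $[(u_1,0)\cdots(u_i,0)(u_i,1)\cdots(u_k,1)]$ for $i=1,\dots,k$. These are exactly the simplices produced by Definition~\ref{def:thickening} in the shared-vertex case $u_a = v_1 = u_i$, and the lower-dimensional faces fit the two bulleted cases as well. Because the ordering on $K_0$ is global, the prism decompositions on different simplices of $K$ match on intersections, yielding a triangulation of $K\times I$; one verifies this directly by intersecting the prisms of two simplices $\sigma,\sigma'\in K$ and checking that the shared part is the prism over $\sigma\cap\sigma'$ with its own prism decomposition.

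For the clique complex claim, the inclusion $L \subseteq \mathrm{Cl}(G)$ is immediate from the definition, since each simplex of $L$ has the required pairwise edges (and the cross edges $(u,0)(v,1)$ need $u<v$ or $u=v$, which matches $u_a \le v_1$). For the reverse $\mathrm{Cl}(G)\subseteq L$, take a clique $W\subseteq L_0$, partition it as $W = (U\times\{0\}) \sqcup (V\times\{1\})$, and order $U=\{u_1<\cdots<u_a\}$, $V=\{v_1<\cdots<v_b\}$. The in-layer and cross-layer edge conditions of $G$ force $U,V$ to be cliques in the $1$-skeleton of $K$ (hence $U,V\in K$ by flagness), and every cross pair $(u_i,v_j)$ with $u_i<v_j$ lies in $K_1$, which together with the $u=v$ cases yields $u_a \le v_1$. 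The two cases $u_a < v_1$ and $u_a = v_1$ of Definition~\ref{def:thickening} then follow by flagness applied to the merged vertex sets $\{u_1,\dots,u_a,v_1,\dots,v_b\}$ and $\{u_1,\dots,u_a,v_2,\dots,v_b\}$ respectively.

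The main obstacle is the shared-vertex case $u_a=v_1$, where the diagonal edge $(u_a,0)(u_a,1)$ is always present in $G$ regardless of what happens elsewhere; one must avoid inferring a higher-dimensional simplex from this diagonal alone. The clean resolution is exactly the flagness of the generalized octahedron: even in this boundary case, the cross edges from $(u_i,0)$ to $(v_j,1)$ with $j\ge 2$ force $(u_i,v_j)\in K_1$ for all $i<a$, and combined with the in-layer conditions one obtains all pairwise edges of $\{u_1,\dots,u_a,v_2,\dots,v_b\}$ in the $1$-skeleton of $K$, so this set is in $K$ as required.
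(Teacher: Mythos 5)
The paper does not actually prove this lemma---it is imported verbatim from \cite{king:qma}---so there is no in-paper proof to compare against; your proposal is a correct, self-contained reconstruction. The two ingredients you use are exactly the right ones: the standard prism (staircase) triangulation of $\sigma\times I$ induced by a global vertex order, which matches the shared-vertex case $u_a=v_1$ of Definition~\ref{def:thickening} and whose faces account for the $u_a<v_1$ case; and the flagness of the generalized octahedron (its simplices are the sets meeting each pair $\{2i-1,2i\}$ at most once, i.e.\ the cliques of its $1$-skeleton), which converts all the set-membership conditions in Definition~\ref{def:thickening} into pairwise edge checks. Your identification of the one genuine subtlety is also on target: the diagonal edges $((u,0),(u,1))$ are present unconditionally, so a clique can a priori contain a level-$0$/level-$1$ pair that does not certify an edge of $K$; your observation that the chain of inequalities $u_i\le u_a\le v_1\le v_j$ forces any such coincidence to be $u_a=v_1$, after which flagness applied to $\{u_1,\dots,u_a,v_2,\dots,v_b\}$ closes the gap, is exactly what is needed. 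Two cosmetic points: your argument in fact never uses the specific interleaved ordering of eq.~\eqref{eq:n_octa_order}, only that $K$ is flag and the order is total (the paper's remark about the ordering matters for the later filtration $L_n^k$ and orientation bookkeeping, not for this lemma); and you should state explicitly that cliques lying entirely in one level are handled by the first two clauses of Definition~\ref{def:thickening}, which is immediate from flagness.
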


\begin{remark}
    We emphasize that the ordering of the vertices of $K$ is important in the thickening. In particular, the above lemma works well with our specific choice of ordering of vertices. 
\end{remark}

In this section, 
we use $i,i'$ to denote $(i,0),(i,1)$ for simplicity. 
We define for $k=1,2,...,n$,
\begin{equation}
\label{eq:filtration_thickenig}
  L_n^k := \{1,2\}\times \cdots \times \{2k-3,2k-2\} \times \{(2k-1)(2k-1'),(2k)(2k')\} \times \{2k+1',2k+2'\} \times \cdots \times \{2n-1',2n'\}.  
\end{equation}
Note that 
$$
L_n^1 = \{1,2\}\times \{3,4\}\times \cdots \times \{(2n-1)(2n-1'),(2n)(2n')\}
$$
$$
L_n^n = \{11',22'\}\times \{3',4'\}\times \cdots \times \{2n-1',2n'\}.
$$
Namely, $L_n^k$ is a subset of $L_n$ with $k-1$-number of vertices indexed with a prime. 
It holds that 
\begin{equation}
\label{eq:decomposition_thickening}
   L_n= L_n^1 \sqcup ...\sqcup L_n^n.  
\end{equation}
Let $L_{n-1}^k$ be the set of simplices that is obtained by removing any one vertex $(v,0)$ from $L_{n}^k$ for $k=1,2,...,n$. Let $L_{n-1}^0:= X_{n-1}\times \{0\}$ be the $n-1$ simplices of the initial generalized octahedron. 

We similarly define for $k=1,2,...,n$,
\begin{equation*}
    \hat{L}_{n-1}^k := \{1,2\}\times \cdots \times \{2k-3,2k-2\} \times \{2k-1',2k'\} \times \{2k+1',2k+2'\} \times \cdots \times \{2n-1',2n'\}
\end{equation*}
which can be obtained by removing vertices with labels $2k-1$ or $2k$.

\paragraph{Filtration of unweighted simplicial complex $L$}
In this subsection, we consider $L$ to be a unweighted simplicial complex. 

\begin{lemma}
\label{lemma:orientable_filtration_octa}
{There is a uniform orientable filtration for $L_n$.}
\end{lemma}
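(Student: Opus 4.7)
The plan is to produce an explicit filtration of $L_n$ together with orientations satisfying Definitions~\ref{def:orientable_filtration} and~\ref{def:uniform_filtration} at dimension $d=n-1$. Using the decomposition $L_n = L_n^1\sqcup\cdots\sqcup L_n^n$ from eq.~\eqref{eq:decomposition_thickening}, let $X^0$ be the subcomplex of $L_n$ generated by the bottom octahedron $L_{n-1}^0 = X_{n-1}\times\{0\}$, and set $X^i := X^{i-1}\cup L_n^{n+1-i}$ (closed under taking faces) for $i = 1,\ldots,n$. The induced filtration on $(n-1)$-simplices then satisfies $\tilde X_{n-1}^0 = L_{n-1}^0$ and $\tilde X_{n-1}^i = \hat L_{n-1}^{n+1-i}\sqcup I^{n+1-i}$ for $i\geq 1$, where $I^{n+1-i}$ collects the faces of $L_n^{n+1-i}$ obtained by removing a vertex at a position other than $n+1-i$. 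Every $\sigma\in I^{n+1-i}$ is internal in the sense of Definition~\ref{def:orientable_filtration}: all cofaces of $\sigma$ lie in $L_n^{n+1-i}$ (the ``pair'' at position $n+1-i$ cannot be dropped without leaving $L_n^{n+1-i}$), and these cofaces contain distinct essential faces of $\hat L_{n-1}^{n+1-i}$. Hence the essential layers are $\hat X_{n-1}^0 = L_{n-1}^0$ and $\hat X_{n-1}^i = \hat L_{n-1}^{n+1-i}$ for $i\geq 1$. I then fix the vertex ordering $1<2<\cdots<2n<1'<2'<\cdots<2n'$ and orient every essential $(n-1)$-simplex $\sigma$ by multiplying its canonical sorted listing by the sign $(-1)^{\sum_{i=1}^n\widetilde v_i}$, where $\widetilde v_i\in\{2i-1,2i\}$ denotes the underlying (unprimed) label of the position-$i$ vertex of $\sigma$.

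Down-orientability of each essential layer is immediate: two simplices of one $\hat L_{n-1}^k$ (or of $L_{n-1}^0$) that share an $(n-2)$-face differ in exactly one position $j$, at which position their underlying labels $\widetilde v_j$ differ by $1$; the assigned orientation signs therefore have opposite parity, forcing opposite induced orientations on the shared face. For the coface structure, the ordering constraint $u_a\leq v_1$ in Definition~\ref{def:thickening} implies that an $n$-simplex $\tau\in L_n^k$, written in sorted order as $\tau = [u_1,\ldots,u_{k-1},u_k,u_k',u_{k+1}',\ldots,u_n']$, has exactly one face in $\hat L_{n-1}^k$ (obtained by removing $u_k$), exactly one face in $\hat L_{n-1}^{k+1}$ (obtained by removing $u_k'$), and $n-1$ faces in $I^k$. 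The same constraint prevents any essential simplex from having a coface that connects non-adjacent layers, and the faces of each $\tau$ are partitioned into $\tilde X_{n-1}^{n+1-k}$ and $\tilde X_{n-1}^{n-k}$ with uniform counts $f^{n+1-k,\,n-k} = f^{n-k,\,n+1-k} = 1$.

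Up-orientability reduces to a short sign computation. Applying $\partial[v_0\cdots v_n] = \sum_i(-1)^i[v_0\cdots\hat v_i\cdots v_n]$ to the sorted listing of $\tau$, the face $\sigma\in\hat L_{n-1}^k$ appears with coefficient $(-1)^{k-1}$ and the face $\sigma'\in\hat L_{n-1}^{k+1}$ appears with coefficient $(-1)^k$. Because $\sigma$ and $\sigma'$ differ only in whether the position-$k$ vertex is primed, their underlying labels coincide coordinate-wise, so their assigned orientation signs $(-1)^{\sum\widetilde v_i}$ agree. Rescaling the boundary by these signs, the two faces appear in $\partial\tau$ with opposite signs, which is exactly the condition that $\sigma$ and $\sigma'$ induce opposite orientations on $\tau$. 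The two uniformity conditions of Definition~\ref{def:uniform_filtration} are then automatic: $L_n$ is unweighted, so $w^i = 1$ throughout, and the $f$-counts established above are independent of the choice of $\tau\in L_n^k$. Hence the filtration together with the given orientations is a uniform orientable $(n-1)$-filtration of $L_n$.

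The only step that demands real attention is the up-orientability sign computation; the rest reduces to the octahedral structure of each essential layer together with the adjacency pattern dictated by the unprimed-before-primed ordering in Definition~\ref{def:thickening}. Intuitively, Lemma~\ref{lemma:thickening_graph} says $L_n$ triangulates the prism $K\times I$, and each $L_n^k$ is a slab whose two parallel faces are octahedra at adjacent filtration levels carrying matching orientations, so the parity bookkeeping is really a restatement of the fact that opposite faces of a slab induce opposite orientations.
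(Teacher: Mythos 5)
Your proof is correct and follows essentially the same route as the paper's: you filter $L_n$ by the slabs $L_n^k$ of the thickening, identify the essential layers with the octahedra $\hat L_{n-1}^k$ (treating the remaining faces of each slab as internal), and choose orientations so that consecutive layers induce opposite orientations on the shared cofaces, with uniformity following from $f=1$ on both sides and the unweighted setting. The only difference is one of explicitness: where the paper merely asserts that suitable orientations can be chosen layer by layer, you exhibit a closed-form sign $(-1)^{\sum_i \widetilde v_i}$ and carry out the boundary-coefficient computation verifying both down- and up-orientability, which is a welcome sharpening rather than a departure.
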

\begin{proof}
Let 
$$
X_{n-1}^k=L_{n-1}^1\sqcup \cdots \sqcup L_{n-1}^k.
$$
Then, 
$$X_d^0 \subseteq X_d^1 \subseteq \cdots \subseteq X_d^n$$
is a filtration of $X_d$. 
The orientation of ${L}_{n-1}^0$ is defined such that it is equivalent to the orientation of $K_{n-1}$ as defined in Section~\ref{sec:qubit_graph}. 
We can choose the orientation of $\hat{X}_{n-1}^k=\hat{L}_{n-1}^k$ for $k=1,...,n$ such that it induces the opposite orientations on the common cofaces of $\hat{L}_{n-1}^{k-1}$. 
We take the orientation of $L_n^k$ such that simplices in $L_n^k$ have the same orientation as the orientations induced by simplices in $\hat{L}_{n-1}^{k-1}$.
Note that with such orientation, $\sigma\in \hat{X}_{n-1}^{k-1}$ and $ \sigma' \in \hat{X}_{n-1}^{k}$ induce opposite orientations on their common cofaces. 

The orientation of simplices other than 
$$\hat{L}_{n-1}^1\sqcup \cdots \sqcup \hat{L}_{n-1}^k$$
can be taken arbitrarily. 
With this orientation, 
$$
X_{n-1}^k=L_{n-1}^1\sqcup \cdots \sqcup L_{n-1}^k.
$$
is an orientable filtration. 
Moreover, the uniformity also holds by construction. 
\end{proof}

\paragraph{Characterizing the harmonics of the glued generalized octahedra}

Let
$$\hat{L}_{n-1}^0:= \{1,2\}\times \{3,4\}\times \cdots \times \{2n-1,2n\}.$$
Let also
$$\ket{c_{n-1}}:= \sum_{\sigma\in K_{n-1}}\ket{\sigma}$$
$$\ket{c_{n-1}'}:= \sum_{\sigma\in K_{n-1}'}\ket{\sigma}.$$

First, we show that $\ket{c_{n-1}}-\ket{c_{n-1}'}$ is a boundary. 

\begin{lemma}
\label{lemma:boundary}
    $$\partial_n \left(
    \sum_{\tau\in L_n} \ket{\tau}
    \right) = \ket{c_{n-1}}-\ket{c_{n-1}'}.$$
\end{lemma}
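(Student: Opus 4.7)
The plan is to verify the claimed boundary formula by a direct classification of the $(n-1)$-faces of simplices in $L_n$, exploiting the fact that $L_n$ triangulates the prism $K\times I$, where $K$ is the generalized octahedron of~\eqref{eq:n_octa} (itself a triangulation of $S^{n-1}$). Topologically $\partial(K\times I)$ is just the top and bottom copies of $K$ with opposite orientations, so the statement should fall out algebraically once we keep track of signs.

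Let $S := \sum_{\tau\in L_n}\ket{\tau}$. I will group the $(n-1)$-faces that appear in $\partial_n S$ into three types: (i) faces in $K_{n-1}$ (all vertices unprimed), (ii) faces in $K'_{n-1}$ (all primed), and (iii) mixed faces. By inspecting Definition~\ref{def:thickening}, each $\rho=[v_1,\dots,v_n]\in K_{n-1}$ with $v_i\in\{2i-1,2i\}$ is a face of exactly one $n$-simplex of $L_n$, namely $[v_1,\dots,v_n,v_n']\in L_n^n$: the thickening constraint $u_a\leq v_1$ together with the octahedron block structure (each block already occupied by $v_1,\dots,v_n$) forces $u_a=v_1$ and leaves $v_n'$ as the only valid extra vertex. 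Symmetrically, each $\rho'\in K'_{n-1}$ is a face of exactly one $n$-simplex, which lies in $L_n^1$. Each mixed face appears as a face of exactly two $n$-simplices of $L_n$, by the same case analysis.

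Next I show pairwise cancellation of all type-(iii) contributions, distinguishing two subcases. If $\rho$ has one vertex per coordinate with prime-transition at some position $a+1$ (so $\rho\in\hat L_{n-1}^{a+1}$), then the two containing $n$-simplices sit in the consecutive slabs $L_n^a$ and $L_n^{a+1}$, and opposite induced orientations follow from the upward-orientability condition of the uniform orientable filtration established in Lemma~\ref{lemma:orientable_filtration_octa}. If instead $\rho$ is missing one coordinate (its doubled-pair position is $k$ and some $m\neq k$ is absent), then both containing $n$-simplices lie in the same slab $L_n^k$, and cancellation follows from each slab being itself a consistently oriented triangulated prism, so that its internal $(n-1)$-faces cancel.

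After cancellation only types (i) and (ii) survive, yielding $\partial_n S=\pm\ket{c_{n-1}}\mp\ket{c'_{n-1}}$. The signs are pinned by the orientation convention of Lemma~\ref{lemma:orientable_filtration_octa} ($L_n^n$ is oriented to induce the standard orientation of $K_{n-1}$, and the signs propagate through the slabs). The main technical obstacle is precisely this sign bookkeeping; I would discharge it by checking the base case $n=1$ directly (where $L_1=\{[1,1'],[2,2']\}$ and the paper's orientation convention makes $\partial_1 S=\ket{c_0}-\ket{c'_0}$) and inducting on $n$ via the product structure of $K$, at each step matching the orientation of $L_n^n$ to $K_{n-1}$ and then using the alternating induced-orientation rule for adjacent slabs to flip the sign for the $K'_{n-1}$-contribution coming from $L_n^1$.
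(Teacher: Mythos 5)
Your proof is correct and follows essentially the same route as the paper's: decompose $L_n$ into the slabs $L_n^k$, observe that every $(n-1)$-face is contained in either one or two $n$-simplices, cancel the degree-two (mixed) faces using the orientation conventions of Lemma~\ref{lemma:orientable_filtration_octa}, and identify the surviving degree-one faces with $\hat L_{n-1}^0$ and $\hat L_{n-1}^n$. If anything, your case analysis is slightly more complete than the paper's, which only discusses faces shared between consecutive slabs and omits the within-slab adjacencies (your subcase where the shared face is missing one coordinate block) that you handle explicitly.
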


\begin{proof}
Note that 
    $$L_n = L_n^1 \sqcup \cdots L_n^n$$
    and $L_n$ is down-degree 2. 
    By definition, the simplices in $L_n^k$ are adjacent to simplices only in $L_n^{k-1}$ or $L_n^{k+1}$. 
    When $\tau \in L_n^{k-1}$ and $\tau' \in L_n^{k}$ have $\sigma$ as a common face, $\sigma$ is contained only in ${\tau}$ and $\tau'$. 
    Moreover, the orientation of $\tau$ and $\tau'$ is taken such that they induce the opposite orientations on $\sigma$. 
    The $D-1$-simplices in $\hat{L}_{n-1}^0$ and $\hat{L}_{n-1}^n$ are the only simplices in $L_{n-1}$ that are contained only by one simplices in $L_n$. 
    Therefore, the claim follows.
\end{proof}

We can characterize the harmonics of $L$ as follows.

\begin{lemma}
\label{lemma:harmonics_octa}
Let
\begin{equation}
\label{eq:harmonics_octa}
    \ket{\phi_{n-1}}= \sum_{\sigma_0 \in \hat{L}_{n-1}^0} \ket{\sigma_0} + \sum_{\sigma_1 \in \hat{L}_{n-1}^1} \ket{\sigma_1} + \cdots + \sum_{\sigma_n \in\hat{L}_{n-1}^n} \ket{\sigma_n}.
\end{equation}
    Then $\mathrm{Span}(\ket{\phi_{n-1}})= \ker(\Delta_{n-1}(L))$.
\end{lemma}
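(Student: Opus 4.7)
The plan is to exhibit $\ket{\phi_{n-1}}$ as a harmonic representative by checking it is simultaneously a cycle and a cocycle, and then to appeal to the topology of $L$ to conclude that the harmonic space is one-dimensional.

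First I would compute $\dim \ker \Delta_{n-1}(L)$ by a purely topological argument. By Lemma~\ref{lemma:thickening_graph}, the thickened complex $L$ is a triangulation of $K \times I$. Since $K$ is the generalized octahedron \eqref{eq:n_octa}, it is homeomorphic to $S^{n-1}$, so $K \times I$ is homotopy equivalent to $S^{n-1}$. Therefore $H_{n-1}(L;\mathbb{R}) \cong H_{n-1}(S^{n-1};\mathbb{R}) \cong \mathbb{R}$, and by the Hodge decomposition $\ker \Delta_{n-1}(L) \cong H_{n-1}(L;\mathbb{R})$, which is one-dimensional.

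Next I would verify that $\ket{\phi_{n-1}} \in \ker \Delta_{n-1}(L)$, which amounts to checking $\partial_{n-1}\ket{\phi_{n-1}} = 0$ and $\delta_{n-1}\ket{\phi_{n-1}} = 0$. For the cycle condition, note that each $\hat{L}_{n-1}^k$ is isomorphic as an oriented simplicial complex to a generalized octahedron obtained by relabeling a subset of vertices with primes, and by the orientation convention chosen in Lemma~\ref{lemma:orientable_filtration_octa} the sum $\sum_{\sigma \in \hat{L}_{n-1}^k}\ket{\sigma}$ reproduces the standard cycle of that octahedron, so its boundary vanishes. Since the $\hat{L}_{n-1}^k$ for different $k$ involve disjoint simplices in $C_{n-1}$, $\partial_{n-1}$ acts layer-wise and annihilates the whole sum. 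For the cocycle condition, I would observe that a coface of any $\sigma \in \hat{L}_{n-1}^k$ lies in $L_n^k$ or $L_n^{k+1}$, i.e., only at the interfaces between adjacent layers. By Lemma~\ref{lemma:orientable_filtration_octa} each such interface $n$-simplex $\tau \in L_n^k$ has exactly one face in $\hat{L}_{n-1}^{k-1}$ and exactly one face in $\hat{L}_{n-1}^k$, on which it induces opposite orientations; therefore the two contributions to $\braket{\tau|\delta_{n-1}|\phi_{n-1}}$ cancel. Cofaces that are not interface simplices contain at most one face in $\bigsqcup_k \hat{L}_{n-1}^k$, but one can check using the explicit product structure of $L_n^k$ that any coface of a simplex $\sigma \in \hat{L}_{n-1}^k$ is necessarily an interface simplex, so no uncancelled contribution remains.

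Finally I would note that $\ket{\phi_{n-1}}$ is clearly non-zero (its restriction to $\hat{L}_{n-1}^0$ is already the non-zero standard cycle of $K$), and combined with the topological computation $\dim \ker \Delta_{n-1}(L) = 1$, this forces $\mathrm{Span}(\ket{\phi_{n-1}}) = \ker \Delta_{n-1}(L)$. The main obstacle is the cocycle verification: one must carefully track the product-set description of $L_n^k$ to rule out ``extra'' cofaces not covered by the pairing between $\hat{L}_{n-1}^{k-1}$ and $\hat{L}_{n-1}^{k}$, and to confirm that the uniform orientations from Lemma~\ref{lemma:orientable_filtration_octa} give exact sign cancellations rather than reinforcements on each interface simplex.
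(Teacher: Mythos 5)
Your proposal is correct and the core verification --- checking that $\ket{\phi_{n-1}}$ is simultaneously a cycle (layer by layer, since each $\hat{L}_{n-1}^k$ is the maximal-simplex set of a generalized octahedron) and a cocycle (each $\tau\in L_n^k$ has exactly one face in $\hat{L}_{n-1}^{k-1}$ and exactly one in $\hat{L}_{n-1}^{k}$, inducing opposite orientations, so the contributions to $\braket{\tau|\delta_{n-1}|\phi_{n-1}}$ cancel) --- is exactly the argument the paper gives. Where you diverge is the upper bound $\dim\ker\Delta_{n-1}(L)\leq 1$: the paper deduces it from Lemma~\ref{lemma:boundary}, i.e.\ from the explicit fact that $\ket{c_{n-1}}-\ket{c_{n-1}'}$ is the boundary of $\sum_{\tau\in L_n}\ket{\tau}$, together with the (somewhat informal) remark that the gluing introduces no new holes; you instead invoke Lemma~\ref{lemma:thickening_graph} to identify $L$ with a triangulation of $K\times I\simeq S^{n-1}$ and read off $H_{n-1}(L)\cong\mathbb{R}$ directly. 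Your route is arguably tighter, since it replaces the ``no additional holes'' assertion with a homotopy-equivalence argument; the paper's route has the advantage of being self-contained at the chain level and of producing the explicit bounding chain that is reused elsewhere in the construction. Either way the conclusion follows, since $\ket{\phi_{n-1}}\neq 0$ and lies in the one-dimensional kernel.
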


\begin{proof}
First, there are two orthogonal harmonics in the two copies of $K$ that are $\ket{c_{n-1}}$ and $\ket{c_{n-1}'}$ before gluing these two cycles. 
As we have seen in Lemma~\ref{lemma:boundary}, $\ket{c_{n-1}}-\ket{c_{n-1}'}$ is a boundary in $L$. Therefore, $\dim({\Delta_{n-1}(L)})\leq 1$ because no additional holes are introduced in the gluing.

Note that 
$$
\partial_{n-1}\left(\sum_{\sigma_k \in \hat{L}_{n-1}^k} \ket{\sigma_k}\right) =0
$$
for each $k=0,1,...,n$ by construction.
This means $\partial_{n-1}\ket{\phi_{n-1}}=0$.
Therefore, in order to verify that $\Delta_{n-1}\ket{\phi_{n-1}}=0$, it is enough to see that $\delta_{n-1}\ket{\phi_{n-1}}=0$. 

It can be seen that 
$\hat{L}_{n-1}^k$ is contained only in $L_n^k\sqcup L_n^{k+1}$ as faces for any $k=0,1,...,n$ where $L_n^0=L_n^{n+1}=\{\emptyset\}$. 
Moreover, the intersection of the set of cofaces of $\hat{L}_{n-1}^k$ and $\hat{L}_{n-1}^{k+1}$ is 
precisely $L_n^{k+1}$. 
Therefore, it is enough to show that 
\begin{equation}
\label{eq:vanish}
\Pi_{k}\cdot  \delta_{n-1}
\left(
\sum_{\sigma_{l} \in \hat{L}_{n-1}^{k-1}} \ket{\sigma_{k-1}}  + \sum_{\sigma_n \in \hat{L}_{n-1}^{k}} \ket{\sigma_{k}}\right)=0
\end{equation}
where $\Pi_{k+1}$ is the projector onto $\mathrm{Span}\left(\{\ket{\tau}\}_{\tau\in L_n^k}\right)$. 
Recall that
$$
L_n^k = \{1,2\}\times \cdots \times \{2k-3,2k-2\} \times \{(2k-1)(2k-1'),(2k)(2k')\} \times \{2k+1',2k+2'\} \times \cdots \times \{2n-1',2n'\}
$$
$$
\hat{L}_{n-1}^k = \{1,2\}\times \cdots \times \{2k-3,2k-2\} \times \{2k-1',2k'\} \times \{2k+1',2k+2'\} \times \cdots \times \{2n-1',2n'\}. 
$$
Therefore, every simplex $\tau \in L_n^k$ has precisely one simplex $\sigma_{k-1} \in \hat{L}_{n-1}^{k-1}$ as a face and also one simplex $\sigma_{k} \in \hat{L}_{n-1}^{k}$ as a face. The orientations are taken such that $\sigma_{k-1},\sigma_k$ induces opposite orientations on $\tau$. 
This implies \eqref{eq:vanish} holds.

\end{proof}

\subsection{Gluing Gadget for projectors onto $\ket{x}-\ket{y}$}
\label{sec:MAgadget:gluing}

\paragraph{Gluing $\ket{x},\ket{y}$ for  $x,y \in \{0,1\}^m$}

Let $h$ be a projector onto 
$$\ket{x}-\ket{y}$$
$x,y \in \{0,1\}^{m}$. 
Our strategy for implementing the projector for $h$ is to connect two (disjoint) $\mathrm{Enc}(\ket{x})$ and $\mathrm{Enc}(\ket{y})$ cycles by a ``wormhole''. 

First, we fix the labeling of the vertices relevant to $\ket{x}$ and $\ket{y}$. 
For $\mathrm{Enc}(\ket{x})$, we label the  vertices such that the maximal faces of the corresponding generalized octahedron are given by 
$$
K^x=
\{1^{x_1}_1,2^{x_1}_1\} * \{3^{x_1}_1,4^{x_1}_1\} * \cdots *
\{1^{x_m}_m,2^{x_m}_m\} * \{3^{x_m}_m,4^{x_m}_m\}. 
$$
Similarly, we label the vertices of $\mathrm{Enc}(\ket{y})$ with 
$$
K^y=
\{1^{y_1}_1,2^{y_1}_1\} * \{3^{y_1}_1,4^{y_1}_1\} * \cdots * 
\{1^{y_m}_m,2^{y_m}_m\} * \{3^{y_m}_m,4^{y_m}_m\}. 
$$
It is important to 
note that when $x_i=y_i$, $1^x_i=1^y_i$, $2^x_i=2^y_i$, $3^x_i=3^y_i$ and $4^x_i=4^y_i$, i.e., there may be some overlaps of vertices between $K^x$ and $K^y$. 

We introduce three intermediate generalized octahedra $K'^x, K'^y, K''$ composed as
$$
K'^x=
\{1'^x_1,2'^x_1\} * \{3'^x_1,4'^x_1\} * \cdots *
\{1'^x_m,2'^x_m\} * \{3'^x_m,4'^x_m\}
$$
$$
K'^y=
\{1'^y_1,2'^y_1\} * \{3'^y_1,4'_1\} * \cdots  * 
\{1'^y_m,2'^y_m\} * \{3'^y_m,4'_m\}. 
$$
$$
K''=
\{1''_1,2''_1\} * \{3''_1,4''_1\} * \cdots  * 
\{1''_m,2''_m\} * \{3''_m,4''_m\}. 
$$
Note that the vertices of $K'^x$ and $K'^y$ are completely disjoint, even if some overlaps of vertices between $K^x$ and $K^y$.  
The introduced generalized octahedra are glued as 
\begin{align}
\label{eq:glue4}
   & K^x=\left(\bigotimes_{i=1}^m \xsinglegraph\right)\ {\longrightarrow}\ \
K'^x=\left(\bigotimes_{i=1}^m \xxsinglegraph\right)  \longleftarrow K''=\left(\bigotimes_{i=1}^m \primesinglegraph\right) \\
& \longrightarrow K'^y=\left(\bigotimes_{i=1}^m \yysinglegraph\right) \longleftarrow 
K^y=\left(\bigotimes_{i=1}^m \ysinglegraph\right). \nonumber
\end{align}
The arrow indicates the ordering of vertices in the gluing procedure:
$K^x$ is glued {\it to} $K'^x$  w.r.t the ordering of vertices 
$$
1^{x_i}_i<2^{x_i}_i<3^{x_i}_i<4^{x_i}_i<1'^{x}_i<2'^{x}_i<3'^{x}_i<4'^{x}_i.
$$
We may write this relation as $K^x<K'^x$. Then, other gluing is performed w.r.t the ordering $K''<K'^x$, $K''<K'^y$, and $K^y<K'^y$. 

We need to add more simplices in order to prevent us from generating new holes in the gadget construction, see Figure~\ref{fig:newhole}. 
Figure~\ref{fig:newhole} shows the graph of the $i$-th component that appears after the above gluing procedure for the case $x_i=y_i$ and $x_i\neq y_i$.  
As can be seen from the figure, in the case $x_i=y_i$, there is a new hole $4_i^{x_i}4_i^{'x}+4_i^{'x}4''_i+4''_i4_i^{'y}+4_i^{'y}4_i^{y_i}+4_i^{y_i}4_i^{x_i}$! 
The edge $4_i^{x_i}4_i^{y_i}$ was introduced so that the qubit graph does not consist of many connected components. However, this leads to the production of a new hole in the gadget's construction. 
In the case $x_i= y_i$, there also appears a new hole $4_i^{x_i}4_i^{'x}+4_i^{'x}4''_i+4''_i4_i^{'y}+4_i^{'y}4_i^{x_i}$! The condition $x_i= y_i$ leads to making a torus. 
These new holes in the $i$-th component will lead to new $2m-1$-dimensional holes in the whole gadget complex as well. 

\begin{figure}
    \centering
    \includegraphics[width=1\linewidth]{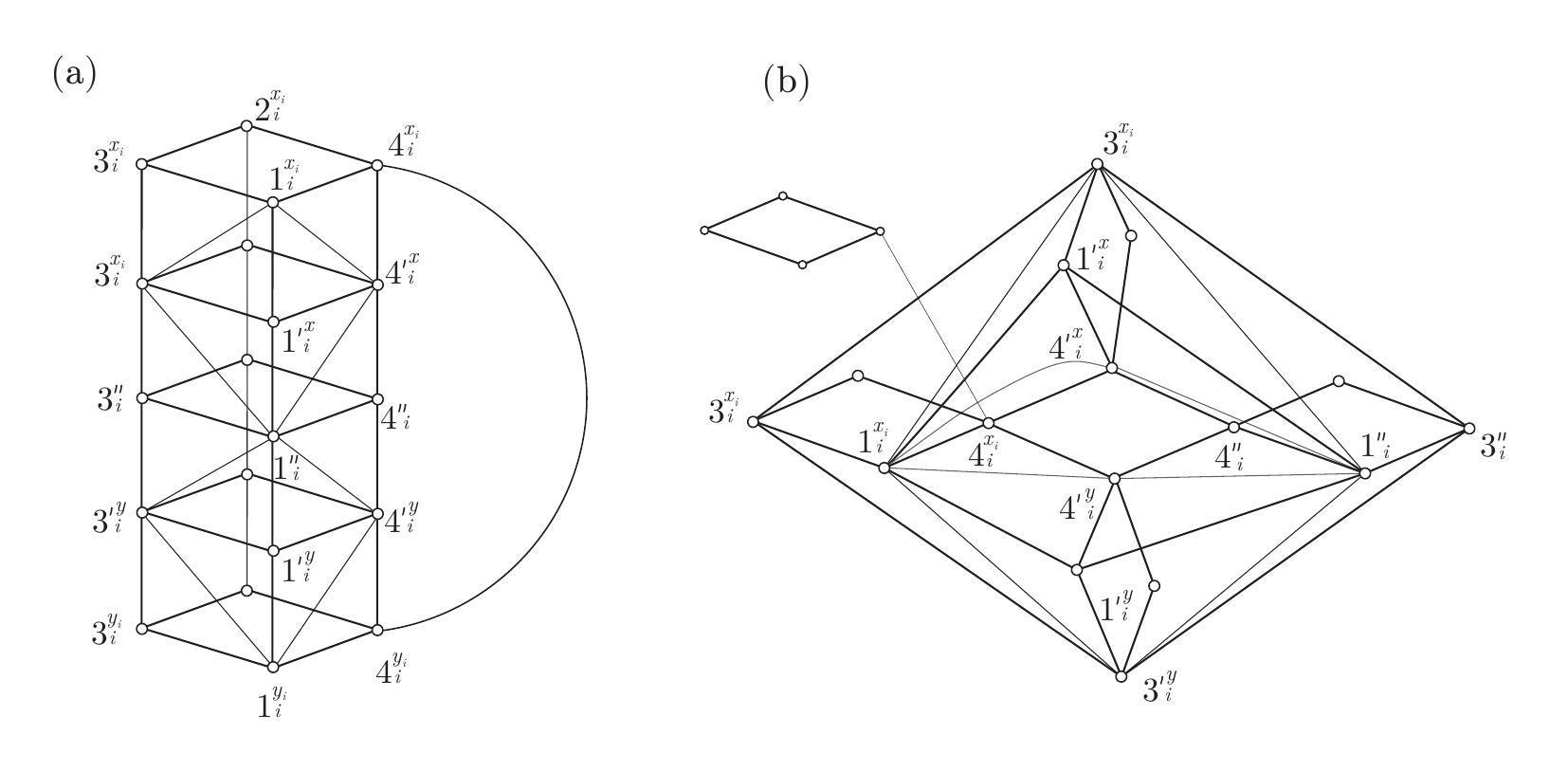}
    \caption{Graph of the $i$-th component after gluing generalized octahedra. (a) The case $x_i=y_i$. (b) The case $x_i\neq y_i$.  In both (a) and (b), not all the edges are described for maintaining visibility. }
    \label{fig:newhole}
\end{figure}

In order to prevent the gadget from creating new holes, we add more simplices. 
We modify the construction so that the $i$-th component in the gadget graph looks like Figure~\ref{fig:gadgetgraph_modified}. 
It can be seen that the newly generated holes in Figure~\ref{fig:newhole} now appear as boundaries. 

\begin{remark}
    Our construction introduces three intermediate generalized tetrahedra rather than two intermediate ones. 
    There are two reasons for this. The first reason is that we can use a symmetric ordering of vertices for $x$ and $y$. 
    Second, if we choose two intermediate ones, there are unwanted triangles that appear automatically. For example, $4_i^{x_i}4_i^{'x}4_i^{'y}$ becomes triangle in Figure \ref{fig:newhole} (b) if we construct without $4''_i$. 
    Such triangles read to new 2-dimensional homology classes in Figure \ref{fig:newhole} (b). In order to avoid such confusion, we have chosen a construction with three intermediate copies of the cycles. 
\end{remark}

\begin{figure}
    \centering
    \includegraphics[width=1\linewidth]{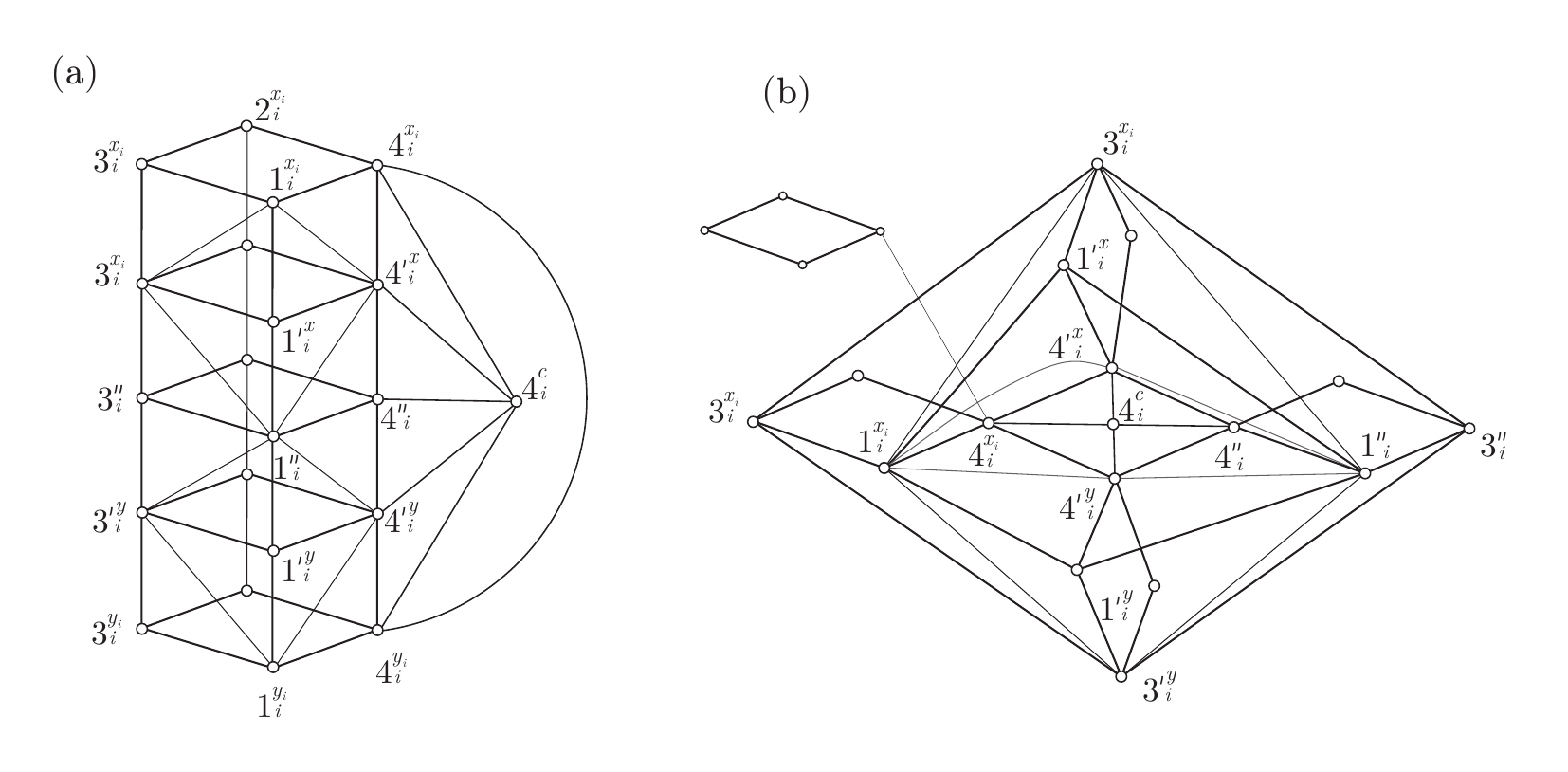}
    \caption{We add central vertex $4_i^c$ for each clock $i$ in the final gadget construction.}
    \label{fig:gadgetgraph_modified}
\end{figure}

\paragraph{Description of the gadget graph and simplicial complex}

Now, we describe the gadget simplicial complex. 
Recall the simplicial complex of eq.~\eqref{eq:glue4}
$$
K^x \rightarrow K'^x \leftarrow K'' \rightarrow K'^y \leftarrow K^y.
$$
According to Lemma~\ref{lemma:thickening_graph}, there is a corresponding graph $\tilde{\mathcal{G}}_m$
 for this clique complex. 
For each $i=1,...,m$, we add a vertex $4_i^c$. 
Each of $4_i^c$ is connected to 
$$4_i^{x_i},4_i^{'x},4_i^{''},4_i^{'x},4_i^{y_i}$$ inside the $i$-th block (it may be that $4_i^{x_i}=4_i^{y_i}$). 
Moreover, each of $4_i^c$ is connected to every vertices outside of the $i$-th block that at least one of  $4_i^{x_i},4_i^{'x},4_i^{''},4_i^{'x},4_i^{y_i}$ is connected to. 
Let us denote the graph constructed in this way as $\hat{\mathcal{G}}_m$. 

\paragraph{Weighting}
All the vertices that are not the vertices of the original qubit graph are weighted by $\lambda\ll 1$.

\subsection{Gadgets for filling computational basis states}
\label{sec:filling_gadgets}

For projectors onto any computational basis state $\ket{x}$ with $x\in\{0,1\}^m$, we use the same construction as that of \cite{king:qma}. 
Consider a projector $\ket{x}\bra{x}$. 
The state $\ket{x}$ is encoded into a cycle of a generalized octahedron $K^x$.  We glue $K^x$ to a copy of the generalized octahedron $K'^x$. Then we put a center vertex $v^c$ and fill the hole of $K'^x$. 
The vertices of $K^x$ are weighted by $1$, and the other vertices are weighted by $\lambda$. 

It is easy to see that there is an orientable filtration for this gadget as well. This is because we can similarly filtrate the gadget between $K^x$ and $K'^x$, and the orientation of the simplices that touch the central vertex can be taken arbitrarily (recall the center part of Figure~\ref{fig:orientable_filtration}).  

The following is shown in \cite{king:qma}.  This is a lemma 
for $\mathrm{Cl}(\hat{\mathcal{G}}_m)$, where $\hat{\mathcal{G}}_m$ is the qubit graph of the target $m$ qubits combined with the gadget graph for $\ket{x}\bra{x}$. 

\begin{lemma}[\cite{king:qma}]
\label{lemma:spectrum_filling}
    Let $\hat{\Delta}_{2m-1}$ be a Laplacian for $\mathrm{Cl}_{2m-1}(\hat{\mathcal{G}}_m)$
    where $\hat{\mathcal{G}}_m$ is the gadget graph for the projector onto a computational basis state $\ket{x}$ for $x\in\{0,1\}^m$. It holds that 
    \begin{itemize}
        \item $\hat{\Delta}_{2m-1}$ has a $(2^m-1)$-dimensional kernel, which is a $\mathcal{O}(\lambda)$-perturbation of the subspace 
        \item The first excited state of $\hat{\Delta}_{2m-1}$ is a $\mathcal{O}(\lambda)$-perturbation of $\mathrm{Enc}(\ket{x})$ and it has energy $\Theta(\lambda^{4m+2})$.
        \item The next lowest eigenvectors have eigenvalues $\Theta(\lambda^2)$, and they are $\mathrm{O}(\lambda)$-perturbation of sums of $(2m-1)$-simplices touching the central vertex $v^c$.
        \item The rest of the eigenvalues are $\Theta(1)$.
    \end{itemize}
\end{lemma}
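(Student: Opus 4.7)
The plan is to treat $\hat{\Delta}_{2m-1}$ as a $\lambda$-dependent perturbation of the weighted Laplacian at $\lambda=0$, where every gadget vertex (the copy $K'^x$ and the central vertex $v^c$) effectively decouples. At $\lambda=0$ the surviving Laplacian is that of $\mathrm{Cl}_{2m-1}(\mathcal{G}_m)$, which by the Künneth formula \eqref{eq:kunneth} has a $2^m$-dimensional kernel spanned by $\{\mathrm{Enc}(\ket{z})\}_{z\in\{0,1\}^m}$. For $\lambda>0$ we expect $2^m-1$ of these harmonics to survive as $\mathcal{O}(\lambda)$-perturbations, one (the one corresponding to $\ket{x}$) to become a very low-lying excited state because the gadget fills exactly its hole, and new bands of eigenvalues to emerge from simplices touching the gadget vertices.

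First, I would filter the chain space by the number of gadget vertices appearing in each simplex. Since each gadget vertex carries weight $\lambda$, this induces a filtration of $\hat{\Delta}_{2m-1}$ by even powers of $\lambda$, via the formulas for the weighted Laplacian matrix elements from Section~\ref{sec:preliminaries:sc}. A standard spectral-sequence analysis on this filtration shows that at the $\lambda^0$ page the kernel is precisely $\mathrm{span}\{\mathrm{Enc}(\ket{z})\}_z$, so all four bullets can be read off if we identify the leading-order corrections.

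Next, I would analyze the boundary map $\partial_{2m}$ restricted to the gadget's $2m$-simplices. The geometric key is that $K'^x$ is capped off by a cone over $v^c$, so $\mathrm{Enc}(\ket{x})$, initially supported on $K^x$, is rendered a boundary by a chain that first crosses the thickening from $K^x$ to $K'^x$ and is then filled by the cone through $v^c$. Tracking weights through the normalized basis \eqref{eq:normalizedbasis}, the minimal-norm filling chain has normalized norm $\Theta(\lambda^{-(2m+1)})$ (a factor $\lambda^{-2m}$ from $K'^x$ and $\lambda^{-1}$ from $v^c$), so the variational estimate
\[
\bra{\mathrm{Enc}(\ket{x})}\hat{\Delta}_{2m-1}\ket{\mathrm{Enc}(\ket{x})} \;\sim\; \frac{\|\mathrm{Enc}(\ket{x})\|^2}{\|\text{filling}\|^2} \;=\; \Theta(\lambda^{4m+2}).
\]
The $\Theta(\lambda^2)$ band in turn comes from eigenvectors supported on $(2m-1)$-simplices containing $v^c$: the diagonal entry $\sum_{v\in\sigma}w(v)^2$ of $\hat{\Delta}_{2m-1}$ is dominated by the single contribution $w(v^c)^2=\lambda^2$, while all off-diagonal couplings connecting such simplices carry an extra factor of $\lambda^2$ and therefore perturb the $\lambda^2$-eigenvalues only at order $\lambda^4$. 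Everything else lives on the $\lambda=0$ complex and has eigenvalue $\Theta(1)$.

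The main obstacle is to rigorously separate these four spectral groups and to show that the cross-terms between the $\Theta(\lambda^{4m+2})$ filled hole and the $\Theta(\lambda^2)$ tower do not shift the leading asymptotics. This requires a quantitative Weyl- or Davis–Kahan-type argument applied to the block decomposition provided by the filtration, together with an explicit construction of the filling chain so that the upper bound $\mathcal{O}(\lambda^{4m+2})$ can be matched by a corresponding lower bound via the Poincaré-type inequality $\|\partial_{2m}^\dagger \phi\|^2 \ge \lambda(\Delta^{\mathrm{up}}_{2m-1})\|\phi\|^2$ restricted to the relevant subspaces. Once these estimates are in place, the four claims follow directly and the lemma reproduces the statement of \cite{king:qma} in our (different) choice of base graph $\mathcal{G}$.
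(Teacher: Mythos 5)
Your proposal is essentially correct and follows the same route as the source: the paper itself imports this lemma from \cite{king:qma} without proof, but both that reference and the paper's own parallel analysis of the gluing gadget (Lemma~\ref{lemma:singlegadget_spectrum} in Section~\ref{sec:hardness_2}) proceed exactly as you do — filter the chain space by simplex weight, run the spectral sequence, and match the page at which each class dies to an eigenvalue band via Lemma~\ref{lemma:perturbation}, with the variational filling-chain estimate giving the $\Theta(\lambda^{4m+2})$ scale. The ``quantitative separation'' step you defer is precisely what Lemma~\ref{lemma:perturbation} (Forman's result) supplies, so no new argument is needed there.
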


\section{Spectral sequence and the spectrum of the gadget complex}
\label{sec:hardness_2}

We investigate the spectral property of the combinatorial Laplacian $\hat{\Delta}_d:C_d(\mathrm{Cl}(\hat{\mathcal{G}}_m))\rightarrow C_d(\mathrm{Cl}(\hat{\mathcal{G}}_m))$. 

There is a natural filtration of the chain space with the weight of simplices\footnote{This filtration is different from the filtration for uniform orientable filtration.}. 
Let 
$$C_d^k:= \mathrm{Span}(\{\ket{\sigma}: \sigma \in X_d\ \ \text{s.t.\ } w(\sigma)\in \{\lambda^k, \lambda^{k+1},...,\lambda^{d+1}\}  \}).$$ 
Then, it holds that
$$
C_d=C_d^0 \supseteq C_d^1 \supseteq \cdots  \supseteq C_d^{d+1}
$$
and 
$$
\delta_d(C_d^k) \subseteq C_{d+1}^k.
$$
Therefore, we can apply the analysis with spectral sequences. Lemma~\ref{lemma:perturbation} allows us to understand the spectral property of $\hat{\Delta}_d$ with the analysis of spectral sequences. 

The vector spaces for the $k$-th page $e_d^{k,l}$ are defined as follows. 
The 0th page is given by 
$$
e_d^{0,l}= C_d^l / C_d^{l+1} 
$$
and therefore, $$e^{0,l}_d \cong \mathrm{Span}(\ket{\sigma}:\sigma \in \mathrm{Cl}_d^l(\hat{\mathcal{G}}_m)),$$
where $\mathrm{Cl}_d^l(\hat{\mathcal{G}}_m))$ is the set of weight $\lambda^l$ simplices in $\mathrm{Cl}_d(\hat{\mathcal{G}}_m))$. 
There is a map induced by $\delta_d$:
$$
\delta_d^{0,l}:e_d^{0,l} \rightarrow e_{d+1}^{0,l}.  
$$
Then, the 1st page is given by 
$$
e_d^{1,l}= \ker(\delta_d^{0,l}) / \mathrm{Im}(\delta_{d-1}^{0,l}).
$$
Then, the coboundary map $\delta_d$ induces 
$$
\delta_d^{1,l}:e_d^{1,l} \rightarrow e_{d+1}^{1,l+1}. 
$$

The general $k$-th page can be introduced as follows. Suppose we have an induced coboundary map 
$$
\delta_d^{k,l}:e_d^{k,l} \rightarrow e_{d+1}^{k,l+k}.
$$
Then the $k+1$-th page can be defined as 
$$
e_d^{k+1,l}= \ker(\delta_d^{k,l}) / \mathrm{Im}(\delta_{d-1}^{k,l-k}).
$$
We can also consider an induced boundary map 
$$
\partial_d^{k,l}: e_d^{k,l}\rightarrow e_d^{k,l-k}.
$$
The $k+1$-th page can also be characterized by this induced boundary map as
$$
e_d^{k+1,l}= \ker(\partial_d^{k,l})/\mathrm{Im}(\partial_{d+1}^{k,l+k}). 
$$

Let us introduce a notation about the perturbation of subspaces.
\begin{definition}[Perturbation of subspaces \cite{king:qma}]
Consider a subspace $\mathcal{U}\subseteq \mathcal{V}$ of a complex vector space $\mathcal{V}$. 
Let $\mathcal{U}_\lambda\subseteq \mathcal{V}$ be a family of subspaces parameterized by a continuous parameter $\lambda \in[0,1]$. 
$\mathcal{U}_\lambda$ is said to be a $\mathcal{O}(\lambda)$-perturbation of 
$\mathcal{U}$ if 
there exists orthonormal basis $\{\ket{u}\}_u$ for $\mathcal{U}$ and $\{\ket{u,\lambda}\}_u$ for each $\mathcal{U}_\lambda$ s.t. for all $\ket{u}$,
$$
\|\ket{u,\lambda}-\ket{u}\|=\mathcal{O}(\lambda). 
$$
\end{definition}

By computing the spectral sequences, we can know about the spectral property of the combinatorial Laplacian following \cite{forman1994hodge, king:qma}.

\begin{lemma}[\cite{forman1994hodge, king:qma}]
\label{lemma:perturbation}
    The subspace
    $$\mathrm{Span}(\{\ket{\psi}:\ket{\psi} \text{ is an eigenvector of } \Delta_d \text{ with eigenvalue } \mathcal{O}(\lambda^k)\})$$
    $\mathcal{O}(\lambda)$-perturbation of $E_d^{k}$.
    Here, 
    $$
    E_d^{k}=\bigoplus_l E_d^{k,l}
    $$
    and $E_d^{k,l}$ is a space s.t. there is an isomorphism from $E_d^{k,l}$ to $e_d^{k,l}$. 
\end{lemma}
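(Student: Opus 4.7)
My plan is to prove this by combining degenerate perturbation theory with the standard identification of spectral-sequence pages with effective operators on kernels of lower-order perturbations, following the scheme of~\cite{forman1994hodge, king:qma} adapted to the vertex-weighting convention used here.

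First, I would decompose the coboundary operator with respect to the weight filtration. Since the matrix elements of $\delta_d$ in the normalized basis of eq.~\eqref{eq:normalizedbasis} carry a factor equal to the weight of the vertex being added, every such factor is either $1$ or $\lambda$, so one can write $\delta_d = \sum_{k\geq 0} \lambda^{k} \delta_d^{(k)}$ where $\delta_d^{(k)}$ raises the filtration index by exactly $k$. The analogous decomposition holds for $\partial_d$, giving
\begin{equation*}
\Delta_d = \Delta_d^{(0)} + \lambda R_d^{(1)} + \lambda^2 R_d^{(2)} + \cdots,
\end{equation*}
where $\Delta_d^{(0)}$ preserves the weight grading $\bigoplus_l e_d^{0,l}$ and is a direct sum of Hodge Laplacians built from $\delta_d^{(0)}$ and $\partial_d^{(0)}$. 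Its kernel is therefore isomorphic to $\bigoplus_l e_d^{1,l}$, giving a concrete subspace $E_d^{1}\subseteq C_d$ realizing the first page.

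Second, I would induct on the page $k$. The base case $k=0$ is trivial with $E_d^0 = C_d$. For the inductive step, assume the eigenvectors of $\Delta_d$ with eigenvalue $\mathcal{O}(\lambda^{k-1})$ span a subspace $\mathcal{V}_k$ that is an $\mathcal{O}(\lambda)$-perturbation of $E_d^{k-1}$; the remaining eigenvalues are bounded below by a positive constant by the previous page's analysis, so Feshbach/Schrieffer--Wolff perturbation theory applies uniformly in $\lambda$. Projecting $\Delta_d$ to $\mathcal{V}_k$ and expanding in $\lambda$, one finds that the leading nontrivial order is governed by the induced coboundary map $\delta_d^{k-1,l}:e_d^{k-1,l}\to e_{d+1}^{k-1,l+k-1}$ together with its adjoint (with respect to the inner product on the chosen subspace representatives). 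The kernel of the resulting effective Laplacian is precisely $\bigoplus_l e_d^{k,l}$ by the very definition of the $k$-th page of the spectral sequence, yielding the desired subspace $E_d^k$ and confirming that its $\mathcal{O}(\lambda)$-perturbation accounts for exactly the eigenvectors of eigenvalue $\mathcal{O}(\lambda^k)$.

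Third, I would verify the technical subtleties: (i) spectral gaps at every page are strictly positive (and independent of $\lambda$ because $\hat{\mathcal{G}}_m$ is finite and all non-$\lambda$ weights are $1$), so perturbation theory is well-controlled; (ii) choosing orthogonal complements to $\mathrm{Im}(\partial_{d+1}^{k-1,l+k-1})$ and $\mathrm{Im}(\delta_{d-1}^{k-1,l-k+1})^*$ inside $\ker(\delta_d^{k-1,l})\cap \ker(\partial_d^{k-1,l})$ realizes the quotient space $e_d^{k,l}$ as an actual subspace $E_d^{k,l}$, and the Hermitian adjoint induced on $E_d^{k,l}$ agrees with the one coming from $C_d$ under the embedding; (iii) the isomorphism $E_d^{k,l}\cong e_d^{k,l}$ is the one induced by projection to representatives, which is compatible with the Feshbach projection used in step two.

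The main obstacle is step (iii): the spectral-sequence pages are defined as successive subquotients, while perturbation theory produces actual subspaces of $C_d$, so realizing each $e_d^{k,l}$ as a subspace requires an iterative choice of harmonic representatives whose compatibility with the effective Hamiltonian expansion must be checked at every order. This compatibility is precisely what Forman established in the unweighted case~\cite{forman1994hodge} and what King adapted in~\cite{king:qma} to weighted complexes of the type we consider; I would invoke their framework directly, checking only that our specific vertex weighting (weights in $\{1,\lambda\}$ with $\lambda\ll 1$) satisfies the hypotheses of their spectral-sequence convergence argument.
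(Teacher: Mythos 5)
The paper does not actually prove this lemma: it is imported verbatim from \cite{forman1994hodge, king:qma}, so there is no in-paper argument to compare against. Your sketch is a faithful reconstruction of the standard Forman--King route (expand $\Delta_d$ in powers of $\lambda$ using the weight grading, induct on pages via degenerate perturbation theory, identify the effective operator on the $(k{-}1)$-th page with the Laplacian of the induced differentials $\delta_d^{k-1,l}$, and realize each subquotient $e_d^{k,l}$ as a subspace of harmonic representatives). You also correctly isolate the genuinely hard point --- the compatibility of the representative-choosing with the effective-Hamiltonian expansion at every order --- and defer it to the cited works, which is no weaker than what the paper itself does. One simplification you can make: since each (co)boundary step adds or removes exactly one vertex of weight $1$ or $\lambda$, the expansion terminates at $\delta_d=\delta_d^{(0)}+\lambda\,\delta_d^{(1)}$, so $\Delta_d$ has only finitely many terms.

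Two points deserve more care. First, the eigenvalue scale attached to the $k$-th page is $\lambda^{2k}$, not $\lambda^{k}$: the effective operator on $E_d^{k}$ is quadratic in the induced differential $\lambda^{k}\delta_d^{k,l}$, so the eigenvectors that die between page $k$ and page $k{+}1$ carry energy $\Theta(\lambda^{2k})$. This is consistent with the paper's later usage (e.g.\ energy $\Theta(\lambda^{2})$ for states killed on page $2$, and $\Theta(\lambda^{4m+2})$ for $\mathrm{Enc}(\ket{x}-\ket{y})$, which survives to page $2m{+}1$), but your induction, like the lemma statement itself, writes $\mathcal{O}(\lambda^{k})$; with that exponent the consecutive pages are not cleanly separated. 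Second, your claim in step (i) that the relevant spectral gaps are ``independent of $\lambda$'' is only true for the page operators on the abstract subquotients; once embedded in $C_d$ the gap protecting $\mathcal{V}_{k+1}$ inside $\mathcal{V}_k$ is $\Theta(\lambda^{2k})$, and one must check it dominates the neglected corrections of order $\lambda^{2k+1}$ for the Feshbach step to close. Neither issue breaks the argument, but both are exactly the bookkeeping that the convergence theorem in \cite{forman1994hodge} is doing for you, so they should be stated rather than absorbed into ``strictly positive gaps.''
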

The corresponding map from $e_d^{k,l}$ to $E_d^{k,l}$ is obtained by projecting a representative onto $$\mathrm{Span}(\ket{\sigma}:\sigma \in \mathrm{Cl}_d^l(\hat{\mathcal{G}}_m)).$$
We show the spectrum of the gluing gadget as follows.

\begin{lemma}
\label{lemma:singlegadget_spectrum}
Let $h_i$ be a projector onto $\ket{x}-\ket{y}$. Then, for the Laplacian $\hat{\Delta}_{2m-1}(h_i):C_d(\mathrm{Cl}(\hat{\mathcal{G}}_{2m-1}))\rightarrow C_d(\mathrm{Cl}(\hat{\mathcal{G}}_{2m-1}))$, the following holds. 
\begin{itemize}
    \item $\hat{\Delta}_{2m-1}$ has a $2^m-1$ dimensional kernel that is a $\mathcal{O}(\lambda)$-perturbation of $\mathrm{Span}(\{\mathrm{Enc}(\ket{x}+\ket{y}),\mathrm{Enc}(\ket{z}):z\neq x,y\}$.
    \item The first excited state of $\hat{\Delta}_{2m-1}$ is an $\mathcal{O}(\lambda)$-perturbation of $\mathrm{Span}(\{\mathrm{Enc}(\ket{x}-\ket{y})\})$ and with energy $\Theta(\lambda^{4m+2})$.
    \item The next lowest eigenvectors have eigenvalues $\Theta(\lambda^2)$. That is an $\mathcal{O}(\lambda)$-perturbation of $\mathrm{Span}
\Big( \ket{\sigma}: \sigma \in X^{\mathrm{core}}_{2m-1}
\Big)$ of eq.~\eqref{eq:core}.
    \item The other the eigenvalues are $\Theta(1)$.
\end{itemize}
\end{lemma}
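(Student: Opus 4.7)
The plan is to invoke Lemma~\ref{lemma:perturbation} and compute the relevant pages of the spectral sequence associated with the weight filtration $C_{2m-1}=C_{2m-1}^{0}\supseteq\cdots\supseteq C_{2m-1}^{2m}$, in which $C_{2m-1}^{l}$ is spanned by $(2m-1)$-simplices containing at least $l$ gadget vertices (those of weight $\lambda$). Since Lemma~\ref{lemma:perturbation} pins the four eigenvalue classes of $\hat{\Delta}_{2m-1}$ to the orders $\lambda^{0}, \lambda^{2}, \lambda^{4m+2}, \lambda^{0}$, it suffices to identify the classes surviving at the $k=1$, $k=2$, and $k=4m+2$ pages and match them with the subspaces appearing in the statement.

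First, I would compute the $0$-th and $1$-st pages. The differential $\delta^{0,l}$ acts within a fixed weight class, so $e_{2m-1}^{1,l}$ computes the cohomology of simplices built from exactly $l$ gadget vertices. For $l=0$ the only weight-$1$ vertices lie in the qubit graph $\mathcal{G}_{m}\subseteq K^{x}\cup K^{y}$, and a Künneth computation block by block (using eq.~\eqref{eq:kunneth} and Definition~\ref{def:encoding_map}) produces the full $2^{m}$-dimensional encoded space $\{\mathrm{Enc}(\ket{z})\}_{z\in\{0,1\}^{m}}$. For $l\geq 1$, a standard thickening/cone argument as in~\cite{king:qma} shows that the surviving classes are concentrated in two families: the sheets of $K'^{x}$, $K''$, $K'^{y}$ forming the wormhole, and the $m$ cones over the central vertices $4_{i}^{c}$ added in Figure~\ref{fig:gadgetgraph_modified}.

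Next, I would analyze the higher differentials. The cones over $4_{i}^{c}$ contribute matrix elements of order $\lambda$ between two successive pages, so they die at page $2$, producing the $\Theta(\lambda^{2})$ eigenvectors supported on $X^{\mathrm{core}}_{2m-1}$ of eq.~\eqref{eq:core}. The wormhole differential is the essential step: by Lemmas~\ref{lemma:boundary} and \ref{lemma:harmonics_octa} applied stage by stage to the chain $K^{x}\to K'^{x}\leftarrow K''\to K'^{y}\leftarrow K^{y}$, there is a $(2m)$-dimensional cobounding chain sending $\mathrm{Enc}(\ket{x})-\mathrm{Enc}(\ket{y})$ (with the opposite sign that the two end cycles inherit from the orientation of the thickening, as in Lemma~\ref{lemma:orientable_filtration_octa}) into a boundary, whereas $\mathrm{Enc}(\ket{x})+\mathrm{Enc}(\ket{y})$ is orthogonal to this image and survives to the infinite page. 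A direct count shows that a minimal such cobounding chain traverses $4m+2$ gadget vertices (the four thickening slabs of weight $\lambda^{m}$ each, plus the two center-correction layers of Figure~\ref{fig:gadgetgraph_modified}), so the relevant differential first appears at page $4m+2$, producing the unique eigenvalue of order $\lambda^{4m+2}$; the remaining $2^{m}-1$ classes $\mathrm{Enc}(\ket{x}+\ket{y})$ and $\{\mathrm{Enc}(\ket{z})\}_{z\neq x,y}$ constitute the kernel, and all other classes die at page $1$ and contribute $\Theta(1)$ eigenvalues.

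The main obstacle is verifying the exact exponent $\lambda^{4m+2}$. This requires an honest count of the gadget vertices that any representative of $\mathrm{Enc}(\ket{x})-\mathrm{Enc}(\ket{y})$ must traverse through the four thickenings, together with a careful check that the additional simplices introduced around each $4_{i}^{c}$ — inserted precisely to neutralize the spurious tori and $1$-cycles of Figure~\ref{fig:newhole} — do not open any low-weight shortcut through the wormhole and respect the block structure induced by the Künneth decomposition, so that the $x_{i}=y_{i}$ and $x_{i}\neq y_{i}$ blocks each contribute the same shift in weight class. Once this bookkeeping is in place, the rest of the argument is a page-by-page linear-algebra exercise analogous to the one carried out in~\cite{king:qma} for Lemma~\ref{lemma:spectrum_filling}.
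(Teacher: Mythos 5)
Your overall strategy---run the spectral sequence of the weight filtration and invoke Lemma~\ref{lemma:perturbation}---is exactly the paper's approach, but two of your key computations are wrong in ways that matter.

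First, your identification of the classes surviving to the first page is off. The paper shows that \emph{all} off-diagonal entries $e_d^{1,l}$ with $1\leq l\leq d$ vanish, and that the only surviving spaces besides the leftmost column are the diagonal entries $e_d^{1,d+1}\cong\mathrm{Span}(\ket{\sigma}:\sigma\in X^{\mathrm{core}}_d)$, where $X^{\mathrm{core}}$ consists of simplices containing a vertex of $K''$ (the vertices $1''_i,2''_i,3''_i,4''_i$)---not cones over the central vertices $4_i^c$, and not ``sheets of $K'^x,K'',K'^y$.'' Moreover, the vanishing of the off-diagonal entries is not a routine citation of~\cite{king:qma}: it requires augmenting the gadget with auxiliary weight-$1$ vertices $0_i^{x_i},0_i^{y_i}$ (Figure~\ref{fig:ancilla}) to kill the cohomology of the subcomplexes $\mathrm{Cl}^{\leq l-1}(\hat{\mathcal{G}}'_m)$ so that Lemma~\ref{lemma:subcomplex} applies. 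Without this step you cannot conclude that the $\Theta(\lambda^2)$ eigenspace is exactly $X^{\mathrm{core}}_{2m-1}$ and that everything else at positive weight contributes $\Theta(1)$.

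Second, your derivation of the exponent $4m+2$ is incorrect, even though the number comes out right. A $2m$-dimensional cobounding chain has only $2m+1$ vertices, so it cannot ``traverse $4m+2$ gadget vertices.'' The actual cobounding chain is the explicit class $\ket{E_{2m}^{2,2m+1}}$ built from one weight-$\lambda^3$ triangle block and $m-1$ weight-$\lambda^2$ edge blocks, i.e.\ it sits in weight class $2m+1$; the differential killing $\mathrm{Enc}(\ket{x}-\ket{y})$ is therefore $\delta_{2m-1}^{2m+1,0}$, acting on page $2m+1$, and the class dies at page $2m+2$. The eigenvalue exponent $4m+2=2(2m+1)$ then comes from the quadratic relation between the coboundary map and the Laplacian (compare: $X^{\mathrm{core}}$ dies at page $2$ and has eigenvalue $\Theta(\lambda^2)$), not from a page-$(4m+2)$ differential. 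You also need to verify, as the paper does, that $\ket{E_{2m}^{2,2m+1}}$ is the \emph{unique} class surviving in $e_{2m}^{2,2m+1}$ and that nothing happens on pages $3$ through $2m+1$; your proposal defers precisely this bookkeeping, which is the substance of the proof.
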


We prove this lemma in the following subsections. 

\subsection{0th page}

As we have seen, on the 0th page of the spectral sequences, 
$$
e_d^{0,l} \cong \mathrm{Span}(\ket{\sigma}: \sigma \in  \mathrm{Cl}_d^l(\hat{\mathcal{G}}_m))=: E_d^{0,l}
$$
for all $d,l$, where $\mathrm{Cl}_d^l(\hat{\mathcal{G}}_m)$ is the set of weight $\lambda^l$ $d$-simplices in $\mathrm{Cl}(\hat{\mathcal{G}}_m)$. 
Figure~\ref{fig:filtration_gadget} shows a filtration according to the weight of simplices for the $i$-th block.

\begin{figure}
    \centering
    \includegraphics[width=1\linewidth]{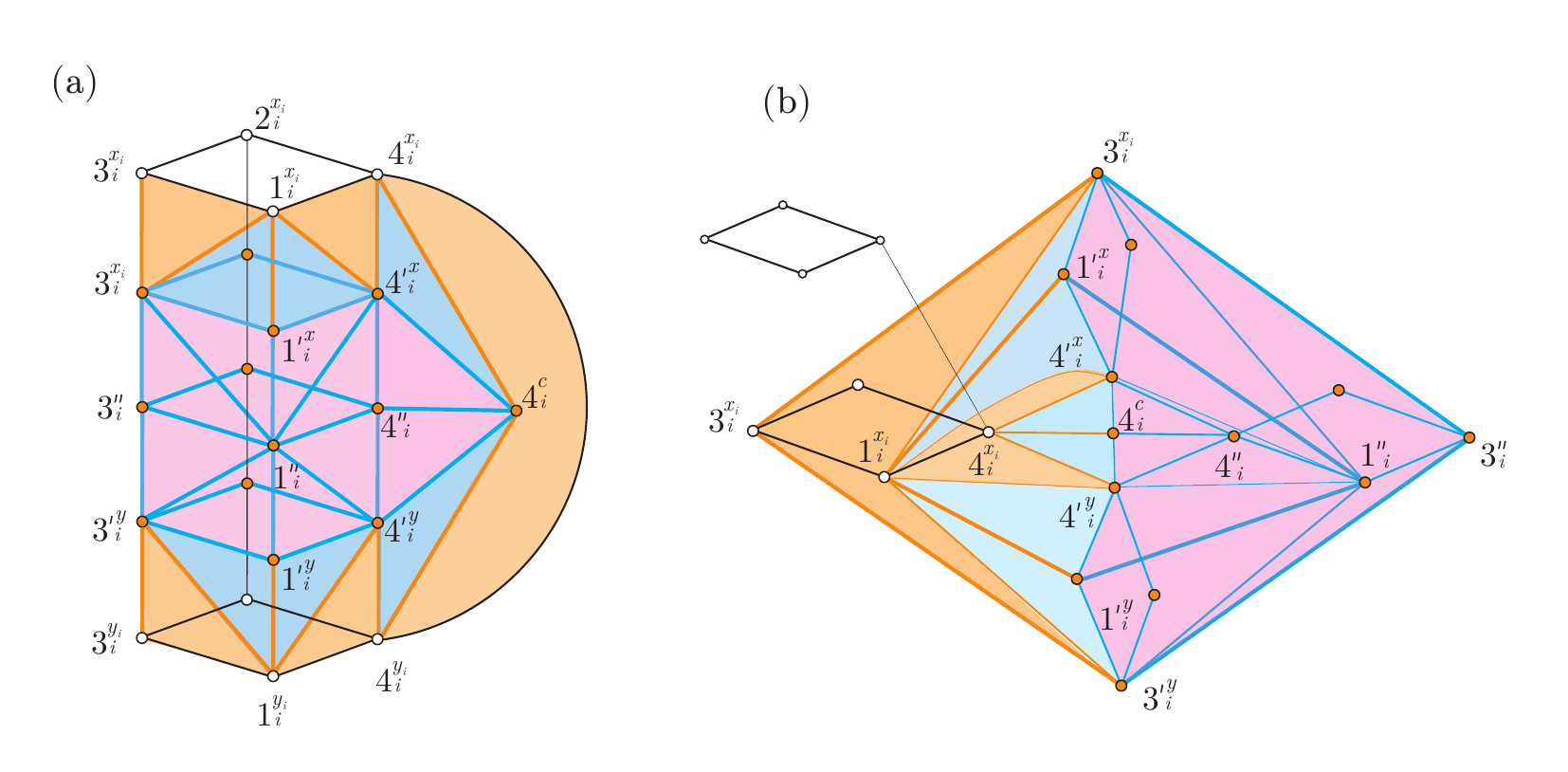}
    \caption{Filtration of the gadget according to the weight of simplices for the $i$-th block. Simplices with the same weight are colored with the same color. }
    \label{fig:filtration_gadget}
\end{figure}

\subsection{1st page}

\paragraph{Leftmost column $e_d^{1,0}$}

For the ``leftmost column'' $e_d^{1,0}$, the only non-trivial subspace appears when $d=2m-1$ because this is simply the homology of the qubit gadget clique complex. 
By the construction, each block of the qubit graph only has a single connected component and two 1-dimensional holes. 
Therefore, 
$$
e_{2m-1}^{1,0} \cong \mathrm{Enc}(\mathcal{H}_m) =: E_{2m-1}^{1,0}
$$
and $e_{d}^{1,0}\cong 0$ for all $d\neq 2m-1$. 

\paragraph{Off-diagonal elements $e_d^{1,l}$ where $l \neq d+1$}

We introduce the following lemma. 
\begin{lemma}[\cite{king:qma}]
\label{lemma:subcomplex}
    Let $\mathcal{P}$ be a simplicial complex and $\mathcal{Q}\subseteq \mathcal{P}$ be a subcomplex. 
    If $\mathcal{Q}$ has no $d-1$-cohomology, 
    $$
    C_d(\mathcal{Q})^\perp \cap \delta_{d-1}  C_{d-1}(\mathcal{Q})^\perp= 
    C_d(\mathcal{Q})^\perp \cap \delta_{d-1}  C_{d-1}(\mathcal{P}).
    $$
\end{lemma}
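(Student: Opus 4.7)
The plan is to establish the two inclusions separately. The inclusion $\subseteq$ is immediate because $C_{d-1}(\mathcal{Q})^\perp \subseteq C_{d-1}(\mathcal{P})$, so the nontrivial task is the reverse inclusion $\supseteq$. Given $\ket{\xi} = \delta_{d-1}\ket{c}$ with $\ket{c} \in C_{d-1}(\mathcal{P})$ and $\ket{\xi} \in C_d(\mathcal{Q})^\perp$, I will modify $\ket{c}$ without changing its coboundary until the modified chain lies in $C_{d-1}(\mathcal{Q})^\perp$.

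The crucial structural fact is a subcomplex-closure property: if $\tau \in \mathcal{Q}_d$, then every face of $\tau$ lies in $\mathcal{Q}_{d-1}$. Contrapositively, if $\sigma \in \mathcal{P}_{d-1} \setminus \mathcal{Q}_{d-1}$, then no coface of $\sigma$ can belong to $\mathcal{Q}_d$. Let $P_{\mathcal{Q}}$ denote orthogonal projection onto $C_d(\mathcal{Q})$. The observation above gives two facts at once: first, $P_{\mathcal{Q}} \delta_{d-1} \ket{\eta} = 0$ for every $\ket{\eta} \in C_{d-1}(\mathcal{Q})^\perp$; second, the restriction $P_{\mathcal{Q}} \delta_{d-1}\big|_{C_{d-1}(\mathcal{Q})}$ coincides with the internal coboundary $\delta^{\mathcal{Q}}_{d-1}$ of $\mathcal{Q}$ (since cofaces of $\mathcal{Q}$-simplices that lie in $\mathcal{Q}$ are exactly the $\mathcal{Q}$-cofaces).

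Decompose $\ket{c} = \ket{c_1} + \ket{c_2}$ with $\ket{c_1} \in C_{d-1}(\mathcal{Q})$ and $\ket{c_2} \in C_{d-1}(\mathcal{Q})^\perp$. Because $\ket{\xi} \in C_d(\mathcal{Q})^\perp$, we have $P_{\mathcal{Q}}\delta_{d-1}\ket{c} = 0$, and since $P_{\mathcal{Q}}\delta_{d-1}\ket{c_2} = 0$ by the observation, we obtain $\delta^{\mathcal{Q}}_{d-1}\ket{c_1} = 0$. So $\ket{c_1}$ is a $(d-1)$-cocycle of $\mathcal{Q}$, and the hypothesis $H^{d-1}(\mathcal{Q}) = 0$ furnishes $\ket{b} \in C_{d-2}(\mathcal{Q})$ with $\ket{c_1} = \delta^{\mathcal{Q}}_{d-2}\ket{b}$.

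The final step lifts this relation from $\mathcal{Q}$ to $\mathcal{P}$ using $\delta^2 = 0$. Write $\delta_{d-2}\ket{b} = \ket{c_1} + \ket{b'}$, where $\ket{b'} := \delta_{d-2}\ket{b} - \delta^{\mathcal{Q}}_{d-2}\ket{b} \in C_{d-1}(\mathcal{Q})^\perp$ by construction. Applying $\delta_{d-1}$ and using $\delta_{d-1}\delta_{d-2} = 0$ in $\mathcal{P}$ gives $\delta_{d-1}\ket{c_1} = -\delta_{d-1}\ket{b'}$, so $\ket{\xi} = \delta_{d-1}\ket{c} = \delta_{d-1}(\ket{c_2} - \ket{b'})$ with $\ket{c_2} - \ket{b'} \in C_{d-1}(\mathcal{Q})^\perp$, as required. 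The only real delicate point is keeping the ambient coboundary $\delta_{d-1}$ distinct from the internal coboundary $\delta^{\mathcal{Q}}_{d-1}$; once the subcomplex-closure observation is isolated, the rest is a direct verification.
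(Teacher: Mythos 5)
Your proof is correct; the paper itself does not reprove this lemma but cites \cite{king:qma}, and your argument (split $\ket{c}$ into its $\mathcal{Q}$ and $\mathcal{Q}^\perp$ parts, observe via subcomplex-closure that the $\mathcal{Q}$-part is an internal cocycle, kill it using $H^{d-1}(\mathcal{Q})=0$, and absorb the correction term into $C_{d-1}(\mathcal{Q})^\perp$ using $\delta^2=0$) is essentially the same argument as in that reference. The only implicit ingredient worth flagging is that the inner product is diagonal in the simplex basis (true here even with vertex weights), so that $C_{d-1}(\mathcal{Q})^\perp$ is exactly the span of simplices in $\mathcal{P}_{d-1}\setminus\mathcal{Q}_{d-1}$, which is what makes your two projection facts hold.
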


Then, we claim the following:
\begin{claim}
\label{claim:off_diag}
    $e_d^{1,l}\cong 0$ for $1\leq l \leq d$. 
\end{claim}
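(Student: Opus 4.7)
The plan is to decompose $(e_\bullet^{0,l}, \delta^{0,l})$ according to the set $T$ of weight-$\lambda$ vertices carried by each simplex. A $d$-simplex of weight $\lambda^l$ decomposes uniquely as $T\sqcup U$ with $|T|=l$ and $U$ a weight-$1$ clique. Since $\delta^{0,l}$ adds only weight-$1$ vertices (any weight-$\lambda$ coface raises the weight to $\lambda^{l+1}$ and is killed in $C_{d+1}^l/C_{d+1}^{l+1}$), the set $T$ is preserved, giving
\[
(e_\bullet^{0,l},\delta^{0,l}) \;=\; \bigoplus_{T \,:\, |T|=l} (C_\bullet^T,\delta^T).
\]
Identifying $T\sqcup U$ with $U\in \mathrm{link}_W(T)$, where $W:=\mathrm{Cl}(\mathcal{G}_m)$ is the clique complex on the weight-$1$ vertices, identifies $(C_\bullet^T,\delta^T)$ with the augmented cochain complex of $\mathrm{link}_W(T)$ shifted up by $l$, so that
\[
e_d^{1,l} \;\cong\; \bigoplus_{T \,:\, |T|=l} \widetilde{H}^{\,d-l}\!\bigl(\mathrm{link}_W(T)\bigr).
\]
It therefore suffices to show that $\mathrm{link}_W(T)$ has trivial reduced cohomology in every degree $\ge 0$ for every admissible non-empty $T$.

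The key geometric observation is that each weight-$\lambda$ vertex $v$ connects in $W$ only to a small, contractible sub-complex within its own qubit block, while in every other block it is adjacent to the full qubit-graph vertex set there. Consequently the link factorizes as a join over blocks,
\[
\mathrm{link}_W(T) \;\cong\; \mathrm{link}_W(T)_1 \,*\, \mathrm{link}_W(T)_2 \,*\, \cdots \,*\, \mathrm{link}_W(T)_m,
\]
where $\mathrm{link}_W(T)_i$ denotes the restriction to the $i$-th block. Each factor with $T$ disjoint from block $i$ is the full qubit-graph block, and each factor with $T$ intersecting block $i$ is one of three simple pieces: the empty complex (when $T$ contains an innermost vertex of $K''$, whose only neighbors are themselves weight-$\lambda$), a single vertex or small simplex (when $T$ contains a vertex of $K'^x$ or $K'^y$, whose weight-$1$ neighbors in its block are the matching vertex in $K^x$ or $K^y$), or the sub-complex on $\{4_i^{x_i},4_i^{y_i}\}$ together with the edge $4_i^{x_i}4_i^{y_i}$ of the qubit graph (when $T$ contains the central vertex $4_i^c$). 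In every non-empty case the block factor is contractible, so the join is contractible and has trivial reduced cohomology; and if any block factor is empty then the whole join is empty and its reduced cohomology vanishes as well. The analogous verification for vertices inherited from the filling gadget is handled by Lemma~\ref{lemma:spectrum_filling}.

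The main obstacle will be carefully verifying that the link of a central vertex $4_i^c$ in $W$ reduces to the claimed contractible sub-complex in both of the cases $x_i=y_i$ and $x_i\ne y_i$; this is precisely what the explicit addition of $4_i^c$ was designed to achieve, namely to contract the ``new'' $1$-dimensional holes of Figure~\ref{fig:newhole} created by the gluing of $K^x$ to $K^y$. Some care is also needed when $T$ contains several added vertices simultaneously, since one must check that the resulting intersection of neighborhoods still decomposes blockwise. Once these case distinctions are settled, the join factorization immediately yields $\widetilde{H}^{d-l}\bigl(\mathrm{link}_W(T)\bigr)=0$ for every $|T|=l\ge 1$ and every $d\ge l$, and the direct-sum decomposition above gives $e_d^{1,l}\cong 0$ whenever $1\le l\le d$.
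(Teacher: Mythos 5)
Your overall strategy is genuinely different from the paper's. The paper proves this claim by adjoining auxiliary weight-$1$ axial vertices $0_i^{x_i},0_i^{y_i}$ so that the weight-truncated subcomplexes $\mathrm{Cl}^{\leq l-1}(\hat{\mathcal{G}}'_m)$ become cohomologically trivial, and then invokes Lemma~\ref{lemma:subcomplex} to promote ``coboundary in the modified complex'' to ``coboundary coming from $e_{d-1}^{0,l}$.'' You instead compute the first page directly: the observation that $\delta^{0,l}$ preserves the set $T$ of weight-$\lambda$ vertices is correct, the resulting decomposition $e_d^{1,l}\cong\bigoplus_{T}\widetilde{H}^{\,d-l}(\mathrm{link}_W(T))$ is sound, and the blockwise join factorization of each link is valid because the weight-$1$ part of the complex is $\mathrm{Cl}(\mathcal{G}_m)$, a join over qubit blocks. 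This is an attractive, more concrete route that avoids the auxiliary-vertex trick entirely.

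However, the execution has two concrete errors plus the acknowledged gap. First, the statement that a weight-$\lambda$ vertex ``in every other block is adjacent to the full qubit-graph vertex set there'' contradicts the thickening construction (Definition~\ref{def:thickening}, Lemma~\ref{lemma:thickening_graph}): an edge $((u,0),(v,1))$ exists only when $u\leq v$, so a primed vertex $(v,1)$ has \emph{no} weight-$1$ neighbours in blocks later than its own, and in earlier blocks it sees only the $K^x$ (or $K^y$) half of the qubit block. The argument survives this, but only because what actually matters is that the factor in $(v,1)$'s \emph{own} block is a cone with apex $(v,0)$ — the untouched factors are irrelevant once one factor is acyclic, and they are in any case not ``full qubit-graph blocks'' (which, note, are \emph{not} contractible: they carry two $1$-holes). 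Second, ``if any block factor is empty then the whole join is empty'' is false: an empty block factor is $\{\emptyset\}$ and acts as the identity for the join, so a priori you could be left with a join of non-acyclic untouched factors. You are rescued only by the accident that the sole vertices whose in-block weight-$1$ neighbourhood is empty are the $K''$ vertices, whose \emph{global} weight-$1$ neighbourhood is empty, so the entire link collapses to $\{\emptyset\}$; and even then $\widetilde{H}^{\,d-l}(\{\emptyset\})$ is nonzero in degree $-1$, which is excluded only because the claim restricts to $l\leq d$. Both of these points, together with the multi-vertex-$T$ verification you defer (where one must check that the $0$-copy of the minimal base vertex of $T$ always survives in the common neighbourhood and serves as a cone apex, and handle mixtures involving $4_i^c$), constitute the actual content of the proof and are not yet supplied. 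As written, the proposal is an incomplete sketch whose stated geometric justifications would not withstand scrutiny, even though the final conclusion and the skeleton of the argument are salvageable.
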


\begin{proof}
    We consider a modified gadget with axial vertices $\{0_i^{x_i},0_i^{y_i}\}_i$ with weight $1$ as in Figure~\ref{fig:ancilla}. 
    Then, we can show the claim in a way similar with~\cite{king:qma}.

    \begin{figure}
        \centering
        \includegraphics[width=1\linewidth]{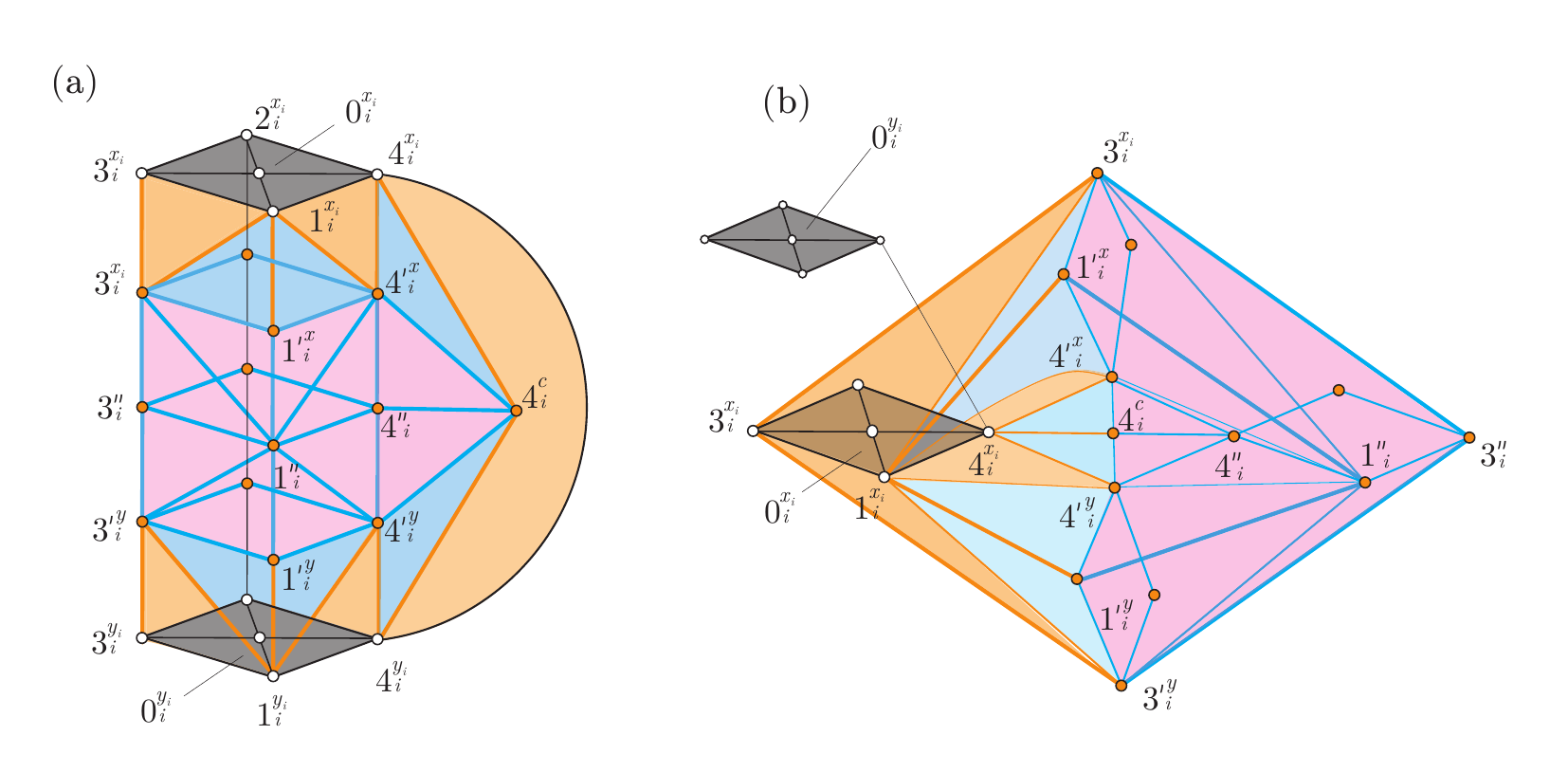}
        \caption{Modification of the gadget with auxiliary vertices $0_i^{x_i}$ and $0_i^{y_i}$ for each block $i$.}
        \label{fig:ancilla}
    \end{figure}

    Let us denote the modified simplicial complex as $\mathrm{Cl}(\hat{\mathcal{G}}'_m))$. Then, let 
    \begin{equation}
       \label{eq:PQ}\mathcal{Q}=\ket{\sigma}: \sigma\in  \mathrm{Cl}^{\leq l-1}(\hat{\mathcal{G}}'_m),\ \ 
    \mathcal{P}=\ket{\sigma}: \sigma\in  \mathrm{Cl}^{\leq l}(\hat{\mathcal{G}}'_m)
    \end{equation}
    where $\mathrm{Cl}^{\leq l}(\hat{\mathcal{G}}'_m))$ is the set of weight $\leq \lambda^l$ simplices in $\mathrm{Cl}(\hat{\mathcal{G}}'_m))$. 
    After the modification with axially qubits, the cohomology of $\mathcal{Q}$ is trivial for any dimension. 

    For any $\ket{\alpha} \in E_d^{0,l}$ s.t. $\ket{\alpha}\in \ker(\delta_{d}^{0,l})$, 
    $\delta_{d} \ket{\alpha}$ must be only supported outside of $\mathrm{Cl}_{d+1}^l(\hat{\mathcal{G}}_m)$. 
    As we are considering $\ket{\alpha} \in E_d^{0,l}$ with $1\leq l \leq d$, 
    no support of $\ket{\alpha}$ is contained in simplices with the newly introduced axial vertices. Therefore, the support of $\delta_{d} \ket{\alpha}$ and $\mathrm{Cl}_{d+1}^{\leq l-1}(\hat{\mathcal{G}}'_m)$ do not overlap as well. 
    However, because the $d$-cohomology of $\mathrm{Cl}^{\leq l-1}(\hat{\mathcal{G}}'_m)$ is trivial, $\ket{\alpha}$ is also a coboundary in $\mathrm{Cl}^{\leq l-1}(\hat{\mathcal{G}}'_m)$. 
    Using Lemma~\ref{lemma:subcomplex} with subcomplexes \eqref{eq:PQ}, 
    as $\ket{\alpha}$ is a coboundary that comes from $ E_{d-1}^{0,l}$ and therefore
    $\ket{\alpha}\in \mathrm{Im} \delta_{d-1}^{0,l}$. 
    Therefore, we can conclude $\ker\delta_{d}^{0,l}/\mathrm{Im} \delta_{d-1}^{0,l}\cong 0$.

\end{proof}



\paragraph{Diagonal elements $e_d^{1,l}$ with $l = d+1$}

For the ``diagonal'' subspaces $e_d^{1,l}$ with $l = d+1$, 
let 
$X^{\mathrm{core}}$ be the set of simplices 
that contains at least one vertex from 
$$
1_i'',2''_i,3''_i,4''_i
$$
for some $i\in[m]$. 
Then, the following holds. 
\begin{claim}
    \begin{equation}\label{eq:core}
    e_d^{1,d+1} \cong \mathrm{Span}
\Big( \ket{\sigma}: \sigma \in X^{\mathrm{core}}_{d}
\Big) =: E_d^{1,d+1}
. 
\end{equation}
\end{claim}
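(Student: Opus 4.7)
The plan is to compute $e_d^{1,d+1}$ as a kernel and then match it combinatorially with $\mathrm{Span}(X^{\mathrm{core}}_d)$. First I would observe that $e_{d-1}^{0,d+1}=0$, since a $(d-1)$-simplex has only $d$ vertices and thus cannot carry $d+1$ weight-$\lambda$ vertices; hence $e_d^{1,d+1}=\ker(\delta_d^{0,d+1})$. Under the identification $e_d^{0,d+1}\cong C_d(\mathcal{L})$, where $\mathcal{L}$ is the full subcomplex of $\mathrm{Cl}(\hat{\mathcal{G}}_m)$ induced by the weight-$\lambda$ vertices, the induced differential $\delta_d^{0,d+1}$ is exactly the ``weight-$1$ adding'' part of the coboundary,
\[
\delta_d^{0,d+1}(\ket{[\sigma]})=\sum_{v\in V_1\cap \mathrm{lk}(\sigma)} (-1)^{[\sigma\cup v:\sigma]}\ket{[\sigma\cup v]},
\]
where $V_1$ denotes the set of weight-$1$ vertices (those inherited from the original qubit graph $\mathcal{G}_m$).

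Next, I would exploit the unique decomposition of the target simplices: every $(d+1)$-simplex appearing in $e_{d+1}^{0,d+1}$ has exactly one weight-$1$ vertex, so it admits a unique presentation $\tau=\sigma\cup v$ with $\sigma\in\mathcal{L}_d$ and $v\in V_1$. Consequently, the coefficient of $\ket{[\tau]}$ in $\delta_d^{0,d+1}\bigl(\sum_\sigma c_\sigma\ket{[\sigma]}\bigr)$ is precisely $\pm c_{\tau\setminus v}$, so demanding this to vanish for every $\tau$ forces $c_\sigma=0$ whenever $V_1\cap\mathrm{lk}(\sigma)\neq\emptyset$. This yields
\[
\ker(\delta_d^{0,d+1})=\mathrm{Span}\bigl(\ket{[\sigma]}:\sigma\in\mathcal{L}_d,\ V_1\cap \mathrm{lk}(\sigma)=\emptyset\bigr),
\]
reducing the claim to the combinatorial assertion that the set of ``link-isolated'' simplices in $\mathcal{L}_d$ is exactly $X^{\mathrm{core}}_d$.

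The final step is to verify this combinatorial equivalence. For the $\supseteq$ direction, if $\sigma$ contains a vertex $v''\in K''$, then by the gluing $K^x\to K'^x\leftarrow K''\to K'^y\leftarrow K^y$ together with the thickening rules of Lemma~\ref{lemma:thickening_graph}, the neighbors of $v''$ are confined to $K'^x\cup K'^y\cup K''\cup\{4^c_i\}_{i=1}^m$, all of weight $\lambda$; hence $V_1\cap\mathrm{lk}(\sigma)=\emptyset$. For the $\subseteq$ direction, if $\sigma\in\mathcal{L}_d\setminus X^{\mathrm{core}}_d$, then every vertex of $\sigma$ lies in the outer layer $K'^x\cup K'^y\cup\{4^c_i\}_i$, and I would exhibit a weight-$1$ neighbor by exploiting the vertical thickening edges $u\sim u'$ (between $K^x$ and its copy in $K'^x$, and likewise for $K^y$) together with the join structure that $K^x$ and $K^y$ inherit from $\mathcal{G}_m$, choosing a single $v$ simultaneously adjacent to every vertex of $\sigma$. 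The main technical obstacle lies precisely here: since $\sigma$ may have vertices spread across several blocks and across both the $K'^x$ and $K'^y$ sides (possibly meeting through the central vertices $4^c_i$), producing a single weight-$1$ neighbor requires a careful case analysis of the cross-block thickening adjacencies and of how $\{4^c_i\}$-edges interact with the $K^x$ and $K^y$ sides. Combining both inclusions then yields the desired isomorphism.
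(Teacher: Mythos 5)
Your proposal is correct and takes essentially the same route as the paper: you compute the first page via the kernel of the induced coboundary (adding a weight-$1$ vertex), while the paper phrases it via the quotient by the image of the induced boundary (removing a weight-$1$ vertex), but these are equivalent here and both reduce to the same combinatorial statement that the all-$\lambda$ $d$-simplices with no weight-$1$ vertex in their link are exactly the simplices meeting $K''$, i.e.\ $X^{\mathrm{core}}_d$. The combinatorial verification you flag as the ``main technical obstacle'' is precisely the step the paper also dispatches in one asserted sentence, so your write-up matches the paper's argument and level of detail.
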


\begin{proof}
    For any $\sigma \in X^{\mathrm{core}}_{d}$, there is no faces or cofaces of $\sigma$ with the same weight of $\sigma$. This means that we cannot add or subtract weight $1$ vertices from such $\sigma$. In contrast, any other weight $\lambda^{d+1}$ $d$-dimensional simplices appears as a boundary operation that removes a weight $1$ vertex. 
\end{proof}

\subsection{2nd page}

On the 2nd page, the leftmost column does not change, and we see that in the diagonal elements, the only subspace that remains non-trivial is 
$e_{2m}^{2,2m+1}$.

\paragraph{Diagonal elements $e_d^{2,d+1}$ with $d \leq  2m-1$}

\begin{claim}
    For $d \leq  2m-1$, $e_d^{2,d}\cong 0$
\end{claim}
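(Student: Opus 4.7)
The plan is to observe that the triviality of $e_d^{2,d}$ is an immediate consequence of the previous Claim on the off-diagonal elements of the 1st page, combined with the leftmost-column analysis at one boundary value. Recall from the spectral sequence construction that
$$
e_d^{2,d} = \ker(\delta_d^{1,d})/\mathrm{Im}(\delta_{d-1}^{1,d-1}),
$$
with $\delta_d^{1,d}:e_d^{1,d}\to e_{d+1}^{1,d+1}$. Since $\ker(\delta_d^{1,d})$ is by definition a subspace of $e_d^{1,d}$, any proof that the latter vanishes forces $e_d^{2,d}\cong 0$. Thus the entire content of the statement reduces to showing $e_d^{1,d}\cong 0$ for $0\leq d\leq 2m-1$.

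For $1\leq d\leq 2m-1$, this is precisely the specialization $l=d$ of the previous Claim, which already asserts $e_d^{1,l}\cong 0$ for all $1\leq l\leq d$; that lies exactly within the allowed range. The only boundary case is $d=0$, where the previous Claim is vacuous. Here I would invoke the leftmost-column analysis from earlier, which identifies the only non-trivial entry in the column $l=0$ of the 1st page as $e_{2m-1}^{1,0}\cong\mathrm{Enc}(\mathcal{H}_m)$. Since $m\geq 1$ forces $2m-1\geq 1\neq 0$, the entry $e_0^{1,0}$ vanishes, and therefore $e_0^{2,0}\cong 0$ as well.

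There is no substantive obstacle: the claim is essentially a free consequence of what has already been proved on the 1st page, and the only care needed is to record the subquotient bookkeeping so that a zero term on page one propagates to the same spot on page two. The one thing worth emphasizing in the writeup is the distinction between the indices $(d,d)$ (treated here, vanishing) and $(d,d+1)$ (the genuinely diagonal entries on the 1st page, which are the non-trivial targets the spectral-sequence argument is about to kill by different maps), since confusing the two would be the easiest source of an error in the surrounding bookkeeping of the 2nd-page analysis.
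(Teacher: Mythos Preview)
Your argument for the statement \emph{as literally written} is correct: $e_d^{2,d}$ is a subquotient of $e_d^{1,d}$, and the latter already vanishes by the off-diagonal claim (with the $d=0$ endpoint handled by the leftmost-column computation). No further work is needed.

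However, the claim in the paper is almost certainly a typo for $e_d^{2,d+1}\cong 0$. The subsection is titled ``Diagonal elements $e_d^{2,d+1}$ with $d\le 2m-1$'', and the paper's own proof works entirely inside $E_d^{1,d+1}=\mathrm{Span}(\ket{\sigma}:\sigma\in X_d^{\mathrm{core}})$, analyzing $\ker(\delta_d^{1,d+1})$ versus $\mathrm{Im}(\delta_{d-1}^{1,d})$. That is a genuinely non-trivial vanishing, since $e_d^{1,d+1}$ is the one nonzero entry in its row of the first page. The paper's argument is: if $\ket{\alpha}\in E_d^{1,d+1}$ lies in $\ker\delta_d^{1,d+1}$ then in fact $\delta_d\ket{\alpha}=0$ on the nose (there is nowhere higher in weight for it to land); passing to the auxiliary complex $\mathrm{Cl}(\hat{\mathcal G}_m')$ with the extra axial vertices, all cohomology is trivial, so $\ket{\alpha}$ is a global coboundary; finally Lemma~5 (applied with $\mathcal P=\mathrm{Cl}(\hat{\mathcal G}_m')$ and $\mathcal Q=\mathrm{Cl}(\hat{\mathcal G}_m')\setminus X^{\mathrm{core}}$) forces the preimage to live in $E_{d-1}^{1,d}$, i.e.\ $\ket{\alpha}\in\mathrm{Im}(\delta_{d-1}^{1,d})$.

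You yourself flagged the $(d,d)$ versus $(d,d+1)$ distinction at the end of your write-up and assumed the $(d,d+1)$ entries would be ``killed by different maps'' later. They are not: this \emph{is} the step that kills them, and your trivial argument does not touch them. So while nothing you wrote is wrong, you have proved a statement the paper does not need, and have not proved the one it does.
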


\begin{proof}
This can be proven in a way similar to the proof of Claim~\ref{claim:off_diag}.
Again, we consider the modification of Figure~\ref{fig:ancilla}. 

For $\ket{\alpha}\in\mathrm{Span}
\Big( \ket{\sigma}: \sigma \in X^{\mathrm{core}}_{d}
\Big)$, 
assume $\ket{\alpha}\in \ker (\delta_d^{1,d+1})$. 
In this case, $\ket{\alpha}\in \ker \delta_d$ as well. 
Although we have added axial vertices, simplices in the support of $\ket{\alpha}$, the axial vertices do not form valid simplices. Therefore, $\ket{\alpha}$ is a cocycle in $\mathrm{Cl}(\hat{\mathcal{G}}_m' )$. However, $\mathrm{Cl}(\hat{\mathcal{G}}_m' )$ do not have non-trivial homology for any dimension. 
Therefore, $\ket{\alpha}$ must be a coboundary. 

We apply Lemma~\ref{lemma:subcomplex} with 
$
\mathcal{P} =  \mathrm{Cl}(\hat{\mathcal{G}}_m' ),\ \ \mathcal{Q} = \mathrm{Cl}(\hat{\mathcal{G}}_m' ) \backslash X^{\mathrm{core}} 
.
$ Then, we have $\ket{\alpha}\in \delta_{d}(E_d^{1,d+1})$ which means $\ket{\alpha}\in \mathrm{Im}(\delta_{d-1}^{1,d})$.
Therefore, $\ker (\delta_d^{1,d+1}) = \mathrm{Im}(\delta_{d-1}^{1,d}) $.
\end{proof}

\paragraph{Diagonal element $e_{2m}^{2,2m+1}$}
Let us define 
\begin{align*}
    \Ket{E_{2}^{2,3}}_i^x:= \ket{1''_i1'^x_i3'^x_i}-\ket{1''_i3''_i3'^y_i}&+\ket{2''_i3''_i3'^x_i}-\ket{2''_i2'^x_i3'^x_i}
    \\&+\ket{2''_i2'^x_i4'^x_i}-\ket{2''_i4'^x_i4'^x_i}+\ket{1''_i4'^x_i4'^x_i}-\ket{1''_i1'^x_i4'^x_i}
\end{align*}
and
\begin{align*}
    \Ket{E_{2}^{2,3}}_i^y:= \ket{1''_i1'^y_i3'^y_i}-\ket{1''_i3''_i3'^y_i}&+\ket{2''_i3''_i3'^y_i}-\ket{2''_i2'^y_i3'^y_i}
    \\&+\ket{2''_i2'^y_i4'^y_i}-\ket{2''_i4'^y_i4'^y_i}+\ket{1''_i4'^y_i4'^y_i}-\ket{1''_i1'^y_i4'^y_i}.
\end{align*}
These are the sums of triangles described in Figure~\ref{fig:2ndpage_surface} (a).
We also introduce
 $$
\Ket{E_{1}^{2,2}}_{i}^{''}:= \ket{1''_i3''x_i}+\ket{3''_i2''_i}+\ket{2''_i4''_i}+\ket{4''_i1''_i},
$$
 $$
\Ket{E_{1}^{2,2}}_{i}^x:= \ket{1'^x_i3'^x_i}+\ket{3'^x_i2'^x_i}+\ket{2'^x_i4'^x_i}+\ket{4'^x_i1'^x_i},
$$
 $$
\Ket{E_{1}^{2,2}}_{i}^y:= \ket{1'^y_i3'^y_i}+\ket{3'^y_i2'^y_i}+\ket{2'^y_i4'^y_i}+\ket{4'^y_i1'^y_i}.
$$
These are 1-dimensional cycles described in Figure~\ref{fig:2ndpage_surface} (b).

\begin{figure}
    \centering
    \includegraphics[width=0.65\linewidth]{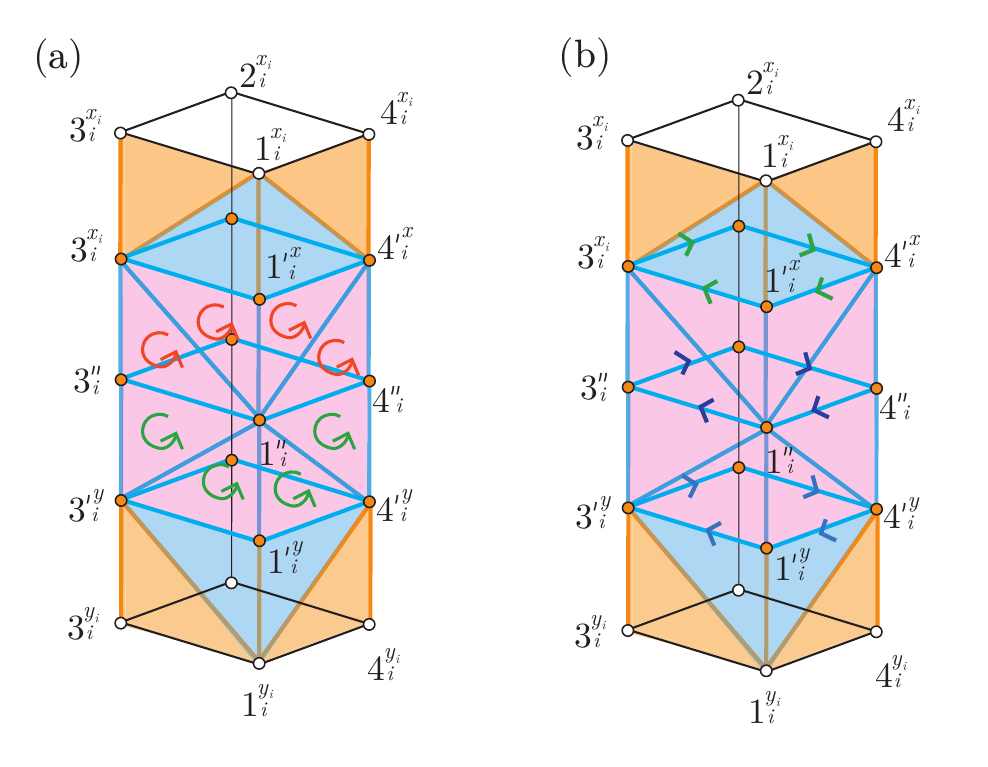}
    \caption{Components that appear in the 2nd page. We are omitting the vertex $4_i^c$ because it is irrelevant at this point. The cases whether $x_i=y_i$ or $x_i\neq y_i$ do not matter here as well.}
    \label{fig:2ndpage_surface}
\end{figure}

Then, we introduce a state
$$
\Ket{E_{2m}^{2,2m+1}}:=
\left(
\bigotimes_{j=i}^{i-1}\Ket{E_{1}^{2,2}}_{j}^{''}\right)\Ket{E_{2}^{2,3}}_i^x\left(\bigotimes_{j=i+1}^m\Ket{E_{1}^{2,2}}_{j}^x\right)
-
\left(
\bigotimes_{j=i}^{i-1}\Ket{E_{1}^{2,2}}_{j}^{''}\right)\Ket{E_{2}^{2,3}}_i^y\left(\bigotimes_{j=i+1}^m\Ket{E_{1}^{2,2}}_{j}^y\right)
$$
whose boundary is 
$$
\partial_{2m}^{1,2m+1}\Ket{E_{2m}^{2,2m+1}}=\bigotimes_{j=i+1}^m\Ket{E_{1}^{2,2}}_{j}^x-\bigotimes_{j=i+1}^m\Ket{E_{1}^{2,2}}_{j}^y \cong 0
$$
because it is completely outside of $E_{2m-1}^{1,2m}$. 
{There is no other cycle under $\partial_{2m}^{1,2m+1}$.} 
Therefore, 
$e_{2m}^{2,2m+1}\cong \mathrm{Span}\Big(\Ket{E_{2m}^{2,2m+1}}\Big)$.

\subsection{After 3rd page}

From 3rd page to $2m+1$st pages, no change occurs for every $e_d^{k,l
}$. 

In $2m+2$nd page, $e_{2m}^{2m+2, 2m+2}$ becomes trivial and 
$$
e_{2m-1}^{2m+2,0}\cong \mathrm{Span}\left(\{\mathrm{Enc}(\ket{x'})\}_{x'\neq x,y \in \{0,1\}^m} \right) \oplus \mathrm{Span} (\mathrm{Enc}(\ket{x}+\ket{y})).
$$
This is because  
$$
\delta_{2m-1}^{2m+1,0} \mathrm{Enc} (\ket{x}-\ket{y}) = \Ket{E_{2m}^{2,2m+1}}.
$$
As a consequence, $\Ket{E_{2m}^{2,2m+1}}$ now becomes a coboundary under $\delta_{2m-1}^{2m+1,0}$ and $\mathrm{Enc} (\ket{x}-\ket{y})$ becomes a boundary under $\partial_{2m}^{2m+1,2m+1}$. 

It is clear that after page $2m+2$, there will be no change. 

\subsection{Summary}

We can obtain $E_d^{k,l}$ by taking representative of $e_d^{k,l}$ in the space $\mathrm{Span}(\mathrm{Cl}_d^l(\hat{\mathcal{G}}_m))$. 
We conclude that 
\begin{itemize}
    \item $E_{2m-1}^0= \bigoplus_{l=0}^{2m} \mathrm{Span}(\ket{\sigma}: \sigma \in  \mathrm{Cl}_{2m-1}^l(\hat{\mathcal{G}}_m))$
    \item 
$E_{2m-1}^1=\mathrm{Enc}(\mathcal{H}_m)
    \oplus 
    \mathrm{Span}
\Big( \ket{\sigma}: \sigma \in X^{\mathrm{core}}_{2m-1}
\Big)    .$
    \item 
    $E_{2m-1}^2=E_{2m-1}^3=\cdots = E_{2m-1}^{2m+1}=\mathrm{Enc}(\mathcal{H}_m)$
    \item $E_{2m-1}^j= \mathrm{Span}\left(\{\mathrm{Enc}(\ket{z})\}_{z\neq x,y \in \{0,1\}^m} \right) \oplus \mathrm{Span} (\mathrm{Enc}(\ket{x}+\ket{y}))$ 
    for $j\geq 2m+2$. 
\end{itemize}
These concludes Lemma~\ref{lemma:singlegadget_spectrum}.

\subsection{Construction and analysis for combined gadgets}

Next, we combine the gadgets for single terms and analyze the spectral properties of the resulting simplicial complex. 

Recall that we reduce from
$H=\sum_{i=1}^t h_i$ where each term comes from $ S_{\mathrm{stoq}}$ and 
$\hat{\mathcal{G}}_{m_i}$ 
is the gadget graph for $h_i$. 
Suppose that $h_i$ is a projector on qubits $\mathcal{I}_i$ where $|\mathcal{I}_i|=m_i$. 
Then, corresponding to the tensor product with identity 
$$
h_i\otimes I_{\bar{\mathcal{I}_i}},
$$
where $I_{\bar{\mathcal{I}_i}}$ is the identity operator for other qubits than these $h_i$ acts non-trivially, 
we construct a graph
$$\hat{\mathcal{G}}_n(h_i)= \hat{\mathcal{G}}_{m_i} * \mathcal{G}_{\bar{\mathcal{I}_i}} $$
where 
$\mathcal{G}_{\bar{\mathcal{I}_i}}$ are the join product of $n-m_i$ graphs of ${j\in [n]\backslash {\mathcal{I}_i}}$.
Now, for the Laplacian $\hat{\Delta}'_{2n-1,i}$ of $\mathrm{Cl}(\hat{\mathcal{G}}_n(h_i))$, 
it holds that 
$$
\hat{\Delta}'_{2n-1,i}=\hat{\Delta}_{2m_i-1}\otimes I + I\otimes \Delta_{2(n-m)-1}
$$
where $\Delta_{2(n-m)-1}$ is the Laplacian for $\mathrm{Cl}_{2(n-m)-1}(\mathcal{G}_{\bar{\mathcal{I}_i}})$.

Let 
$$\tilde{G}_i:= \hat{\mathcal{G}}_n(h_i) \backslash \mathcal{G}_n.$$ 
Let also 
$$\hat{\mathcal{G}}_n(H):= \mathcal{G}_n  \sqcup \tilde{G}_1 \sqcup \cdots \sqcup \tilde{G}_t.$$ 
For each $h_i$, we define $\mathcal{T}_i$ as 
$$
\mathcal{T}_i:= \mathrm{Cl}(\hat{\mathcal{G}}_{m_i})\backslash \mathrm{Cl}(\mathcal{G}_{n})
$$
where $\mathrm{Cl}(\mathcal{G}_{n})$ is the clique complex of the qubit graph $\mathcal{G}_n$. 
Then, 
$$\mathrm{Cl}(\hat{\mathcal{G}}_{n}(H))
= \mathrm{Cl}(\mathcal{G}_{n}) \sqcup \mathcal{T}_1 \sqcup \cdots \sqcup\mathcal{T}_t.
$$
Therefore, for $d=0,1,...,2n$, 
$$C_{d}(\hat{\mathcal{G}}_n(H))= C_{d}({\mathcal{G}}_n)\oplus C_{d}({\mathcal{T}}_1)\oplus\cdots\oplus C_{d}({\mathcal{T}}_t),$$
where $C_{d}(\hat{\mathcal{G}}_n(H))$ and $C_{d}({\mathcal{G}}_n)$ are 
the $k$-th chain space of $\mathrm{Cl}(\hat{\mathcal{G}}_n)$ and $\mathrm{Cl}({\mathcal{G}}_n)$. 

The full $2n-1$-dimensional Laplacian is $\hat{\Delta}_{2n-1}:C_{2n-1}(\hat{\mathcal{G}}_n(H))\rightarrow C_{2n-1}(\hat{\mathcal{G}}_n(H))$. 
The up Laplacian can be decomposed into parts corresponding to each of the gadgets 
$$
\hat{\Delta}^{\text{up}}_{2n-1}=\hat{\Delta}^{\text{up}}_{2n-1,1}+\cdots+\hat{\Delta}^{\text{up}}_{2n-1,t}
$$
because $2n$-dimensional simplices do not overlap among different gadgets. 
We use 
$\hat{\delta_d}, \hat{\partial}_d$ to denote the $d$-coboundary and boundary operator on the final complex.

\paragraph{Proof of the lower bound of the minimum energy in the NO instances}

We show the lower bound for the minimal eigenvalue of the Laplacian in NO instances following the strategy of \cite{king:qma}. 

\begin{proposition}
    Let $H$, 
    $\hat{\mathcal{G}}_n$, $\hat{\Delta}_{2n-1}$, 
    for any $g>0$, 
    there exist a sufficiently small constant $\alpha>0$ s.t. with 
    $$\lambda=ct^{-1}g$$
    it holds that 
    if $\lambda_0(H)=0$ then $\lambda_0(\hat{\Delta}_{2n-1})=0$ and if $\lambda_0(H)\geq g$ then $\lambda_0(\hat{\Delta}_{2n-1})\geq c\lambda^{4m+2}t^{-1}g$.
\end{proposition}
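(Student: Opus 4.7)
The plan is to combine the single-gadget spectral result of Lemma~\ref{lemma:singlegadget_spectrum} with an additive decomposition of the full Laplacian to relate $\lambda_0(\hat{\Delta}_{2n-1})$ to $\lambda_0(H)$, in the spirit of the MA-hardness argument of \cite{king:qma}. The two implications are handled separately.

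For the YES direction, I would pick $\ket{\psi}\in\bigcap_i\ker h_i$ whenever $\lambda_0(H)=0$ and argue that $\mathrm{Enc}(\ket{\psi})$ represents a non-trivial class in $H_{2n-1}(\mathrm{Cl}(\hat{\mathcal{G}}_n(H)))$, whence there is a non-zero harmonic in $\ker\hat{\Delta}_{2n-1}$. By Definition~\ref{def:encoding_map}, $\mathrm{Enc}(\ket{\psi})$ is already a cycle in $\mathrm{Cl}(\mathcal{G}_n)$; each gadget $\tilde{G}_i$ is constructed so that $\mathrm{Enc}(\mathrm{Im}(h_i)\otimes\mathcal{H}_{\bar{\mathcal{I}_i}})$ becomes a boundary while $\mathrm{Enc}(\ker(h_i)\otimes\mathcal{H}_{\bar{\mathcal{I}_i}})$ remains non-bounding, so no gadget fills the cycle $\mathrm{Enc}(\ket{\psi})$.

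For the NO direction, I would decompose $\hat{\Delta}_{2n-1}=\sum_{i=1}^t \hat{\Delta}^{(i)}_{2n-1}$ into PSD contributions, one per gadget, using the fact that the $2n$-simplices of distinct gadgets live on disjoint auxiliary vertex sets so that the up-Laplacian splits cleanly. The join-product structure $\hat{\mathcal{G}}_n(h_i)=\hat{\mathcal{G}}_{m_i}*\mathcal{G}_{\bar{\mathcal{I}_i}}$ together with the Kunneth formula \eqref{eq:kunneth} lifts Lemma~\ref{lemma:singlegadget_spectrum} to each $\hat{\Delta}^{(i)}_{2n-1}$: its kernel is an $\mathcal{O}(\lambda)$-perturbation of $\mathrm{Enc}(\ker h_i\otimes\mathcal{H}_{\bar{\mathcal{I}_i}})$, its first excited band is an $\mathcal{O}(\lambda)$-perturbation of $\mathrm{Enc}(\mathrm{Im}\, h_i\otimes\mathcal{H}_{\bar{\mathcal{I}_i}})$ at energy $\Theta(\lambda^{4m+2})$, and the remaining spectrum is $\Omega(\lambda^2)$. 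Let $\tilde{\mathcal{E}}$ denote the joint low-energy subspace of the $\hat{\Delta}^{(i)}_{2n-1}$, which is an $\mathcal{O}(\lambda)$-perturbation of $\mathcal{E}:=\mathrm{Enc}(\mathcal{H}_n)$. After aligning $\tilde{\mathcal{E}}$ with $\mathcal{E}$ by a unitary close to the identity, the effective operator on $\tilde{\mathcal{E}}$ reads
\[
\hat{\Delta}_{2n-1}\big|_{\tilde{\mathcal{E}}} = c\lambda^{4m+2}\,\mathrm{Enc}\circ H\circ \mathrm{Enc}^{-1} + \mathcal{O}(t\lambda^{4m+3}),
\]
so with $\lambda_0(H)\geq g$ and $\lambda=ct^{-1}g$ small enough, I obtain $\hat{\Delta}_{2n-1}|_{\tilde{\mathcal{E}}}\succeq c'\lambda^{4m+2}g$. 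On $\tilde{\mathcal{E}}^\perp$, $\hat{\Delta}_{2n-1}$ has eigenvalues $\Omega(\lambda^2)\gg \lambda^{4m+2}g$. A Projection Lemma / Schur-complement argument combines the two bounds into $\lambda_0(\hat{\Delta}_{2n-1})\geq c''\lambda^{4m+2} t^{-1} g$.

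The main obstacle is twofold. First, the spectral-sequence description of Lemma~\ref{lemma:singlegadget_spectrum} must be propagated through the join with $\mathcal{G}_{\bar{\mathcal{I}_i}}$; one has to check that the pages factor multiplicatively and that the orders of perturbation are preserved. Second, and more delicately, the $\mathcal{O}(\lambda)$-perturbations of the individual gadget low-energy subspaces can \emph{a priori} accumulate as $\mathcal{O}(t\lambda)$ corrections to the effective action on $\tilde{\mathcal{E}}$, so one needs to verify that the intersection of the $t$ perturbed low-energy subspaces is still close to $\mathrm{Enc}(\bigcap_i\ker h_i)$ and that the couplings to the ``core'' sector $X^{\mathrm{core}}$ (at energy $\Theta(\lambda^2)$) of each gadget do not contribute at leading order. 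The choice $\lambda\lesssim t^{-1}g$ is precisely what suppresses the accumulated $\mathcal{O}(t\lambda)$ terms below the target gap $\lambda^{4m+2}g$, which accounts for the stated parameter dependence.
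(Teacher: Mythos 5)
Your overall strategy (single-gadget spectra from Lemma~\ref{lemma:singlegadget_spectrum}, lifted through the join, then an effective-Hamiltonian/Projection-Lemma argument on the joint low-energy subspace) is a legitimate alternative to what the paper does, which is a direct variational contradiction in the style of Theorem~10.1 of \cite{king:qma}: assume a normalized $\ket{\varphi}$ with energy below $E$, expand $\braket{\varphi|\varphi}$ against the per-gadget projectors $\Pi_i^{(\mathcal{A})},\Pi_i^{(\mathcal{B})},\hat\Phi_i,\hat\Phi_i^\perp$, bound $\bra{\varphi}\sum_i\hat\Phi_i\ket{\varphi}=\mathcal{O}(\lambda^{-(4m+2)}Et)$ and $\bra{\varphi}\sum_i\Pi_i^{(\mathcal{A,B})}\ket{\varphi}=\mathcal{O}(\lambda^2 t)$ via Claims~\ref{claim:fin0}--\ref{claim:fin2}, and derive $1<1$. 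However, your sketch has two concrete problems.

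First, the opening move of your NO-direction argument, ``decompose $\hat{\Delta}_{2n-1}=\sum_{i=1}^t\hat{\Delta}^{(i)}_{2n-1}$ into PSD contributions, one per gadget,'' is false as stated. Only the \emph{up}-Laplacian splits additively over gadgets (because the $2n$-simplices of distinct gadgets are disjoint); the down-Laplacian acts on the $(2n-1)$-simplices of the common qubit complex $\mathrm{Cl}(\mathcal{G}_n)$, which every gadget shares. Summing the $t$ single-gadget Laplacians $\hat\Delta'_{2n-1,i}$ therefore overcounts the shared down-part roughly $t$ times, and this overcounting is precisely why the paper's key inequality carries the correction term $-(t-1)\bra{\varphi}(\Pi^{2n-1}_0-\Pi^{(\mathcal{H})}_n)\ket{\varphi}$. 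Without accounting for this, your lower bound on $\lambda_0(\hat{\Delta}_{2n-1})$ in terms of the single-gadget spectra does not follow.

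Second, the quantitative heart of the proof is exactly the step you defer to the ``main obstacle'': showing that the couplings between the encoded subspace, the $\Theta(\lambda^{4m_i+2})$ band, and the $\Theta(\lambda^2)$ core sector $X^{\mathrm{core}}$ do not spoil the $\Theta(\lambda^{4m+2})$ effective gap. Your asserted error term $\mathcal{O}(t\lambda^{4m+3})$ is stated without justification; second-order couplings into a $\Theta(\lambda^2)$ sector generically contribute $(\text{coupling})^2/\lambda^2$, and one needs the boundary/coboundary overlap estimates of Claims~\ref{claim:fin1} and~\ref{claim:fin2} (that $\|\Pi_i^{[2n-2]}\hat\partial_{2n-1}\ket{\psi}\|=\mathcal{O}(\lambda)$ for all states, and $\Omega(\lambda)$ for bulk states) to control this. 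Since these estimates are where the construction-specific work lives for the new gluing gadgets, leaving them as an acknowledged obstacle means the proof is not complete. The YES direction of your sketch is fine in spirit and matches what the paper establishes separately in Section~\ref{sec:hardness_3}.
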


\begin{proof}

The proof follows the proof of Theorem 10.1 in~\cite{king:qma}. 
Let us introduce
\begin{itemize}
\item $\Pi_{2n-1,0}$: projector onto the chain space of the qubit graph $C_{2n-1}({\mathcal{G}}_n)$.
\item $\{\Pi_{2n-1,i}\}_i\in [t]$: projector onto $C_{2n-1}({\mathcal{T}}_i)$. 
\end{itemize}
Then, $$I_{C_{2n-1}(\hat{\mathcal{G}}_n)}=\sum_{i=0}^t\Pi_{2n-1,i}$$
where $I_{C_{2n-1}(\hat{\mathcal{G}}_n)}$ is the identity on the full chain space. 
Then, we introduce the reformulation of the single gadget chain space $\Pi_{2n-1,0}+\Pi_{2n-1,i}$ as 
$$\Pi_{2n-1,0}+\Pi_{2n-1,i}= \Pi_i^{(\mathcal{A})}+\Pi_i^{(\mathcal{B})}+\hat{\Phi}_i+\hat{\Phi}_i^\perp$$
where the projectors in the rhs are defined through the spectral properties of each single gadget Laplacian $\hat{\Delta}_{2n-1,i}$ on $\hat{\mathcal{G}}_n(h_i)$ as
    \begin{itemize}
        \item $\Pi_i^{(\mathcal{A})}$ is the projector onto the space of eigenvectors with eigenvalue $\Theta(1)$, 
        \item $\Pi_i^{(\mathcal{B})}$ is the projector onto the space of eigenvectors with eigenvalue $\Theta(\lambda^2)$, 
        \item $\hat{\Phi}_i$ is the projector onto the space of eigenvectors with eigenvalue $\Theta(\lambda^{4m_i+2})$, 
        \item $\hat{\Phi}_i^\perp$ is the projector onto $\ker(\hat{\Delta}'_{2n-1,i})$. 
    \end{itemize}
Note that this works for both the gluing gadget and the gadget for filling a computational basis state because they share similar spectral properties, as can be seen from Lemma~\ref{lemma:spectrum_filling} and Lemma~\ref{lemma:singlegadget_spectrum}.

Suppose that there is a normalized state $\ket{\varphi}$ s.t.
$$
\bra{\varphi}\hat{\Delta}_{2n-1}\ket{\varphi}<E. 
$$
Then, it holds that \cite{king:qma}
\begin{align*}
    \braket{\varphi|\varphi} \leq 
    &(1-g)\bra{\varphi}\Pi^{(\mathcal{H})}_n\ket{\varphi}-(t-1)\bra{\varphi}(\Pi^{2n-1}_0-\Pi^{(\mathcal{H})}_n)\ket{\varphi}+\mathcal{O}(\lambda t) \\&+
    \bra{\varphi}\sum_i \hat{\Phi}_i \ket{\varphi} + \bra{\varphi}\sum_i \Pi^{(\mathcal{A})}_i \ket{\varphi}+\bra{\varphi}\sum_i \Pi^{(\mathcal{B})}_i \ket{\varphi}. 
\end{align*}
For each of the gadget $\hat{\mathcal{G}}_n(h_i)$, we define projector $\Pi^{[d]}_i$ as 
\begin{itemize}
    \item If $h_i$ is a projector onto a computational basis state, $\Pi^{[d]}_i$ is a projector onto space $C_d([\text{bulk}]_i)$ spanned by $d$-simplices containing the central vertex of the gadget construction for $h_i$. 
    \item If $h_i$ is a projector onto $\ket{x}-\ket{y}$, $\Pi^{[d]}_i$ is a projector onto the space $C_d([\text{bulk}]_i):=\mathrm{Span}(\ket{\sigma}: \sigma \in X^{\mathrm{core}}_{d})$. 
\end{itemize}

Then, we can show the following claims. 
\begin{claim}
\label{claim:fin0}
For $\ket{\hat\phi}$ that is $\mathcal{O}(\lambda)$-perturbation of the state penalized by some $h_i$ and an eigenvector with eigenvalue $\Theta(\lambda^{4m_i+2})$ for the single gadget Laplacian $\hat{\Delta}_{2m-1}(h_i)$, 
$
\hat{\partial}_{2n-1} \ket{\hat\phi} \otimes \mathrm{Enc}(\mathcal{H}_{n-m_i}) = 0,
$
where $\hat{\partial}_{2n-1}$ is the boundary operator for the single gadget complex. 

\end{claim}

\begin{claim}
\label{claim:fin1}
    For every $h_i$, all states $\ket{\psi}$ have $\|\Pi^{[2n-2]}_i\hat{\partial}_{2n-1}\ket{\psi}\|=\mathcal{O}(\lambda)\|\ket{\psi}\|$ and $\|\Pi^{[2n]}\hat{\delta}_{2n-2}\ket{\psi}\|=\mathcal{O}(\lambda)$. 
\end{claim}


\begin{claim}
\label{claim:fin2}
    For every $h_i$, 
    a normalized state $\ket{\psi}\in C_{2n-1}([\text{bulk}])$ has $\|\Pi^{[2n-2]}_i\hat{\partial}_{2n-1}\ket{\psi}\|=\Omega(\lambda)$ or $\|\Pi^{[2n-2]}\hat{\delta}_{2n-1}\ket{\psi}\|=\Omega(\lambda)$. 
\end{claim}

The above three claims have been shown for gadgets for filling computational basis states in \cite{king:qma}. 
These claims can be similarly shown for gluing gadgets as well.

Based on these Claims~\ref{claim:fin1},~\ref{claim:fin2}, we can show the following inequalities:

    $$\bra{\varphi}\sum_i \hat{\Phi}_i \ket{\varphi}=\mathcal{O}(\lambda^{-(4m+2)}Et)$$
    $$\bra{\varphi}\sum_i \Pi^{(\mathcal{A})}_i \ket{\varphi}=\mathcal{O}(\lambda^2t)$$
    $$\bra{\varphi}\sum_i \Pi^{(\mathcal{B})}_i \ket{\varphi}=\mathcal{O}(\lambda^2t).$$
These inequalities are shown in Lemmas 10.2, 10.3, 10.4 of \cite{king:qma} for gadgets for filling computational basis states, and can be similarly shown for gluing gadgets.

Therefore, 
\begin{align*}
    \braket{\varphi|\varphi} \leq 
    &(1-g)\bra{\varphi}\Pi^{(\mathcal{H})}_n\ket{\varphi}-(t-1)\bra{\varphi}(\Pi^{2n-1}_0-\Pi^{(\mathcal{H})}_n)\ket{\varphi}+\mathcal{O}(\lambda t) \\&
    +\mathcal{O}(\lambda^{-(4m+2)}Et) + \mathcal{O}(\lambda^2t). 
\end{align*}
With the choice of $\lambda=ct^{-1}g$ and $E=c\lambda^{4m+2}t^{-1}g$ for a sufficiently small constant $c$, 
\begin{align*}
    1&=\braket{\varphi|\varphi}\\&
    \leq (1-g)\bra{\varphi}\Pi^{(\mathcal{H})}_n\ket{\varphi}-(t-1)\bra{\varphi}(\Pi^{2n-1}_0-\Pi^{(\mathcal{H})}_n)\ket{\varphi}+\frac{1}{10}g \\
    &\leq 1-g+\frac{1}{10}g <1,
\end{align*}
which is a contradiction.

\end{proof}

\section{Uniform orientable filtration of the gadget simplicial complex}
\label{sec:hardness_3}

In this section, we first show the existence of a uniform orientable filtration for the constructed simplicial complexes with gadgets.  
Next, we 
 show several properties for the case of the reduction from YES instances of the stoquastic SAT problem.

\subsection{Orientable filtration of a gluing gadget}

\begin{proposition}
There is a uniform orientable filtration for 
    $\mathrm{Cl}_{2n-1}(\hat{\mathcal{G}}_n(H))$. 
\end{proposition}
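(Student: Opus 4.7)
The plan is to build a single filtration for the entire complex by concatenating locally constructed filtrations, one per gadget, on top of a common base $X_{2n-1}^0$. The crucial structural observation is that for $i \neq j$, the gadget complexes $\mathcal{T}_i$ and $\mathcal{T}_j$ share no simplex of dimension $\geq 1$ outside of the qubit clique complex $\mathrm{Cl}(\mathcal{G}_n)$: indeed each $\mathcal{T}_i$ is obtained by joining auxiliary simplices attached to vertices unique to $h_i$. Consequently, once a filtration with uniform orientable structure is installed on each individual gadget, appending their layers one after another (in any fixed order) yields a valid global filtration, provided the orientation chosen on the base cycles in $X_{2n-1}^0$ is shared by all gadgets.

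For the base, I would take $X_{2n-1}^0 \subseteq \mathrm{Cl}_{2n-1}(\mathcal{G}_n)$ to be the set of essential $(2n-1)$-simplices carrying the cycles $\mathrm{Enc}(\ket{x})$ and $\mathrm{Enc}(\ket{y})$ for the encoded computational basis states, oriented as in Definition~\ref{def:encoding_map}. For a single filling gadget, corresponding to a projector onto $\ket{x}$, I would first apply Lemma~\ref{lemma:orientable_filtration_octa} to the thickening $K^x \to K'^x$, producing layers $\hat L_{2n-1}^1,\ldots,\hat L_{2n-1}^n$ with orientations making adjacent layers induce opposite orientations on shared cofaces; I would then append a final layer consisting of the $(2n-1)$-simplices that touch the central vertex $v^c$, whose orientations can be chosen arbitrarily since those simplices become internal in the sense of Definition~1 (their cofaces at the axial vertex make them exempt from the orientability constraints).

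For a gluing gadget associated to a projector onto $\ket{x}-\ket{y}$, the strategy is to apply Lemma~\ref{lemma:orientable_filtration_octa} four times, once for each thickening $K^x \to K'^x$, $K'' \to K'^x$, $K'' \to K'^y$, $K^y \to K'^y$. Concretely, starting from the layer containing $K^x$-cycles, I would append the thickening filtration from $K^x$ to $K'^x$, then run the $K'' \to K'^x$ thickening filtration in reverse so it terminates at $K''$, then the $K'' \to K'^y$ filtration, then the reverse of $K^y \to K'^y$, finishing at $K^y$ which already sits in $X_{2n-1}^0$. The uniformity of weights is automatic because every vertex added by the gadget carries weight $\lambda$, so all simplices in a given layer share the same weight $\lambda^{k}$ for some $k$. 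The uniformity of $f^{i,j}$ follows from the symmetric join/product structure in eq.~\eqref{eq:filtration_thickenig}: within a single block, any coface $\tau \in X_{2n}^{i\cap j}$ contains exactly one face of each layer. Finally, the axial simplices touching the central vertices $4^c_i$ added to fix the new holes of Figure~\ref{fig:newhole} are appended as a last block of layers, again treated as internal with arbitrary orientations.

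The main obstacle I anticipate is verifying uniformity of $f^{i,j}$ at the junctions where the four thickening filtrations meet—namely at $K'^x$, $K''$, and $K'^y$—because there multiple sequences of layers meet simultaneously and a single coface may span three distinct layers across different qubit blocks. Handling this requires defining the layers not block by block but by total count of primed/double-primed indices, so that the ``branching'' in Definition~\ref{def:uniform_filtration} is controlled: each $(2n)$-simplex of the gadget has faces partitioned into exactly two consecutive layers with a fixed count $f^{i,j}$ determined by the join structure. I would also carefully confirm that the axial $4_i^c$-simplices can always be classified as internal, so that they place no orientability constraint and can be accommodated as trailing layers without breaking the global orientation consistency.
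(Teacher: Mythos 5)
Your plan follows essentially the same route as the paper's proof: take $X_{2n-1}^0$ to be the clique complex of the qubit graph with the $4_i^0$--$4_i^1$ edges removed (so that the base layer is down-degree 2), install the thickening filtration of Lemma~\ref{lemma:orientable_filtration_octa} on each gluing/filling gadget, concatenate across gadgets using their disjointness outside the base, and get uniformity from the join structure and the uniform weight $\lambda$ on added vertices. The only real divergence is the ordering inside a gluing gadget: the paper grows the wormhole symmetrically from both $K^x$ and $K^y$ toward a freshly added $K''$ (so the orientation-consistency check is localized at $K''$, whose orientation is still free to choose), whereas your sequential ordering $K^x \to K'^x \to K'' \to K'^y \to K^y$ closes the loop onto the pre-oriented $K^y \subseteq X^0$ --- this works, but you should state explicitly that the orientation propagated around the wormhole returns matching the fixed orientation of $\mathrm{Enc}(\ket{y})$, which is the same fact that makes $\mathrm{Enc}(\ket{x})+\mathrm{Enc}(\ket{y})$ the surviving non-negative harmonic.
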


\begin{proof}
First, let 
$$\mathcal{G}_{\text{disj}}= \bigotimes_{i=1}^n \left(\lsinglegraph\rsinglegraph\right) $$
be the disjoint version of the qubit graph $\mathcal{G}_n$ i.e., removing edges between $4_i^0$--$4_i^1$ for any $i$. 
Then, 
we let 
$$X_{2n-1}^0:=\mathrm{Cl}_{2n-1}(\mathcal{G}_{\text{disj}})$$ 
that is the set of simplices in the qubit gadget complex. 
It can be seen that $\mathrm{Cl}_{2n-1}(\mathcal{G}_{\text{disj}})$ is down-degree 2. 

For each $i\in [t]$, we continue the procedure of filtration for the corresponding projector $h_i$ as follows:

\paragraph{(1) The case $h_i=\frac{1}{2}(\ket{x}-\ket{y})(\bra{x}-\bra{y})$}

Recall the subsets in eq.~\eqref{eq:filtration_thickenig}
$$L_m^k := \{1,2\}\times \cdots \times \{2k-3,2k-2\} \times \{(2k-1)(2k-1'),(2k)(2k')\} \times \{2k+1',2k+2'\} \times \cdots \times \{2m-1',2m'\}$$
for $L$ that glues two target generalized octahedra and ${L}_{m-1}^k$, as well as 
\begin{equation*}
    \hat{L}_{m-1}^k := \{1,2\}\times \cdots \times \{2k-3,2k-2\} \times \{2k-1',2k'\} \times \{2k+1',2k+2'\} \times \cdots \times \{2m-1',2m'\}
\end{equation*}

Also, recall the structure of our single gadget:
$$
K^x \rightarrow K'^x \leftarrow K'' \rightarrow K'^y \leftarrow K^y.
$$
We use the following notations for each $h_i$:
\begin{itemize}
    \item $L^{xx}$: the simplicial complex that glues $K^x$ and $K'^x$. 
    \item $L^{yy}$: the simplicial complex that glues $K^y$ and $K'^y$.
    \item $L^{''x}$: the simplicial complex that glues $K''$ and $K'^x$.
    \item $L^{''y}$: : the simplicial complex that glues $K''$ and $K'^y$.
\end{itemize}
Then we can analogously define 
$$
(L^{xx})_{2{m_i}-1}^k, (L^{yy})_{2{m_i}-1}^k, (L^{''x})_{2{m_i}-1}^k, (L^{''y})_{2{m_i}-1}^k
$$
for $k=0,1,2,...,2m$. 
We can take the join with the clique complex of the disconnected qubit graph corresponding to taking the product with identity $h_i\otimes I$. 
Let
$$\bar{\mathcal{G}}_{\text{disj}}^{h_i}= \bigotimes_{i\in [n]\backslash [h_i]}^n \left(\lsinglegraph\rsinglegraph\right) $$
where the product is taken for the indices of the qubits $h_i$ acts trivially. 
Then let
\begin{align*}
   &
   (\tilde L^{xx})_{2{m_i}-1}^k:=(L^{xx})_{2{m_i}-1}^k*\mathrm{Cl}_{2n-2m_i-1}\big(\bar{\mathcal{G}}_{\text{disj}}^{h_i}\big),\ \ \ \ (\tilde L^{yy})_{2{m_i}-1}^k:=(L^{yy})_{2{m_i}-1}^k*\mathrm{Cl}_{2n-2m_i-1}\big(\bar{\mathcal{G}}_{\text{disj}}^{h_i}\big), \\
&(\tilde L^{''x})_{2{m_i}-1}^k:=(L^{''x})_{2{m_i}-1}^k*\mathrm{Cl}_{2n-2m_i-1}\big(\bar{\mathcal{G}}_{\text{disj}}^{h_i}\big), \ \ \ \ 
(\tilde L^{''y})_{2{m_i}-1}^k:=(L^{''y})_{2{m_i}-1}^k*\mathrm{Cl}_{2n-2m_i-1}\big(\bar{\mathcal{G}}_{\text{disj}}^{h_i}\big).
\end{align*}
These simplices will be gradually added to the subset $X_d^k$ before this procedure, as   
\begin{enumerate}
    \item  $X_{2n-1}^{k+1}=X_{2n-1}^{k}\ \sqcup\ (\tilde L^{xx})_{2{m_i}-1}^1\ \sqcup\ (\tilde L^{yy})_{2{m_i}-1}^1$
    \item $\cdots$
    \item $X_{2n-1}^{k+2m_i}=X_{2n-1}^{k+2m_i-1}\ \sqcup \ (\tilde L^{xx})_{2{m_i}}^{2{m_i}}\ \sqcup\ (\tilde L^{yy})_{2{m_i}}^{2{m_i}}$
    \item $X_{2n-1}^{k+2m_i+1}=X_{2n-1}^{k+2m_i}\ \sqcup \ (\tilde L^{''x})_{2{m_i}-1}^{2{m_i}-1}\ \sqcup\ (\tilde L^{yy})_{2{m_i}-1}^{2{m_i}-1}$
     \item $X_{2n-1}^{k+2m_i+2}=X_{2n-1}^{k+2m_i+1}\ \sqcup \ (\tilde L^{''x})_{2{m_i}-1}^{2{m_i}-2}\ \sqcup\ (\tilde L^{yy})_{2{m_i}-1}^{2{m_i}-2}$
    \item $\cdots$
    \item $X_{2n-1}^{k+4m_i-1}=X_{2n-1}^{k+4m_i-2}\ \sqcup \ (\tilde L^{xx})_{2{m_i}-1}^{1}\ \sqcup\ (\tilde L^{yy})_{2{m_i}-1}^{1}$
    \item $X_{2n-1}^{k+4m_i}=X_{2n-1}^{k+4m_i-1}\ \sqcup \ K''_{2m_i-1}*\mathrm{Cl}_{2n-2m_i-1}\big(\bar{\mathcal{G}}_{\text{disj}}^{h_i}\big)$
\end{enumerate}

\paragraph{(2) The case $h_i=\ket{x}\bra{x}$}

In this case, the generalized octahedron is first copied to $K'^x$
$$K^x \rightarrow K'^x$$
with a gluing procedure, and then the copied cycle is filled with a central vertex. 
Starting from $X_d^k$, we can similarly construct a filtration 
$$X_{2n-1}^k\subseteq X_{2n-1}^{k+1} \subseteq \cdots \subseteq X_{2n-1}^{k+2m_i}. $$
Note that the simplices with the central vertices are also added to $X_d^{k+2m_i}$. 

With the above procedure, 
we can construct 
\begin{equation}
\label{eq:uniform_filtration_gadget}
   X_{2n-1}^0\subseteq X_{2n-1}^1\subseteq \cdots \subseteq X_{2n-1}^N 
\end{equation}
for some $N\in poly(n)$. 

As can be seen from Lemma~\ref{lemma:orientable_filtration_octa}, there is a uniform orientable filtration for each of the gluing procedures. 
The same can be said for the combined gadgets and the constructed filtration of eq.~\eqref{eq:uniform_filtration_gadget}. 
The weighting of vertices keeps the filtration uniform as well. 

\end{proof}

We remark that the constructed filtration is different from the filtration that we utilized for the analysis with spectral sequences.
Also, there are remaining $2n-1$ simplices that are not contained in  $X_d^N$. 
These are any $2n-1$ simplices that contain vertices $4_i^04_i^1$ or $4''_i$ for some $i\in [n]$. 
As we will see in the next subsection, such simplices \uline{do not appear} in the harmonics that correspond to the homologous cycle in $X_d^0$ in the YES instances.

\subsection{Harmonics in the YES instances}

In this subsection, we prove that the Harmonics in the YES instances are only supported on $X_{2n-1}^N$ of eq.~\eqref{eq:uniform_filtration_gadget}. 

\paragraph{Harmonics of a single gadget}

We first investigate harmonics of a single gadget. 
Figure~\ref{fig:harmonics_gadget} shows the harmonics of a single gadget for the 2-dimensional case. 
\begin{figure}
    \centering
    \includegraphics[width=0.5\linewidth]{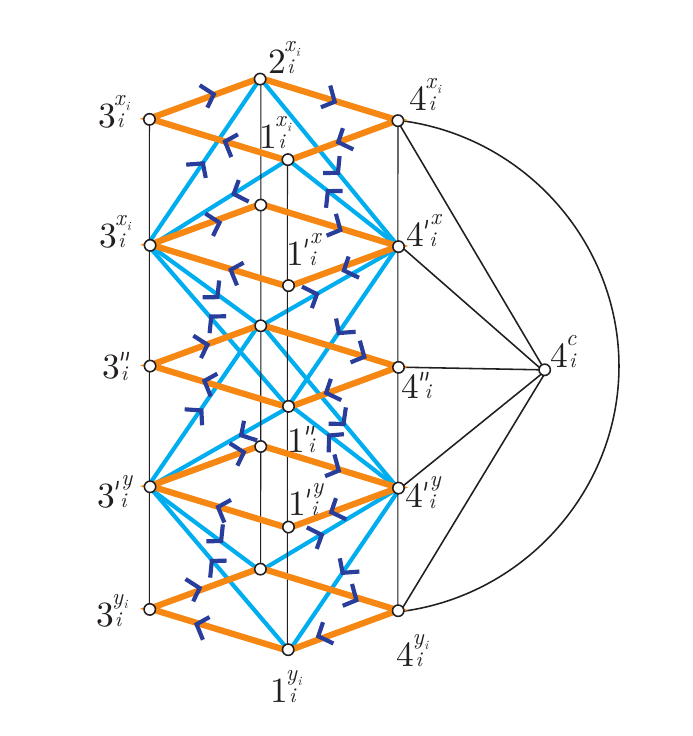}
    \caption{Harmonics of the gadget for the 2-dimensional case.}
    \label{fig:harmonics_gadget}
\end{figure}
The important point here is that the harmonics are not supported on the additional simplices with the vertex $4_i^c$. 
A similar property holds for gluing gadgets in general. 

Recall from Lemma~\ref{lemma:harmonics_octa}, 
the harmonics of $L_{2n-1}$ that glues $K$ and $K'$ is given by
\begin{equation}
  \sum_{\sigma_0 \in \hat{L}_{2n-1}^0} \ket{\sigma_0} + \sum_{\sigma_1 \in \hat{L}_{2n-1}^1} \ket{\sigma_1} + \cdots + \sum_{\sigma_n \in \hat{L}_{2n-1}^{2n}} \ket{\sigma_{2n}}.
\end{equation}
Notice that when the vertices are weighted, this is equivalent to 
\begin{equation}
  \sum_{\sigma_0 \in \hat{L}_{2n-1}^0} \frac{1}{w(\sigma_0)}\ket{[\sigma_0]} + \sum_{\sigma_1 \in \hat{L}_{2n-1}^1}\frac{1}{w(\sigma_1)} \ket{[\sigma_1]} + \cdots + \sum_{\sigma_n \in \hat{L}_{2n-1}^{2n}} \frac{1}{w(\sigma_{2n})}\ket{[\sigma_{2n}]}
\end{equation}
with a normalized basis. 
We can synthesize such harmonics for the construction
$$
K^x \rightarrow K'^x \leftarrow K'' \rightarrow K'^y \leftarrow K^y.
$$
to construct a harmonics $\ket{\phi^{h_i}}$ of the single gluing gadget $h_i$. 
Therefore, starting from 
$\mathrm{Enc}(\ket{x})+\mathrm{Enc}(\ket{y})$, we can ``harmonize'' this state as 
\begin{equation}
   \ket{\phi^{xy}}= \mathrm{Enc}(\ket{x})+ \ket{x\leftrightarrow y} +\mathrm{Enc}(\ket{y}), 
\end{equation}
where $\ket{x\leftrightarrow y}$ is the sum of cycles that appear as the propagation of $\mathrm{Enc}(\ket{x})$ and $\mathrm{Enc}(\ket{y})$. $\ket{\phi^{xy}}$ is only supported on the simplices for gluing. 

Now we show the following claim. 

\begin{proposition}
Let $H=\sum_i h_i$ be a local Hamiltonian composed of projectors from $S_{\text{stoq}}$. 
If $\ker H$ is non-empty (i.e., YES instance), then there is a homologous cycle $\ket{c}$ that is a uniform superposition of a subset of simplices in $\mathrm{Cl}_{2n-1}({\mathcal{G}}_n)$.  
\end{proposition}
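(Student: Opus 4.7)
The plan is to construct the desired cycle as the encoding of a particular frustration-free ground state of $H$ whose amplitudes, thanks to the special shape of projectors in $S_{\mathrm{stoq}}$, are forced to be $\{0,1\}$-valued. First I would unpack the structure of $\ker H$: since every $h_i$ is either $\ket{x}\bra{x}$ or $\tfrac{1}{2}(\ket{x}-\ket{y})(\bra{x}-\bra{y})$ (tensored with identities on the remaining qubits) and $H = \sum_i h_i$ is frustration-free in the YES case, a state $\ket{\psi}$ lies in $\ker H$ iff $h_i\ket{\psi}=0$ for every $i$; projectors of the first type force $\braket{x|\psi}=0$, while those of the second type force $\braket{x|\psi}=\braket{y|\psi}$.

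Next I would define an equivalence relation $\sim$ on $\{0,1\}^n$ as the transitive closure of the equalities $x\sim y$ coming from the second-type projectors (after properly extending each local projector by identity). Call an equivalence class $C\subseteq\{0,1\}^n$ \emph{good} if no basis state in $C$ is killed by a first-type projector. For each good class $C$, the uniform superposition
\[
\ket{\psi_C}\;:=\;\sum_{x\in C}\ket{x}
\]
satisfies $h_i\ket{\psi_C}=0$ for every $i$ and hence lies in $\ker H$; conversely, every element of $\ker H$ is a linear combination of such $\ket{\psi_C}$. Since $\ker H\neq 0$ by assumption, at least one good class $C$ exists; fix one.

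I would then apply the encoding map of Definition~\ref{def:encoding_map}. By construction, each $\mathrm{Enc}(\ket{x})$ is a uniform sum with coefficient $2^{-n}$ over $4^n$ oriented simplices $T_x\subseteq \mathrm{Cl}_{2n-1}(\mathcal{G}_n)$, all of which carry positive sign under the chosen orientation; moreover, the supports $\{T_x\}_{x\in\{0,1\}^n}$ are pairwise disjoint because $x_i=0$ and $x_i=1$ involve disjoint vertex labels ($1^0_i,\dots,4^0_i$ vs.\ $1^1_i,\dots,4^1_i$). Hence
\[
\ket{c}\;:=\;2^n\,\mathrm{Enc}(\ket{\psi_C})\;=\;\sum_{x\in C}\sum_{\sigma\in T_x}\ket{[\sigma]}
\]
is a uniform superposition over $|C|\cdot 4^n$ simplices of $\mathrm{Cl}_{2n-1}(\mathcal{G}_n)$. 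It is automatically a cycle because $\mathrm{Enc}(\ket{\psi_C})\in\ker\Delta_{2n-1}(\mathrm{Cl}(\mathcal{G}_n))\subseteq\ker\partial_{2n-1}$, and this remains a cycle after inclusion into $\mathrm{Cl}(\hat{\mathcal{G}}_n(H))$.

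Finally I would argue that $\ket{c}$ is homologous (i.e.\ not a boundary) in the full gadget complex. The cleanest route is via the spectral-sequence analysis of Section~\ref{sec:hardness_2}: combined with Lemma~\ref{lemma:perturbation}, the final page shows that the $(2n-1)$-dimensional harmonic subspace of the combined gadget Laplacian $\hat{\Delta}_{2n-1}$ is an $\mathcal{O}(\lambda)$-perturbation of $\mathrm{Enc}(\ker H)$, so the orthogonal projection of $\ket{c}\propto \mathrm{Enc}(\ket{\psi_C})$ onto this harmonic space is non-zero for sufficiently small $\lambda>0$, which is incompatible with $\ket{c}$ lying in $\mathrm{Im}\,\hat{\partial}_{2n}$. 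The main obstacle will be this last step: one has to control that the perturbation does not accidentally align $\mathrm{Enc}(\ket{\psi_C})$ with a boundary direction. The alternative, more combinatorial, route is to observe that every $2n$-simplex produced by either gadget construction has boundary supported on $\mathrm{Enc}(\ket{x})-\mathrm{Enc}(\ket{y})$-type differences (where $\ket{x}-\ket{y}$ is penalized by some $h_i$) or on copies used in the filling gadget for penalized basis states; by the defining property of good classes, none of these boundary directions overlaps with $\sum_{x\in C}\mathrm{Enc}(\ket{x})$, so $\ket{c}$ cannot be written as such a boundary.
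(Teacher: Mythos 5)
Your construction of the cycle itself coincides with the paper's: the paper also takes a subset state $\ket{\psi}\in\ker H$ (it simply cites \cite{bravyi2010complexity,aharonov2019stoquastic} for its existence, whereas you re-derive it via the equivalence-class argument, which is correct and more self-contained), encodes it, and uses the disjointness of the supports $\{T_x\}$ to obtain a uniform superposition of simplices of $\mathrm{Cl}_{2n-1}(\mathcal{G}_n)$ that is a cycle. Where you diverge is the argument that $\ket{c}$ is not a boundary. The paper proceeds constructively: for each gluing gadget whose pair $\{x,y\}$ meets the support of $\ket{\psi}$, it extends $\mathrm{Enc}(\ket{\psi})$ by the explicit propagation terms $\ket{x\leftrightarrow y}$ (built from Lemma~\ref{lemma:harmonics_octa}) to produce a nonzero harmonic representative in the same homology class, and concludes the class is nontrivial. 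You instead try to infer non-boundaryness from a spectral statement about the combined complex.

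That last step is where the gap lies. Your first route assumes that $\ker(\hat{\Delta}_{2n-1})$ for the \emph{combined} gadget complex is an $\mathcal{O}(\lambda)$-perturbation of $\mathrm{Enc}(\ker H)$. Lemma~\ref{lemma:singlegadget_spectrum} establishes the analogous statement only for a single gadget, and the combined analysis in Section~\ref{sec:hardness_2} only yields an eigenvalue lower bound in NO instances; it does not identify the kernel in YES instances. You would therefore have to prove this perturbation statement yourself, at which point you are doing work comparable to the paper's explicit harmonic construction. Your second, combinatorial route rests on a false intermediate claim: an individual $2n$-simplex of a gadget does \emph{not} have boundary supported on $\mathrm{Enc}(\ket{x})-\mathrm{Enc}(\ket{y})$-type differences --- its boundary is overwhelmingly supported on low-weight gadget simplices. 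The relevant object is $\mathrm{Im}\,\hat{\partial}_{2n}\cap C_{2n-1}({\mathcal{G}}_n)$, i.e., which weight-one cycles arise as boundaries of $2n$-chains whose entire boundary lies in the weight-one subspace, and determining that subspace is precisely what the spectral-sequence computation does, so this route is circular as sketched. The paper's explicit harmonic construction avoids both problems and additionally exhibits the support property of the harmonics (contained in $X_{2n-1}^N$) that is needed for the YES-instance promise elsewhere in the reduction.
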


\begin{proof}
Let $\ket{\psi}\in \ker H$ be a subset state that indeed exists in YES instances~\cite{bravyi2010complexity, aharonov2019stoquastic}. 
For any local projector $h_i=\ket{x}\bra{x}$ onto computational basis states, it must hold that 
$$
h_i \otimes I \ket{\psi}=0.
$$
Let us rewrite 
    $$
    h_i \otimes I= \sum_{z\in \{0,1\}^{n-m_i}} \ket{x\cup z}\bra{x\cup z}.
    $$
Then, because $\ket{\psi}$ is a non-negative state, it must hold 
$$\bra{x\cup z}\psi\rangle=0$$
for any $z$. 
This means that the bit strings in the support of $\ket{\psi}$ must not contain $x$. 
Therefore, 
for the encoding of $\ket{\psi}$ into the chain space of the qubit gadget complex $\mathrm{Enc}(\ket{\psi})$, 
the simplices in the support of $\mathrm{Enc}(\ket{\psi})$ do not overlap with the gadget complex for $h_i$. In other words, the coboundary $\delta_{2n-1}\mathrm{Enc}(\ket{\psi})$ is not supported on the gadget simplices added for $h_i$ s.t. $h_i$ is a projector onto a computational basis state. 

Next, consider projectors onto $\ket{x}-\ket{y}$ for some $x,y$. 
There may be projectors $\frac{1}{2}(\ket{x}-\ket{y})(\bra{x}-\bra{y})$ where the support of $\ket{\psi}$ do not contain $x$ and $y$. 
It is clear that $\delta_{2n-1}\mathrm{Enc}(\ket{\psi})$ is not supported on the gadget simplices for such projectors. 
Therefore, in the remainder of the proof, we only consider the case in which the support of $\ket{\psi}$ contains $x$ and $y$. 
Let $\tilde H^\psi=\sum_i h_i$ be the sum of such projectors. 
In this case, for a single term $h_i$, $\ket{\psi}$ must be written in the following form 
$$
\ket{\psi}= (\ket{x}+\ket{y}) \otimes \ket{\bar\psi}+ \sum_{z \neq x,y} \ket{z} \ket{\bar\psi'_z}
$$
for some (unnormalized) states $\ket{\bar\psi},\{\ket{\bar\psi'_z}\}_{z\neq x,y}$.
Now consider the encoding of $\ket{\psi}$ into the chain space
$$
\mathrm{Enc}(\ket{\psi})= \left(\mathrm{Enc}(\ket{x})+\mathrm{Enc}(\ket{y})\right) \otimes \mathrm{Enc}(\ket{\bar\psi})+ \mathrm{Enc}(\sum_{z \neq x,y} \ket{z} \ket{\bar\psi'_z}). 
$$
As only the first term has overlap with the gluing gadget for $h_i$, we can modify this state into 
$$
 \left(\mathrm{Enc}(\ket{x})+\ket{x\leftrightarrow y}+\mathrm{Enc}(\ket{y})\right) \otimes \mathrm{Enc}(\ket{\bar\psi})+ \mathrm{Enc}(\sum_{z \neq x,y} \ket{z} \ket{\bar\psi'_z}). 
$$
We can continue this procedure for every term in $\tilde H^\psi$ by adding the corresponding $\ket{x\leftrightarrow y}$ to construct a harmonic state. 
The constructed harmonics and $\mathrm{Enc}(\ket{\psi})$ represent the same ``hole'' because they can be translated with each other with only the procedure of adding boundaries and rescaling. 
\end{proof}



\section*{Acknowledgment}

The authors thank Jiaqing Jiang for explaining her work \cite{jiang2023local}.
RH thanks Atsuya Hasegawa for early discussions on the topic. 
 This work was supported by the Dutch National Growth Fund (NGF), as part of
the Quantum Delta NL programme. This publication is part of the project Divide \& Quantum (with project
number 1389.20.241) of the research programme NWA-ORC which is (partly) financed by the Dutch Research Council (NWO). This work was also partially supported by the Dutch Research Council (NWO/OCW), as part of the Quantum Software Consortium programme (project number 024.003.03), and co-funded by the European Union (ERC CoG, BeMAIQuantum, 101124342).
RH was supported by JSPS PRESTO Grant Number JPMJPR23F9 and MEXT KAKENHI Grant Number 21H05183, Japan.

\bibliographystyle{alpha}
\bibliography{main} 

\appendix

\section{Examples}

\subsection{Gluing 1-dimensional holes}

Let us first investigate the 1-dimensional case. 
We use $i$ for $(i,0)$ and $i'=(i,1)$ for simplicity where $i$ is the index for vertices. 
The original 1-dimensional simplices in the copies $K,K'$ are
$$[13],[32],[24],[41]$$
$$[1'3'],[3'2'],[2'4'],[4'1'].$$
Then, the two cycles are glued as Figure~\ref{fig:eg_1dim}.
\begin{figure}[h]
    \centering
    \includegraphics[width=0.3\linewidth]{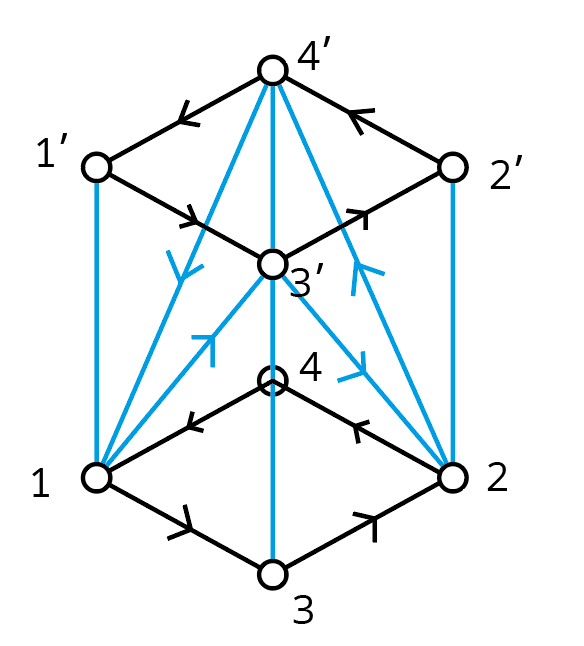}
    \caption{A graph for gluing two 1-dimensional holes. Arrows indicate the orientations in which a harmonic appears as a non-negative state.}
    \label{fig:eg_1dim}
\end{figure}

By the procedure of gluing the two cycles 
$$
\ket{c_1}=[13]+[32]+[24]+[41]
$$
and 
$$
\ket{c_1'}=
[1'3']+[3'2']+[2'4']+[4'1'].
$$
Let
$$\hat{L}_1^0=\{[13],[32],[24],[41]\}
$$
$$
\hat{L}_1^1= \{ [13'],[3'2],[2,4'],[4'1]
\}
$$
$$
\hat{L}_1^2=\{[1'2'],[2'3'],[3'4'],[4'1']\}.$$
Also let
$$
L_2^1=\{[133'],[323'],[244'],[414']\}
$$
and
$$
L_2^1=\{[13'1'],[3'22'],[24'2'],[4'11']\}.
$$
Then, the claims in Section~\ref{sec:General_argument} can be verified.

\subsection{Gluing 2-dimensional cycles}

We illustrate how to glue two octahedra. 
This example does not appear in the actual construction because we only treat $2m-1$-dimensional generalized octahedra. However, we treat the 2-dimensional case because it is higher-dimensional than the 1-dimensional case, but relatively easy to illustrate and understand intuitively. 

The target two octahedra are illustrated in Figure~\ref{fig:eg_2dim} (a). Note that we are working with the indexing of vertices like  Figure~\ref{fig:eg_2dim} (a) and {\it not} with the indexing like Figure~\ref{fig:eg_2dim} (b). In gluing two octahedra, there are 8 triangular prisms. One such triangular prism is illustrated in Figure~\ref{fig:eg_2dim} (c), and it will be divided into three tetrahedra. 
In the support of the harmonics that survive the gluing procedure, the lateral faces such as $155'$ and $11'5$ do not appear because they cancel out. 

\begin{figure}[h]
    \centering
    \includegraphics[width=0.65\linewidth]{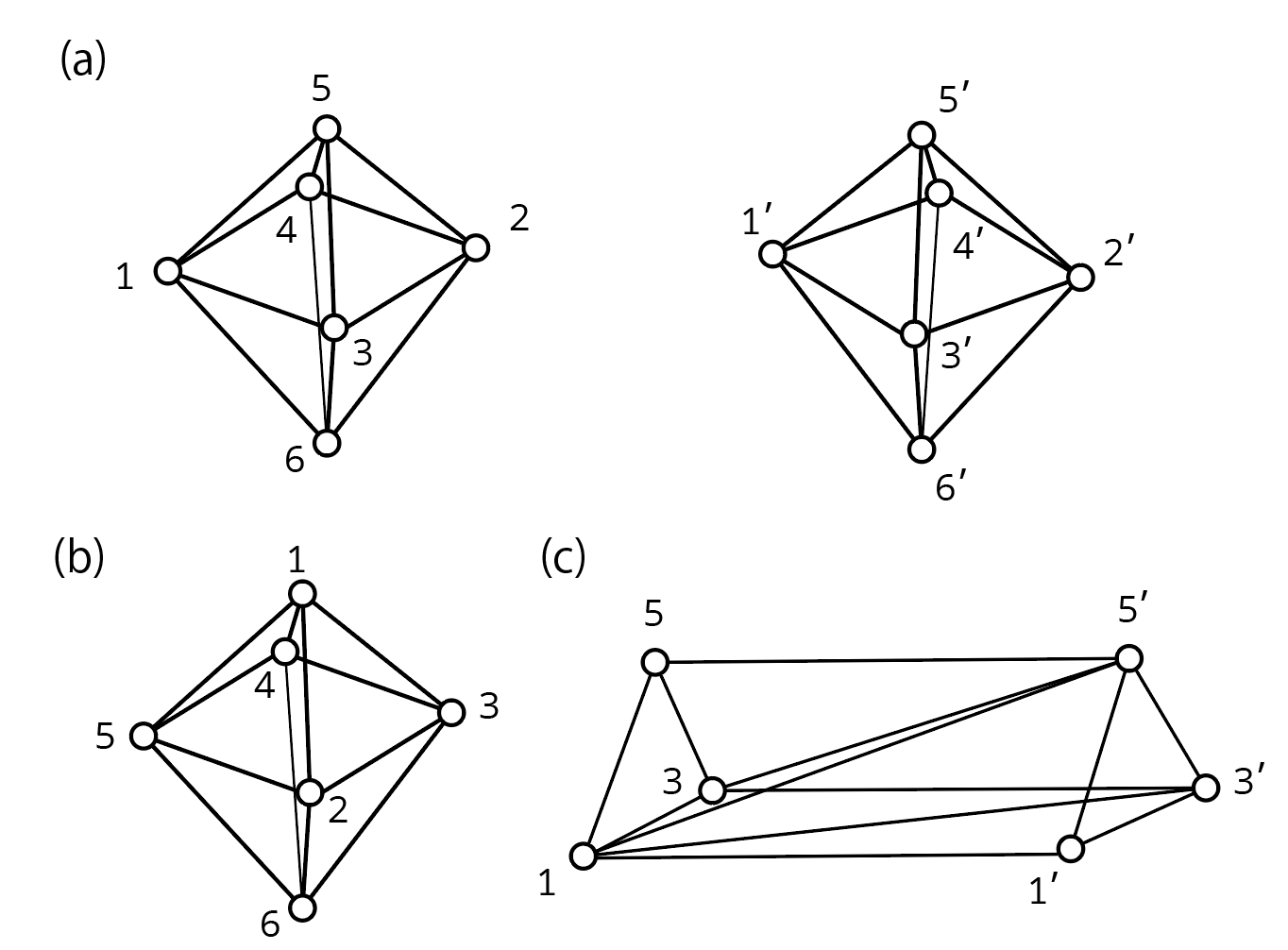}
    \caption{Gluing two octahedra.}
    \label{fig:eg_2dim}
\end{figure}

\end{document}